\crefname{observation}{Observation}{Observations}
\Crefname{observation}{Observation}{Observations}
\crefname{claim}{Claim}{Claims}
\Crefname{claim}{Claim}{Claims}
\crefname{openproblem}{Open Problem}{Open Problems}
\Crefname{openproblem}{Open Problem}{Open Problems}
\title{Sumsets, 3SUM, Subset Sum: Now for Real!}
\author{Nick Fischer\thanks{INSAIT, Sofia University ``St. Kliment Ohridski''. Partially funded by the Ministry of Education and Science of Bulgaria's support for INSAIT, Sofia University ``St. Kliment Ohridski'' as part of the Bulgarian National Roadmap for Research Infrastructure. Parts of this work were done while the author was at Weizmann Institute of Science.}}
\date{}
\begin{document}
\maketitle

\begin{abstract}
\noindent
We study a broad class of algorithmic problems with an ``additive flavor'' such as computing sumsets, 3SUM, Subset Sum and geometric pattern matching. Our starting point is that these problems can often be solved efficiently for \emph{integers}, owed to the rich available tool set including bit-tricks, linear hashing, and the Fast Fourier Transform. However, for \emph{real numbers} these tools are not available, leading to significant gaps in the best-known running times for integer inputs versus for real inputs. In this work our goal is to close this gap.

As our key contribution we design a new technique for computing \emph{real} sumsets. It is based on a surprising blend of algebraic ideas (like \emph{Prony's method} and \emph{coprime factorizations}) with combinatorial tricks. We then apply our new algorithm to the aforementioned problems and successfully obtain, in all cases, equally fast algorithms for real inputs. Specifically, we replicate the running times of the following landmark results by randomized algorithms in the standard real RAM model:
\begin{itemize}
    \item\emph{Sumsets:} Given two sets $A, B$, their sumset $A + B = \set{a + b : a \in A, b \in B}$ can be computed in time~\smash{$\widetilde O(|A + B|)$} [Cole, Hariharan; STOC'02].
    \item\emph{Geometric pattern matching:}  Given two sets $A, B$, we can test whether there is some shift such that $A + s \subseteq B$ in time $\widetilde O(|A| + |B|)$ [Cardoze, Schulman; FOCS'98].
    \item\emph{3SUM with preprocessing:} We can preprocess three size-$n$ sets $A, B, C$ in time~\smash{$\widetilde O(n^2)$} such that upon query of sets $A' \subseteq A, B' \subseteq B, C' \subseteq C$, the 3SUM instance $(A', B', C')$ can be decided in time~\smash{$\widetilde O(n^{13/7})$} [Chan, Lewenstein; STOC'15].
    \item\emph{Output-sensitive Subset Sum:} Given a size-$n$ (multi-)set $X$ and a target $t$, we can compute the set of subset sums $\set{\Sigma(X') : X' \subseteq X, \Sigma(X') \leq t}$ in output-sensitive time~\smash{$\widetilde O(n + \OUT^{4/3})$} [Bringmann, Nakos; STOC'20].
\end{itemize}
\end{abstract}

\thispagestyle{empty}
\newpage
\setcounter{page}{1}

\section{Introduction} \label{sec:intro}
We study a broad set of basic computational problems with an ``additive component'' such as 3SUM, Subset Sum, computing sumsets and geometric pattern matching. These problems are of great importance to several communities, notably within fine-grained algorithms and complexity, computational geometry, and, to a lesser degree, also in cryptography and computer algebra, and have received substantial attention in the literature. Especially in recent years this effort resulted in many beautiful algorithms, often powered by tools from additive combinatorics.

In this work we revisit these additive problems for \emph{real} input numbers. Our general starting point is the following wide-reaching question:
\begin{center}
    \smallskip
    \emph{Can we adapt the landmark results for ``additive problems'' to real inputs?}
    \smallskip
\end{center}
This question is not only naturally fundamental, but is also driven by countless theoretical and practical applications, particularly in computational geometry, some of which we will encounter soon. It has been studied for many important algorithmic problems (see e.g.~the exposition on 3SUM algorithms in the following paragraph), but was also asked in the context of fine-grained lower bounds; see~\cite{KopelowitzP18,ChanH20} and notably~\cite{ChanWX22}.

On the positive side, the majority of integer algorithms can easily be adapted to the reals (often because the algorithm accesses the input numbers in a limited manner). However, a rapidly growing list~\cite{CardozeS98,BaranDP08,ArnoldR15,ChanL15,Roche18,Nakos20,GiorgiGC20,BringmannN20,BringmannFN21,BringmannFN22,JinX24,Fischer24} of algorithms rely on the input consisting of integers by exploiting certain integer-specific tools (such as bit-tricks), rendering some modern approaches useless beyond repair in the case of reals. In this paper our goal is to address this shortcoming, and to nevertheless find appropriate replacement algorithms matching the integer complexity.

\paragraph{Case Study: Log-Shaves for 3SUM}
As an inspiring example of this question in the literature, consider the fundamental 3SUM problem: Given a set $A$ of~$n$ numbers, the goal is to test whether there exist $a, b, c \in A$ with $a + b = c$. It is a prominent conjecture that 3SUM cannot be solved in truly subquadratic time~\cite{GajentaanO95,VassilevskaWilliams18}, and even small improvements over the naive $\Order(n^2)$-time algorithm are urgently sought after. In~\cite{BaranDP08}, Baran, Demaine and Pătraşcu developed the first mildly subquadratic algorithm for integer 3SUM, shaving nearly two log-factors from the naive quadratic time. However, their algorithm crucially relies on \emph{linear hashing}---a technique which only works for integers and is not applicable to real 3SUM. This was considered a major insufficiency as the original intention behind 3SUM (and its applications to the hardness of geometric problems) had the real-valued variant in mind~\cite{GajentaanO95}. In a breakthrough result, Gr{\o}nlund and Pettie~\cite{GronlundP18} later scored the first mildly subquadratic algorithm for real 3SUM. They inspired an entire line of follow-up research~\cite{GronlundP18,Freund17,GoldS17,Chan20} (see also~\cite{KaneLM19}). Only recently, and through the development of powerful geometric tools, Chan~\cite{Chan20} successfully replicated (roughly) the same state of affairs for real 3SUM as for integer 3SUM.

\paragraph{Integer-Specific Techniques}
For many other additive problems, however, the gap between the fastest algorithms for integers versus for reals is still not closed---and for all problems considered in this paper the gap is more dramatic than ``just'' log-factors. The reason, in all these cases, is the routine use of integer-specific techniques, at least indirectly, and often in combination with each other. Specifically, we can almost exclusively trace back the issues to the following palette of problematic techniques:

\begin{enumerate}
    \setlength\parskip{0pt}
    \item\emph{Bit-tricks, rounding, scaling:}
    The first integer-specific tool that comes to mind is \emph{bit-tricks}. A typical use case is that the computation of some $L$-bit integer (where usually $L = \Order(\log n)$) can be reduced to the $L$ decision problems of testing the respective bits. This can often be achieved by appropriately \emph{rounding} the involved numbers, and is also occasionally called \emph{scaling.}
    
    For obvious reasons, this idea cannot work for real numbers that generally require infinitely many bits. In fact, removing this scaling trick is \emph{the} obstacle in achieving strongly polynomial-time algorithms for a plethora of algorithms from combinatorial optimization and is throughout considered one of the most important questions in the area; see~\cite{Fineman24} for a recent breakthrough on single-source shortest paths without scaling.
    \item\emph{Linear hashing:}
    We call a hash function \emph{linear} if it satisfies that $h(x) + h(y) = h(x + y)$.\footnote{This at least is the moral requirement; for technical reasons many use cases of linear hashing resort to a slightly weaker condition.} Linear hashing often plays an important role to design \emph{universe reductions} for additive problems. Specifically, consider a size-$n$ set $A \subseteq \set{0, \dots, u}$ over a universe of size $u = n^{100}$. It is often inconvenient that $u$ is large (e.g.~an efficient algorithm cannot spend time proportional to~$u$). Letting $h : [u] \to [n]$ be a random linear hash function, we can replace $A$ by the dense set~\makebox{$\set{h(a) : a \in A} \subseteq [n]$} while preserving its additive properties to a certain degree.

    For reals, on the other hand, we cannot expect a similar universe reduction. Hashing the continuous interval from, say, $[0, n^{100}]$ to $[0, n]$, is typically meaningless as many problems are scale-invariant. And hashing to $[n]$ instead either breaks linearity or requires an unconventional model of computation (to support, say, flooring).
    \item\emph{Fast Fourier Transform:}
    The Fast Fourier Transform (FFT) is one of the most influential computational primitives with countless applications across various disciplines. A typical algorithmic consequence of the FFT is that the \emph{sumset} $A + B = \set{a + b : a \in A, b \in B}$ of two given sets~\makebox{$A, B \subseteq [u]$} can be computed in time $\Order(u \log u)$. As we will describe shortly, computing sumsets lies at the heart of involved algorithms for additive-type problems. Note that this paradigm is particularly efficient in combination with linear hashing (i.e., after reducing the universe size $u$).

    However, here we crucially require that the sets $A, B$ are from the discrete universe $[u]$, and there is no hope to achieve a similar algorithm for set of real numbers $A, B \in [0, u]$. (The running time $\Order(u \log u)$ does not even make sense here as $u$ can be arbitrarily small or large.)
\end{enumerate}

\subsection{Our Results}
As all these techniques fail for real numbers, our challenge in the following is essentially to develop appropriate substitutes. Our core contribution is that we replace the third aforementioned tool: the computation of \emph{sumsets}.

For the integer case, the computation of sumsets (or even more generally, sparse \emph{convolutions}) has a rich history. There are two baseline algorithms: On the one hand, the sumset $A + B$ can naively be computed in time~\makebox{$\Order(|A| \, |B|)$}. On the other hand, if $A, B \subseteq \set{0, \dots, u}$, then the FFT computes the sumset in time~\makebox{$\Order(u \log u)$}. In 1995, Muthukrishnan~\cite{Muthukrishnan95} asked whether these algorithms can be improved, possibly to near-linear time in the output-size $|A + B|$. Cole and Hariharan~\cite{ColeH02} were the first to achieve a Las Vegas algorithm with this running time, leading to a long sequence of follow-up work~\cite{Roche08,MonaganP09,HoevenL12,ArnoldR15,ChanL15,Roche18,Nakos20,GiorgiGC20,BringmannFN21,BringmannFN22,JinX24}. These newer results include a generalization to sparse convolutions of possibly negative vectors~\cite{Nakos20}, optimizations to the conditionally best-possible running time~\cite{BringmannFN21,JinX24} and derandomizations~\cite{ChanL15,BringmannFN22}. 

For real numbers we only have one baseline solution at our disposal, namely the naive $\Order(|A| \, |B|)$-time algorithm. Moreover, all the sophisticated algorithms for integer sumsets unfortunately have in common that they heavily rely on the three problematic tools (scaling, linear hashing and the FFT), and can therefore not be expected to work for real numbers. As our main contribution we develop the first nontrivial algorithm for real sumsets, achieving the near-optimal output-sensitive running time:

\begin{restatable}[Real Sumset]{theorem}{thmsumset} \label{thm:sumset}
There is a Las Vegas algorithm that, given $A, B \subseteq \Real$, computes the sumset $A + B$ in time $\widetilde\Order(|A + B|)$.
\end{restatable}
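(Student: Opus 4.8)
The plan is to reduce the computation of the real sumset $A+B$ to a sequence of integer sumset computations, but without rounding (which would destroy additive structure) — instead using an algebraic encoding. First I would guess the output size $m = |A+B|$ by a standard doubling trick, so that the task becomes: given a guess $m$, either output a set of size $\le O(m\,\mathrm{polylog})$ that provably equals $A+B$, or detect that $m$ was too small. The key idea is to partition the real line into $O(m)$ buckets, each of which, after translation, contains only $O(\mathrm{polylog})$ elements of the sumset in expectation; inside a bucket, the "few-sums" structure is what makes an algebraic recovery method like \emph{Prony's method} applicable. Concretely, to learn which sums $a+b$ land in a given bucket $I$, one forms, for a suitable set of sample points (frequencies) $\omega_1,\dots,\omega_k$, the evaluations $\big(\sum_{a\in A} x^a\big)\big(\sum_{b\in B} x^b\big)$ at $x=\omega_j$ — each such product is computable in near-linear time — and if the bucket genuinely contains at most $k$ of the exponents $a+b$, Prony's method recovers those real exponents exactly from $2k$ evaluations by solving for the roots of an associated polynomial.

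The main obstacle — and where I expect the real work to be — is handling the fact that a single sum $a+b$ can be produced by many pairs $(a,b)$, so the "coefficients" in the Prony recovery are large integer multiplicities rather than $\pm 1$; worse, over the reals we cannot hash to shrink the universe, so collisions of several distinct sums into one bucket must be controlled combinatorially rather than by a random linear map. I would address this with a \emph{coprime factorization} style decomposition (as the abstract hints): write the sought exponents modulo several pairwise-coprime "moduli" realized not by arithmetic mod an integer but by projecting the reals onto a random coarse grid at several incommensurate scales, so that each residue class is sparse with good probability, and then reconstruct the true exponents by CRT-like combination. The delicate points are (i) ensuring the Prony/Vandermonde systems are numerically well-posed on the real RAM (choosing the $\omega_j$ on the unit circle at well-separated angles so the confluent Vandermonde matrix is invertible and the root-finding for a degree-$k$ polynomial with $k=O(\mathrm{polylog})$ can be done to sufficient precision in $\widetilde O(1)$ time), and (ii) a verification step: since the algorithm is Las Vegas, after producing a candidate set $S$ I must certify $S = A+B$. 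Certification of $S \subseteq A+B$ is done by exhibiting a witnessing pair for each $s\in S$ (recoverable from the same Prony data or by a separate sparse-recovery pass), and $A+B\subseteq S$ is certified by checking that the sparse polynomial supported on $S$ agrees with the product $(\sum x^a)(\sum x^b)$ at enough random points; a disagreement signals that the size guess $m$ was too small, in which case we double $m$ and restart.

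Putting it together: the outer loop over geometrically increasing guesses for $m$ contributes only an $O(\log)$ factor; for each guess we do $O(m\,\mathrm{polylog})$ buckets, each handled in $\widetilde O(|A|+|B|) \le \widetilde O(m)$ time for the polynomial evaluations plus $\widetilde O(1)$ for the Prony recovery, and the combination across the $O(1)$ coprime scales adds only constant overhead — so the whole algorithm runs in $\widetilde O(m) = \widetilde O(|A+B|)$ time. The one genuinely subtle quantitative claim, which I would isolate as a lemma, is that with the multi-scale coarse-grid projection, every true sum is, with high probability, "isolated" (sparse in its bucket) in at least one of the scales, so that the union of the recovered sets over all scales is exactly $A+B$; this is the real-valued analogue of the isolation-by-linear-hashing lemma used in the integer algorithms, and proving it without arithmetic mod $p$ is the crux.
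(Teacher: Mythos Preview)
Your proposal has several genuine gaps that would prevent it from going through.

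\textbf{Exponentiation in the real RAM.} Your evaluation step computes $\sum_{a\in A} x^a$ at sample points $x=\omega_j$, with \emph{real} exponents $a$. The real RAM model (as used in this paper) does not support the operation $x\mapsto x^a$ for real $a$; only integer powers are available via repeated squaring. The paper explicitly identifies this as one of the two central obstacles and resolves it by replacing polynomial evaluation with the ``sum operator'' $\Sigma^i(f)=\sum_{x}x^i f(x)$, so that only integer powers of the \emph{support elements} are ever taken. Your scheme never addresses this, and the later talk of ``numerical well-posedness'' and ``root-finding to sufficient precision'' suggests a misunderstanding of the model: the real RAM is exact and infinite-precision, but it is also weaker than you assume in that it lacks exponentials, logarithms, and floors.

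\textbf{Running time arithmetic.} You write that there are $O(m\,\mathrm{polylog})$ buckets, each handled in $\widetilde O(|A|+|B|)\le\widetilde O(m)$ time for the evaluations. That multiplies to $\widetilde O(m^2)$, not $\widetilde O(m)$. There is no amortization argument in your sketch that would rescue this; in fact, the per-bucket evaluation of a global product $(\sum x^a)(\sum x^b)$ cannot be localized to a bucket without already knowing the sumset.

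\textbf{What ``coprime factorization'' means here.} You interpret the phrase as a CRT-style projection of the reals onto coarse grids at incommensurate scales. That is not what the paper does, and it is unclear such a projection can be made to work without rounding or floors (both unavailable). In the paper, ``coprime factorization'' refers to Bernstein's algorithm for factoring a \emph{collection of polynomials} into pairwise-coprime factors over an arbitrary field. The paper first uses Prony's method (in the sum-operator form) to compute, in one shot, the degree-$t$ polynomial $\Lambda_{A+B}(X)=\prod_{c\in A+B}(X-c)$ in time $\widetilde O(t)$. The remaining problem is to factor $\Lambda_{A+B}$ into its linear factors over $\mathbb{R}$, for which no near-linear-time black box exists. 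The trick is combinatorial: subsample $A,B$ in $m=O(\log^2 n)$ ways so that every pair $c_1,c_2\in A+B$ is \emph{separated} by some subsample (one of $c_1,c_2$ lies in $A_i+B_i$ and the other does not), compute the corresponding polynomials $\Lambda_{A_i+B_i}$, and feed the whole family to Bernstein's coprime-factorization algorithm. Separation forces the coprime basis to consist exactly of the linear factors $X-c$. There is no bucketing, no per-bucket Prony, and no multi-scale grid.

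In short, the two load-bearing ideas you are missing are (i) replacing evaluation at $\omega^i$ by the sum operator $\Sigma^i$ so that Prony's method is implementable on the real RAM, and (ii) using random restrictions of $A,B$ together with polynomial coprime factorization to extract the linear factors of $\Lambda_{A+B}$ without real root-finding.
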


Perhaps surprisingly, our technique relies on an \emph{algebraic} method. Specifically, we make use of \emph{Prony's method}, an old polynomial interpolation algorithm, in combination with \emph{coprime factorizations} and some combinatorial ideas. In \cref{sec:intro:sec:overview} we provide an overview of our ideas, and in \cref{sec:sumset} we fill in the missing details. This new technique is quite flexible, and we are confident that this combination might have even more applications beyond our scope.

It turns out that the nontrivial computation of sumsets is exactly the primitive that is missing in many algorithms for additive-type problems. Most applications work in a ``plug-and-play'' fashion, while for others (such as for geometric pattern matching) we design new reductions to the sparse sumset problem.

\paragraph{Application 1: Geometric Pattern Matching}
As the first application, consider the geometric pattern matching problem of testing whether, given two point sets $A, B \subseteq \Real^d$, there is a translation $s \in \Real^d$ such that $A + s = \set{a + s : a \in A} \subseteq B$. Thinking of $A$ as a constellation of stars which is to be detected in the night sky $B$, this problem is also called the \emph{Constellation} problem~\cite{CardozeS98,Fischer24}. This is not only a natural problem, but has wide-reaching theoretical and practical applications, for instance in computer vision~\cite{MountNM98,Rucklidge93} and computational chemistry~\cite{FinnKLMSVY97,AkutsuTT98}.

The Constellation problem can naively be solved in time $\Order(|A| \, |B|)$.\footnote{Let $a_0 \in A$ be arbitrary. Then any feasible shift $s$ must stem from the set $B - a_0$. For each of these $|B|$ candidates~$s$ we can test in time $\Order(|A|)$ whether the shift satisfies $A + s \subseteq B$.} In an influential paper, Cardoze and Schulman gave an improved algorithm in time $\widetilde\Order(|A| + |B|)$ for the \emph{integer} Constellation problem, and derived many applications for related (often approximate) variants of point pattern matching. This algorithm was later turned into a Las Vegas algorithm~\cite{ColeH02} and only recently derandomized~\cite{ChanL15,Fischer24}.

All this time, the real Constellation problem remained open. The fastest known algorithm runs in quadratic time, and whether a faster algorithm exists was explicitly asked in~\cite{BarequetH01}.\footnote{In~\cite{BarequetH01}, Barequet and Har-Peled further attribute the question to Kosaraju. In their context the Constellation problem is a special case of the more general 3SUM-hard problem of testing whether a set of intervals can be translated to be covered by another set of intervals.} Our contribution is an affirmative answer to this question, and we provide a near-linear-time (and thus near-optimal) algorithm:

\begin{restatable}[Real Constellation]{theorem}{thmconstellation} \label{thm:constellation}
Let $A, B \subseteq \Real$. There is a Las Vegas algorithm computing $S = \set{ s \in \Real : A + s \subseteq B }$ in time~\smash{$\widetilde\Order(|A| + |B|)$}.
\end{restatable}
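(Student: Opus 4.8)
The plan is to reduce the real Constellation problem to a small number of real sumset computations, and then invoke \cref{thm:sumset}. First I would recall the classical ``fingerprinting'' idea behind the integer algorithm of Cardoze and Schulman: a shift $s$ satisfies $A + s \subseteq B$ if and only if $(A + s) \cap B = A + s$, i.e.\ if and only if no element of $A + s$ falls outside $B$. The candidate shifts all lie in $B - A = \set{b - a : a \in A, b \in B}$, but this set has size up to $|A|\,|B|$, so we cannot afford to enumerate it directly. The trick is to \emph{count}: for a candidate shift $s$, the multiplicity of $s$ in the multiset $B - A$ equals $\abs{(A + s) \cap B}$, so $s$ is feasible exactly when this multiplicity equals $|A|$. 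Thus the task becomes: compute, for the (at most $|A|+|B|-1$ distinct values, or rather up to $|A||B|$) multiset $B-A$, the set of values attaining multiplicity exactly $|A|$, together with a verification step.

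The key realization is that we do not need the full multiset $B - A$; we only need its \emph{support} (the set of distinct values), plus a way to detect which supported values have full multiplicity $|A|$. The support of $B - A$ is exactly the sumset $B + (-A)$, which by \cref{thm:sumset} we can compute in time $\widetilde\Order(|B - A|)$ --- but this is still potentially $\widetilde\Order(|A|\,|B|)$, so a naive application is too slow. To fix this I would follow the Cardoze--Schulman strategy of working with a random subset or a random coloring of $A$: pick a uniformly random small subset $A' \subseteq A$ (say of size $\Theta(\log(|A|+|B|))$) and compute the ``partial'' candidate set $\bigcap_{a \in A'} (B - a)$. One shows that with high probability this intersection has size $\widetilde\Order(|B|)$ (because a fixed infeasible shift survives all $|A'|$ constraints only with polynomially small probability, while feasible shifts always survive and there are at most $|B|$ of those), and the intersection of a logarithmic number of sorted sets of total size $\Order(|A|\,|A'| + |B||A'|)$ can be computed in near-linear time. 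Concretely, $B - a$ for a single $a$ has size $|B|$, so listing all of them and merging costs $\widetilde\Order((|A|+|B|)\log(|A|+|B|))$; alternatively one replaces the intersection by a single sumset computation on a randomly sparsified copy of $A$. Either way we land at a candidate set $S'$ of size $\widetilde\Order(|B|)$ that contains every feasible shift.

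Given the reduced candidate set $S'$, the final step is verification: for each $s \in S'$ test whether $A + s \subseteq B$. Doing this naively costs $\Order(|A|)$ per candidate, i.e.\ $\widetilde\Order(|A|\,|B|)$ total, which again is too slow, so instead I would verify all candidates \emph{simultaneously} using one more sumset computation. The clean way is: form the sumset $T = A + S'$ (of size $\widetilde\Order(|A| + |B|)$ if $S'$ already has the right structure, but in general we must be careful --- see below), and check in near-linear time via a single sorted-merge whether $T \subseteq B$; if some element of $A + S'$ lies outside $B$, we need to localize \emph{which} shifts are responsible. To localize, use the standard ``color-coding over $\Order(\log)$ rounds'' or ``hierarchical splitting'' technique of Cardoze--Schulman: recursively split $S'$ and for each half compute $A + (\text{half})$, keeping only halves all of whose sumset lies in $B$; after $\Order(\log|S'|)$ levels each surviving singleton is a verified feasible shift, and the total work telescopes to $\widetilde\Order(|A| + |B|)$ because the sumset at each level of the recursion, restricted to the still-alive shifts, has size $\widetilde\Order(|A|+|B|)$ (any shift that is feasible contributes a subset of $B$, and the number of alive-but-infeasible shifts shrinks geometrically).

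The main obstacle I anticipate is controlling the \emph{sizes of the intermediate sumsets} so that every call to \cref{thm:sumset} is genuinely near-linear in $|A| + |B|$. The delicate point is that a generic sumset $A + S'$ can be as large as $|A|\,|S'| = \widetilde\Order(|A|\,|B|)$ even when only few shifts are feasible, because infeasible shifts still ``pollute'' the sumset with many out-of-$B$ elements. The resolution --- and this is where the argument needs care --- is to interleave the sumset computation with the pruning: at each recursion level, before recomputing, one first discards from $A+(\text{current block})$ all elements not in $B$ (this is a single merge with the sorted $B$), which simultaneously caps the surviving structure at size $|B|$ \emph{and} provides the ``a shift is dead'' certificate. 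One must then argue that the number of live blocks at level $i$ times the per-block sumset work, summed over $\Order(\log)$ levels, stays near-linear; this follows from a potential/charging argument showing that out-of-$B$ elements are discovered at a geometric rate, so their total across all levels is $\widetilde\Order(|A|+|B|)$. Making this charging precise, and handling the randomized size bound on the initial candidate set $S'$ cleanly (including the Las Vegas guarantee, i.e.\ detecting and retrying the rare bad events), is the technical heart of the proof; everything else is bookkeeping on sorted arrays plus black-box calls to \cref{thm:sumset}.
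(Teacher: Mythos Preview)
Your proposal has a genuine gap in the verification step, and the paper's proof takes a substantially different route.

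The core issue is the size of the sumsets you compute. You acknowledge that $A + S'$ can have size up to $|A|\cdot|S'|$, but your proposed fix---hierarchically splitting $S'$ and interleaving with pruning---does not actually control this. Already at the root you must compute $A + S'$ (or $A$ plus each half), and this sumset can genuinely have $\Theta(|A|\cdot|B|)$ elements even when only $O(1)$ shifts are feasible: take $B$ generic and $S' = B - a_0$; then almost every $s \in S'$ has $|(A+s)\setminus B| = \Theta(|A|)$, so $|A+S'| = \Theta(|A|\,|B|)$. The output-sensitive sumset algorithm pays for this, and no subsequent pruning helps because the cost is already incurred. Your claimed geometric-decay charging cannot get off the ground either: in your recursion a block containing even one infeasible shift is split further, not pruned, so infeasible shifts are never eliminated until the singleton level, and nothing shrinks. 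A second, more subtle gap: even after computing $A + S'$, the sumset alone does not tell you \emph{which} shifts contributed out-of-$B$ elements. You write that discarding out-of-$B$ elements ``provides the `a shift is dead' certificate,'' but knowing that some $c \in (A+S')\setminus B$ does not identify the responsible $s$'s without further work.

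The paper resolves both issues simultaneously but differently. Rather than splitting $S$, it subsamples $A$: for $\ell = L,\dots,0$ it samples $A_\ell \subseteq A$ at rate $2^{-\ell}$ and keeps only those shifts $s$ with $A_\ell + s \subseteq B$. The crucial ingredient (\cref{lem:constellation-cand}) is that, given $A, B, S$, one can decide for \emph{every} $s\in S$ whether $A+s\subseteq B$, all in time $\widetilde O(|A+S|+|B|)$; this is obtained by applying the Baur--Strassen theorem to the arithmetic circuit underlying the sumset/convolution computation, and it extracts per-shift information that a bare call to \cref{thm:sumset} does not give. A short expectation argument then shows $\Ex|A_\ell + S_{\ell+1}| = O(|A|+|B|)$ at every level: a shift with $k$ elements outside $B$ survives to level $\ell$ with probability at most $\exp(-k/2^{\ell+1})$ while contributing only about $2^{-\ell}k$ out-of-$B$ elements, and $x\,e^{-x}$ is bounded. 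Both ingredients---subsampling $A$ at geometric rates rather than splitting $S$, and the Baur--Strassen trick to read off per-shift counts---are missing from your plan.
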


To prove \cref{thm:constellation} we design a new reduction from the Constellation problem to computing sparse sumsets. This reduction is inspired by Fischer's recent derandomization~\cite{Fischer24}, which relies on a combination of the Baur-Strassen theorem and a scaling trick. 

We remark that \cref{thm:constellation} readily generalizes to more realistic point pattern matching problems. For instance, as a simple corollary the same theorem holds for $d$-dimensional points with an overhead of $\poly(d)$ in the running time. Additionally, in time $\widetilde\Order((|A| + |B|)^d)$ we can test whether a set of $d$-dimensional points $A$ can be translated and \emph{rotated} to be covered by $B$. For more details on these corollaries, we refer the reader to the discussions in~\cite{CardozeS98,Fischer24}.

\paragraph{Application 2: 3SUM with Preprocessing}
The famous 3SUM problem is to test whether three given sets $A, B, C$ contain a solution to $a + b = c$, $a \in A, b \in B, c \in C$. This problem can be solved in quadratic time, and the conjecture that no substantially faster algorithm exists has evolved into one of the three main pillars of fine-grained complexity. Nevertheless, investigating a problem first studied by Bansal and Williams~\cite{BansalW12}, in an innovative paper Chan and Lewenstein~\cite{ChanL15} discovered that any integer 3SUM instance $(A', B', C')$ can be solved in truly subquadratic time~$\widetilde\Order(n^{13/7})$ if we are allowed to \emph{preprocess} some size-$n$ supersets~\makebox{$A \supseteq A', B \supseteq B', C \supseteq C'$} in quadratic time. Their approach is based on an algorithmic version of the celebrated Balog-Szemerédi-Gowers (BSG) theorem~\cite{BalogS94,Gowers01} from additive combinatorics, in combination with sparse sumset algorithms.

While the running time was later improved~\cite{ChanWX23,KasliwalPS25},\footnote{Chan, Vassilevska Williams and Xu~\cite{ChanWX23} proved that the query time can be reduced to $\widetilde\Order(n^{11/6})$ for integer 3SUM. In their algorithm they rely on the equivalence of integer 3SUM and integer Convolution-3SUM, which itself relies on linear hashing and therefore fails in our context. In even more recent work (announced after this paper was completed), Kasliwal, Polak and Sharma~\cite{KasliwalPS25} have reduced the running time to~\smash{$\widetilde\Order(n^{3/2})$} based purely on linear hashing and the FFT, and without relying on a BSG-type theorem.} it remained open whether a similar algorithm exists for real 3SUM. The BSG theorem works equally well for the reals (in fact, for any Abelian group); the only missing ingredient is an efficient algorithm for computing sparse sumsets (in~\cite[Remark~6.3]{ChanL15}, Chan and Lewenstein also comment that their algorithm can be adapted to the reals in an unconventional computational model which requires flooring). Equipped with \cref{thm:sumset} we can replicate the full result in the standard real RAM model; see \cref{sec:3sum} for details.

\begin{restatable}[Real 3SUM with Preprocessing]{theorem}{thmthreesumpreprocessing} \label{thm:3sum-preprocessing}
Let $A, B, C \subseteq \Real$ be sets of size at most $n$. In Las Vegas time $\widetilde\Order(n^2)$ we can preprocess $A, B, C$ into a data structure with space $\widetilde\Order(n^{13/7})$, such that upon query of sets $A' \subseteq A, B' \subseteq B, C' \subseteq C$, we can solve the 3SUM instance $(A', B', C')$ in Las Vegas time $\widetilde\Order(n^{13/7})$.
\end{restatable}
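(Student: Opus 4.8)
The plan is to run the algorithm of Chan and Lewenstein~\cite{ChanL15} essentially verbatim, observing that it is genuinely integer-specific in exactly one place---the computation of sparse sumsets---which we replace by \cref{thm:sumset}. Recall the shape of their argument. Preprocessing considers the set of \emph{witness pairs} $P = \set{(a,b) \in A \times B : a+b \in C}$. Although $P$ can have $\Theta(n^2)$ pairs, the set of pair-sums $\set{a+b : (a,b) \in P}$ has size at most $|C| \le n$---precisely the hypothesis of the Balog--Szemerédi--Gowers theorem. Its algorithmic version (as implemented in~\cite{ChanL15}) runs in time $\widetilde\Order(n^2)$ and peels off a family of \emph{clusters} $(A_1,B_1),\dots,(A_k,B_k)$ with $A_i \subseteq A$, $B_i \subseteq B$ and each $|A_i+B_i|$ small, together with a set $R \subseteq A \times B$ of few \emph{leftover} pairs, such that every witness pair lies in some $A_i \times B_i$ or in $R$. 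The essential observation is that this entire decomposition uses the ambient abelian group $(\Real,+)$ only through generic additive and graph-counting manipulations---it never uses that the group is the integers, a point also noted in the discussion preceding this theorem. The data structure we store is simply the clusters together with $R$; with the parameter balancing of~\cite{ChanL15} its size is $\widetilde\Order(n^{13/7})$.

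On a query $(A',B',C')$ we again mimic~\cite{ChanL15}: (i)~for every leftover pair $(a,b)\in R$ with $a \in A'$ and $b \in B'$, test directly whether $a+b \in C'$; and (ii)~for every cluster $i$, compute the restricted sumset $X_i := (A'\cap A_i) + (B'\cap B_i)$ and test whether $X_i \cap C' \ne \emptyset$. We report a solution iff some test succeeds. Correctness is inherited from~\cite{ChanL15}: every witness pair of the sub-instance $(A',B',C')$ is a witness pair of $(A,B,C)$ and is therefore covered by a cluster or by $R$, and conversely every successful test exhibits a genuine triple $a+b=c$ with $a\in A'$, $b\in B'$, $c\in C'$. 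For the running time, step~(i) costs $\widetilde\Order(|R|)$, while step~(ii) costs $\widetilde\Order(\sum_i |X_i|) \le \widetilde\Order(\sum_i |A_i+B_i|)$, since $X_i \subseteq A_i+B_i$; the second bound is exactly where we invoke \cref{thm:sumset}, computing each $X_i$ in time near-linear in its own (small) cardinality rather than via FFT over a hashed universe. The parameter choice of~\cite{ChanL15} makes both quantities $\widetilde\Order(n^{13/7})$.

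Two mundane substitutions remain. First, wherever~\cite{ChanL15} uses integer hashing for membership or multiplicity lookups (``is $x \in C'$?'', and so on), we instead sort the relevant real sets once and replace each lookup by a binary search---or use a balanced search tree---at a cost of an $\Order(\log n)$ factor on the real RAM. Second, randomization enters only through the sampling inside the algorithmic BSG step and through \cref{thm:sumset} itself; the latter is already Las Vegas, and for the former we verify a posteriori that the clusters obey their size bounds and that $|R|$ is within budget (both trivial to check) and re-sample otherwise, which succeeds with high probability. Hence preprocessing is Las Vegas, and a query is Las Vegas since a ``yes'' answer always comes with an explicit witnessing triple and a ``no'' answer is certain once the (always valid) cover has been processed.

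I do not expect a deep obstacle here: the real content is a careful audit of~\cite{ChanL15} confirming that \emph{every} appeal to integrality there is either a sparse-sumset computation (now supplied by \cref{thm:sumset}) or a dictionary operation (now supplied by sorting), and that no $\widetilde\Order$-factor degrades under these swaps. The part most deserving of care is the implementation and iteration of the BSG decomposition, where the bookkeeping is heaviest; but since that machinery is group-agnostic, I anticipate it goes through unchanged.
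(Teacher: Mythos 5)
Your proposal is correct and follows essentially the same route as the paper: invoke the algorithmic BSG corollary of Chan--Lewenstein (which is group-agnostic), store the clusters and the remainder set, and at query time test the remainder directly while computing each cluster sumset via \cref{thm:sumset}. The paper makes the parameters explicit—$k = \Order(1/\alpha)$, $|A_i+B_i| \le \Order(n/\alpha^5)$, $|R| \le \alpha n^2$, and $\alpha = n^{-1/7}$ balancing $\alpha n^2$ against $n/\alpha^6$—but the substance of your argument matches.
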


\paragraph{Application 3: Subset Sum}
As our third and final application we consider the classical Subset Sum problem: Given a (multi-)set $X$ of $n$ nonnegative numbers and a target $t$, determine whether the set of subset sums $\mathcal S(X) := \set{\sum_{x \in X'} x : X' \subseteq X}$ contains the target $t$. Subset Sum is arguably among the simplest NP-hard problems and constitutes the core of many other optimizations problems like Knapsack and Integer Programming. It can be solved in time $2^{n/2}$ (up to polynomial factors) using the textbook meet-in-the-middle algorithm, and it remains a major open problem to improve this running time. In the integer case another line of research that recently gained traction studies \emph{pseudo-polynomial-time} algorithms with respect to several parameters including the target~$t$~\cite{Bellman57,Bringmann17,KoiliarisX19,JinW19}, the maximum item size $\max(X)$~\cite{Pisinger99,PolakRW21,ChenLMZ24} and some others~\cite{BringmannW21}.

Unfortunately, these parameterizations are not reasonable for the real-valued Subset Sum problem: It does not even make sense to think of running times of the form $\widetilde\Order(n + t)$ when~$t$ is an unbounded real number. In the integer case such a running time is considered reasonable as~$t$ naturally bounds the \emph{number} of subset sums in $\mathcal S(X, t) := \mathcal S(X) \cap [0, t]$. In light of this, a reasonable replacement for the parameter $t$ is $|\mathcal S(X, t)|$---i.e., the \emph{output size} for any algorithm computing all subset sums $\mathcal S(X, t)$. Bringmann and Nakos~\cite{BringmannN20} recently initiated the study of this \emph{output-sensitive} Subset Sum problem, with the ultimate goal to obtain an algorithm in time $\widetilde\Order(n + |\mathcal S(X, t)|)$. Their contribution is an elegant algorithm running in time $\widetilde\Order(n + |\mathcal S(X, t)|^{4/3})$, building as their main tool on Ruzsa's triangle inequality from additive combinatorics~\cite{Ruzsa96}. In this work we achieve the same running time for the real variant of Subset Sum:

\begin{restatable}[Real Subset Sum]{theorem}{thmsubsetsumcapped} \label{thm:subsetsum-capped}
There is a Monte Carlo algorithm that, for a multiset $X \subseteq \Realnneg$ and $t \in \Realnneg$, computes $\mathcal S(X, t)$ in time~\smash{$\widetilde\Order(n + |\mathcal S(X, t)|^{4/3})$}.
\end{restatable}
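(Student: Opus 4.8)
The plan is to run the output-sensitive Subset Sum algorithm of Bringmann and Nakos~\cite{BringmannN20} essentially unchanged, replacing its only integrality-dependent subroutine --- the computation of sparse sumsets (sparse nonnegative convolution) --- by our \cref{thm:sumset}. Recall the shape of their approach: after a randomized reduction in the number of items, one computes $\mathcal S(X, t)$ bottom-up along a tree, where each internal step takes two already-computed sets $A, B \subseteq [0, t]$ and forms the \emph{capped} sumset $(A + B) \cap [0, t]$; this capped sumset is itself computed by a bucketing routine that partitions the value range $[0, t]$ into equal-length intervals, splits $A$ and $B$ accordingly, combines only those pairs of buckets whose sumset can land in $[0, t]$, and recurses on the boundary buckets. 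Ruzsa's triangle inequality is used to bound the total output size of all sparse-sumset calls made during the execution by $\widetilde\Order(\sigma^{4/3})$, where $\sigma := |\mathcal S(X, t)|$; this is where the exponent $4/3$ comes from.

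The crucial observation is that everything in this scheme other than the sparse sumset computation is oblivious to whether the input is integral. Ruzsa's triangle inequality holds in every abelian group, hence over $\Real$. The item-reduction step is purely combinatorial and randomized. And the required set operations --- union, deduplication, intersection with $[0, t]$, and distributing a set of reals into equal-length value buckets --- reduce to sorting and binary search, and thus carry over to the standard real RAM at no asymptotic cost, using only comparisons and arithmetic (in particular no flooring and no hashing of reals). Consequently, feeding each sparse-sumset call of the bucketing routine into the Las Vegas algorithm of \cref{thm:sumset} turns the whole procedure into a real-RAM algorithm; since that algorithm is output sensitive and the total output size over all calls is $\widetilde\Order(\sigma^{4/3})$, the running time remains $\widetilde\Order(n + \sigma^{4/3})$. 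Note that \cref{thm:sumset} returns a sumset as a \emph{set}, without multiplicities, which is exactly what is needed here, since Subset Sum only cares about which sums are attainable.

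What is left is the (mostly mechanical) task of re-deriving the size and time bookkeeping of~\cite{BringmannN20} in the real setting. Concretely, I would reduce multisets to sets by sorting $X$, coalescing equal values, and replacing a value $x$ occurring $c$ times by the capped arithmetic progression of its partial sums; implement the usual doubling search over the unknown $\sigma$, aborting a run as soon as a size or time budget tied to the current guess is exceeded; and confirm that the recursion inside the bucketing routine still terminates over the reals --- where bucket widths can shrink past any integer ``grid'' --- by stopping as soon as an input bucket is a singleton or small enough to be handled directly by \cref{thm:sumset}, which is legitimate because the Ruzsa-based size accounting does not depend on the bucket widths. The main obstacle, then, is precisely this interface between the bucketing recursion and the output-sensitive sumset primitive: one must check that the real bucketing only ever invokes \cref{thm:sumset} on instances whose output sizes still sum to $\widetilde\Order(\sigma^{4/3})$. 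Everything else transfers verbatim. Finally, the algorithm inherits the one-sided (Monte Carlo) error of the randomized reduction, since there is no apparent efficient way to certify that the returned set of subset sums is complete.
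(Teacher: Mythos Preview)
Your overall plan is correct: the Bringmann--Nakos framework does transfer to the reals once sparse sumsets (and, as the paper packages it, prefix-restricted sumsets via \cref{thm:prefix-sumset}) are available over $\Real$, and Ruzsa's inequality is indeed group-agnostic. The paper follows essentially this route, first proving \cref{thm:prefix-sumset} and then rerunning the two-level color-coding scheme of~\cite{Bringmann17,BringmannN20}.

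There is, however, one place where ``the usual doubling search over the unknown $\sigma$'' is not enough, and the paper does something genuinely different. The color-coding layer returns each subset sum only with probability $\tfrac12$ per repetition, so to recover all of $\mathcal S(X,t)$ with high probability you must repeat $\Theta(\log|\mathcal S(X,t)|)$ times --- which requires an upper bound on $|\mathcal S(X,t)|$ \emph{before} running. With a too-small guess $g$, the routine can terminate well within the time budget $\widetilde\Order(g^{4/3})$ (its running time is only an \emph{upper} bound in the true output size, not a lower bound) while returning an incomplete set, and you have no verifier to detect this; so neither a time abort nor a size abort is a sound stopping rule. The paper flags exactly this as the new difficulty over the reals and resolves it by an extra recursive layer: split $X=X_1\sqcup X_2$, recursively compute $\mathcal S(X_1,\tfrac t2)$ and $\mathcal S(X_2,\tfrac t2)$, and use the sandwich
\[
|\mathcal S(X_1,\tfrac t2)|+|\mathcal S(X_2,\tfrac t2)|-1 \;\le\; |\mathcal S(X,t)| \;\le\; n^4\,|\mathcal S(X_1,\tfrac t2)|^4\,|\mathcal S(X_2,\tfrac t2)|^4
\]
(the upper bound via a short decomposition argument) to get a polynomially tight estimate of $|\mathcal S(X,t)|$; this suffices to set the number of boosting repetitions. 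Your write-up should either supply this ingredient or give a careful argument why your doubling/abort scheme cannot silently stop with an incomplete answer.
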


Up to less important technical challenges, we obtain \cref{thm:subsetsum-capped} by redoing Bringmann and Nakos' proof and plugging in our new sumset algorithm. See \cref{sec:sumset-restricted,sec:subsetsum} for details. Along the way we also reproduce, for real numbers, Bringmann and Nakos'~\cite{BringmannN20} results for prefix- and interval-restricted sumsets which we believe to be independently important problems:

\begin{restatable}[Prefix-Restricted Sumset]{theorem}{thmprefixsumset} \label{thm:prefix-sumset}
There is a Las Vegas algorithm that, given $A, B \subseteq \Real$ and $u \in \Real$, computes $C = (A + B) \cap (-\infty, u]$ in time $\widetilde\Order(|A| + |B| + |C|^{4/3})$.
\end{restatable}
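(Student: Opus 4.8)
The plan is to re-run Bringmann and Nakos' algorithm~\cite{BringmannN20} for the capped sumset, replacing their integer sparse-convolution subroutine by our real sumset algorithm (\cref{thm:sumset}). I would begin with cosmetic reductions: translating $A$ and $B$ so that $\min A = \min B = 0$, we may assume $u \ge 0$ (otherwise $C = \emptyset$), and then every $a \in A$ with $a > u$ and every $b \in B$ with $b > u$ may be deleted, so that $A, B \subseteq [0, u]$. (This does \emph{not} force $|A + B| = \Order(|C|)$, so genuine content remains.) Since the target running time is nondecreasing in $|C|$, I would next \emph{guess} a value $s$ with $|C| \le s \le 2|C|$ by the standard doubling trick: run the procedure below for $s = 1, 2, 4, \dots$ with a budget of $\widetilde\Order(|A| + |B| + s^{4/3})$, aborting and doubling whenever the budget is exceeded. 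The procedure is correct for every $s$ (the guess only controls its granularity, hence its running time), and the last, successful run dominates the total cost.

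For the procedure with a fixed $s$, partition $A$ into $k := \Theta(s^{1/3})$ blocks $A_1, \dots, A_k$ of equal cardinality listed in increasing order of value, and do the same for $B$. Call a block pair $(A_i, B_j)$ \emph{dead} if $\min A_i + \min B_j > u$ (then it contributes nothing to $C$), \emph{full} if $\max A_i + \max B_j \le u$ (then $A_i + B_j \subseteq C$ entirely), and \emph{boundary} otherwise; then $C$ is the union of all $A_i + B_j$ over full pairs together with all $(A_i + B_j) \cap (-\infty, u]$ over boundary pairs. Letting $g(i)$ be the largest $j$ making $(A_i, B_j)$ full, the map $g$ is nonincreasing, the full pairs are exactly $\{(i, j) : j \le g(i)\}$, and their contribution equals $\bigcup_{i} \bigl(A_i + (B_1 \cup \dots \cup B_{g(i)})\bigr)$, a union of $k$ sumsets each of which lies inside $C$; computing each with one call to \cref{thm:sumset} therefore costs $\widetilde\Order(s)$, for a total of $\widetilde\Order(k s) = \widetilde\Order(s^{4/3})$. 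For the boundary pairs the key combinatorial fact (as in~\cite{BringmannN20}) is that there are only $\Order(k)$ of them: $(A_i, B_j)$ can be boundary only if the value range of $B_j$ meets the interval $\bigl(u - \max A_i,\, u - \min A_i\bigr]$, and these intervals are pairwise disjoint over $i$ since the $A$-blocks are value-ordered. Each boundary pair is itself a prefix-restricted-sumset instance on the smaller sets $A_i$, $B_j$ (of sizes $|A|/k$, $|B|/k$), to be solved recursively, bottoming out at singleton sets.

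The delicate point — and what I expect to be the main obstacle — is to choose $k$ and organize the recursion on the $\Order(k)$ boundary pairs so that the overall cost is $\widetilde\Order(|A| + |B| + s^{4/3})$ and not something larger such as $\widetilde\Order(\sqrt{|A|\,|B|\,s}\,)$: a careless recursion with $\Order(k)$ subcalls each costing up to $\widetilde\Order(s^{4/3})$ already overshoots to $\widetilde\Order(s^{5/3})$. Controlling this is precisely the technical heart of Bringmann and Nakos' argument — it carefully charges the recursion tree's cost, using that the input sizes shrink geometrically and that every boundary subinstance has output contained in $C$, and it may require a two-level block partition and/or additional combinatorial input — and I would follow their bookkeeping line by line, substituting \cref{thm:sumset} for the integer sparse-convolution primitive wherever it is invoked.

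Finally, I would verify that nothing above secretly relies on integrality beyond that sparse-sumset call: the universe $[0, u]$ is split using real intervals; the bundling of full pairs into $k$ sumsets uses no offsets at all; and in the few places where one might prefer to merge a batch of small sumsets into a single combined sumset via block separation, it suffices to translate blocks apart by sufficiently large \emph{reals}, which \cref{thm:sumset} handles exactly as it handles integers. Since \cref{thm:sumset} is Las Vegas and every other step is deterministic, the algorithm is Las Vegas, and the doubling adds only the usual logarithmic and constant-factor overhead, yielding the claimed bound.
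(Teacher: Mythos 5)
Your plan gets the scaffolding right — translate so that $A,B\subseteq[0,u]$, guess $s\approx|C|$ by doubling, split along a value-ordered ``staircase'' of block pairs into dead/full/boundary cells, bundle the full pairs into $O(k)$ sumsets each contained in $C$, and recurse on the $O(k)$ boundary cells — and you correctly flag the recursion cost as the crux. But flagging the gap is not the same as closing it, and ``I would follow their bookkeeping line by line'' is where the proof is missing its actual content. The paper does \emph{not} control the recursion by a bookkeeping or charging argument over shrinking input sizes; it does it with two specific devices you don't mention. First, it wraps each sumset computation in a \emph{budget-interrupt} primitive (\cref{lem:sumset-budget}): at a recursion node, first try to compute $A_i+B_i$ directly within a time budget $\widetilde\Theta(s)$, and only split if that aborts. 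This guarantees that every node that recurses satisfies $|A_i+B_i| > s \geq |C|$. Second, it proves that at any fixed depth of the resulting staircase there can be at most $O(|C|^{1/3})$ such ``heavy'' cells, and the proof of that bound (\cref{lem:quad-tree-heavy}) is a genuinely nontrivial application of \emph{Ruzsa's triangle inequality} (\cref{lem:ruzsa-corollary}): picking a middle index $\ell$ by averaging, one finds $i<\ell<j$ with $|A_\ell+B_j|,|A_i+B_j|,|A_i+B_\ell|$ all small, and Ruzsa then forces $|A_\ell+B_\ell|$ to be small, contradicting heaviness unless $g = O(|C|^{1/3})$.

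Without the budget-interrupt, your staircase cells need not be heavy, so there is no contradiction to extract and the recursion width is not bounded; and without Ruzsa's inequality there is no mechanism to convert ``the staircase is long'' into ``some $A_\ell+B_\ell$ is small.'' These are the two ingredients your proposal would have to supply, and they are exactly the additive-combinatorial content that makes the exponent $4/3$ appear (via $k=\Theta(s^{1/3})$ or, in the paper's version, $g=O(|C|^{1/3})$). Also note a structural difference: the paper does a binary split of $A$ by value at each level (with $B$ cut at the threshold $u - \max(A_1)$), rather than an up-front $k$-way block partition of both $A$ and $B$; the budget check at every node is what lets the binary recursion work and what your $k$-way partition lacks a natural analogue of. In short, the high-level decomposition in your proposal is compatible with the paper's, but the argument is incomplete at precisely the step you identify as hard, and the missing ingredients are the time-budget wrapper around \cref{thm:sumset} and the Ruzsa-based width bound on heavy staircase cells.
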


\begin{restatable}[Interval-Restricted Sumset]{theorem}{thmintervalsumset} \label{thm:interval-sumset}
There is a Las Vegas algorithm that, given $A, B \subseteq \Real$ and $\ell, u \in \Real$, computes $C = (A + B) \cap [\ell, u]$ in time $\widetilde\Order(|A| + |B| + \sqrt{|A| \, |B| \, |C|})$.
\end{restatable}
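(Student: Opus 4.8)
The plan is to reduce the interval-restricted sumset to about $\sqrt{|A||B|/|C|}$ calls of the unrestricted real sumset algorithm (\cref{thm:sumset}), each on sets whose product of sizes is about $|C|$, by a blocking scheme. I would first normalize: translating and rescaling $A$ and $B$ (and undoing this on the output), we may assume the target interval is $[0,1]$; the degenerate case $\ell=u$ and the cases $|A|\le 1$ or $|B|\le 1$ are handled directly in time $\widetilde O(|A|+|B|)$ with a sorted copy of $B$ (resp.\ $A$). Next I would run a single \emph{cleaning} round: replace $A$ by $\set{a\in A : B\cap[\ell-a,u-a]\ne\emptyset}$ and, using the \emph{old} $A$, replace $B$ by $\set{b\in B : A\cap[\ell-b,u-b]\ne\emptyset}$; this takes $\widetilde O(|A|+|B|)$ via binary search, does not change $C=(A+B)\cap[0,1]$, and has the crucial effect that afterwards every surviving $a\in A$ has a partner $b\in B$ with $a+b\in[0,1]$ and vice versa. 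Finally sort $A$ and $B$.

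For the core step, fix a guess $c$ for $|C|$ and set $\beta_A\approx\sqrt{c\,|A|/|B|}$, $\beta_B\approx\sqrt{c\,|B|/|A|}$ (clamped to $[1,|A|]$ and $[1,|B|]$), so that $\beta_A\beta_B=\Theta(c)$ and $p_A:=|A|/\beta_A$, $p_B:=|B|/\beta_B$ are both $\Theta(\sqrt{|A||B|/c})$. Partition sorted $A$ into consecutive ``row blocks'' of $\beta_A$ elements and sorted $B$ into ``column blocks'' of $\beta_B$ elements. For each row block $A_r$, the column blocks $B_s$ with $[\min A_r+\min B_s,\ \max A_r+\max B_s]\cap[0,1]\ne\emptyset$ form a contiguous range (monotonicity), located by binary search; these are the \emph{relevant} pairs $(r,s)$. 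For each relevant pair I would run \cref{thm:sumset} to compute $A_r+B_s$ in time $\widetilde O(|A_r+B_s|)\le\widetilde O(\beta_A\beta_B)=\widetilde O(c)$ and intersect with $[0,1]$; the union over relevant pairs equals $C$. Since $|C|$ is unknown, wrap this in an exponential search $c=1,2,4,\dots$ with a per-$c$ running-time budget equal to $\sqrt{|A||B|\,c}$ times a suitable polylogarithmic factor: abort and double $c$ whenever a run exceeds its budget. At $c=|A||B|$ the scheme becomes one call of \cref{thm:sumset} of cost $\widetilde O(|A+B|)\le\widetilde O(|A||B|)$, so the search terminates, and a completed run is always correct.

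The running-time bound rests on two combinatorial facts about the ``band'' $S=\set{(i,j):a_i+b_j\in[0,1]}$. First, for any value $v$ the index pairs with $a_i+b_j=v$ form a monotone staircase in the grid, so they meet at most $O(p_A+p_B)$ blocks; summing over $v\in C$ shows $\sum_{(r,s)\text{ relevant}}|(A_r+B_s)\cap[0,1]|\le O(|C|(p_A+p_B))$, which for $c\ge|C|$ is $O(\sqrt{|A||B||C|})$ and already bounds the cost of all blocks lying inside the band. Second, after cleaning, every row and column of $S$ is a nonempty interval; from this one checks that a relevant block cannot be both wider than the window and disjoint from $S$, so every non-interior relevant block contains a cell on one of the two monotone staircases bounding $S$, and there are only $O(p_A+p_B)=O(\sqrt{|A||B|/c})$ of these, each of cost $\widetilde O(c)$, totalling $\widetilde O(\sqrt{|A||B|\,c})$. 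The enumeration cost is controlled using $|S|\le|C|\cdot\min(|A|,|B|)$ (each value has at most $\min(|A|,|B|)$ representations). Hence the run at $c=\Theta(|C|)$ costs $\widetilde O(\sqrt{|A||B||C|})$; the geometric sum over the search and the $\widetilde O(|A|+|B|)$ preprocessing give the claimed bound.

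The step I expect to be most delicate is the second combinatorial fact together with the design of the cleaning round: one must argue that cleaning makes $S$ ``solid'' enough that relevance essentially coincides with actually containing a surviving pair, ruling out the scenario of many blocks that straddle the whole window $[0,1]$ yet contribute nothing to $C$. A lesser subtlety is the interaction of the time-based aborts with the Las Vegas guarantee of \cref{thm:sumset}, which is handled by the standard observation that each iteration's expected work stays within a constant factor of its budget, so aborts are rare and do not inflate the expected running time.
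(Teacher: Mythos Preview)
Your blocking scheme and overall plan are correct and lead to the stated bound, and your ``most delicate'' step actually goes through: if a wide block were disjoint from $S$ after cleaning, then along its column range the partner-interval $\phi(j)$ would have to jump from strictly above the block's row range to strictly below between two consecutive columns $j^*,j^*{+}1$; but then $a_{r_1}+b_{j^*}<0$ and $a_{r_1}+b_{j^*+1}>1$, so the partner of $a_{r_1}$ guaranteed by cleaning would have to lie strictly between $b_{j^*}$ and $b_{j^*+1}$, which is impossible. So your boundary-of-band count of $O(p_A+p_B)$ non-interior blocks is sound, and together with your staircase bound on $\sum |(A_r+B_s)\cap[0,1]|$ you get the $\widetilde O(\sqrt{|A||B||C|})$ budget at $c\ge|C|$.

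That said, the paper's argument is considerably simpler and avoids cleaning altogether. It uses the same partition of $A$ and $B$ into $g$ sorted pieces each, but analyzes along the \emph{diagonals} $\mathcal C_\Delta=\{(i,j):i-j=\Delta\}$. Along a fixed diagonal the ranges $[\min A_i+\min B_j,\max A_i+\max B_j]$ are pairwise disjoint and increasing, so at most two of them cross an endpoint of $[\ell,u]$ (these are the only partially-but-not-totally relevant pairs), and the totally relevant ones have pairwise disjoint sumsets inside $C$, hence contribute at most $|C|$ in total. With $O(g)$ diagonals this gives $O(|A||B|/g+g|C|)$ directly. Compared to your route, this replaces the cleaning step and the band-boundary argument by a one-line disjointness observation on diagonals; what your approach buys is a slightly more ``geometric'' picture of the band $S$, but at the cost of an extra preprocessing pass and a subtler case analysis.
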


\subsection{The Real RAM Model and its Downsides}
All of our algorithms assume the standard real RAM model supporting basic arithmetic operations (${+}, {-}, {\cdot}, {/}$) and comparisons (${=}, {\leq}, {\geq}, {<}, {>}$) of \emph{infinite-precision} real numbers. In particular, we do \emph{not} presuppose stronger operations such as computing exponential or logarithm functions, or flooring (see e.g.~\cite{EvdHM20} or \cref{sec:preliminaries} for more details). In this sense, our algorithms qualify as satisfying \emph{theoretical} solutions to the problems described before.

Nevertheless, the real RAM model often reflects the real world quite inaccurately, and also in our case we consider it unlikely that our algorithms have immediate \emph{practical} impact. The reason is that our algorithms crucially rely on an \emph{exact, infinite-precision} model. This can be realistic in rare cases, e.g.\ when a fast library for rational numbers is available (our algorithms work just as well over the rationals, assuming that arithmetic operations take unit time). But in the majority of cases this assumption is simply unrealistic. We leave it as future work to come up with algorithms in more realistic models, perhaps along the following lines:
\begin{itemize}
    \setlength\parskip{0pt}
    \item A reasonable weaker model is the real RAM model restricted to \emph{bounded-degree comparisons} as suggested in~\cite{ChanWX22}. This could concretely mean that we are only allowed to compare two inputs ($a \leq b$), or compare sums of inputs ($a + b \leq c$), or testing whether a bounded-degree polynomial of the inputs is nonnegative.
    
    Unfortunately, we immediately face a barrier when attempting to compute sumsets in this restricted model. It is a folklore assumption (related to the combinatorial Boolean matrix multiplication conjecture~\cite{VassilevskaWilliams18}) that sumsets (or equivalently, Boolean convolutions) cannot be computed in truly subquadratic time without using algebraic methods. Under this assumption, in a model where we restrict the power of multiplication any efficient sumset algorithm would have to first ``identify'' that the input consists of integers, then convert the input to integers and apply the FFT on the integer registers of the machine. This appears quite strange, and might indicate that the strong version of the real RAM is indeed necessary.
    
    \item Another avenue is to explore algorithms that behave numerically stable. But it is unclear what this should even mean in our context, as already the definition of many our problems assumes exact arithmetic. Besides, for many approximate variants (such as if we tolerate additive $\pm \epsilon$ errors) a practical approach would be to treat the instance as an integer instance obtained by rounding (see e.g.~\cite{CardozeS98}), in which case the integer toolkit becomes available and we have stripped the problem of its real hardness.
\end{itemize}
Having clarified these restrictions, we emphasize that when restricted to \emph{integer} inputs all of our algorithms can be implemented in the word RAM model in essentially the same running time (with only small modifications such as computing modulo an appropriate prime). We thereby also introduce an alternative technique for integer problems, which we believe has broader applicability beyond the problems discussed here.

\subsection{Technical Overview} \label{sec:intro:sec:overview}
In this section we will summarize the key ideas leading to our new algorithm for computing real sumsets (\cref{thm:sumset}). We will first sketch a new approach for integer sets, and will then later deal with the additional challenges posed by real numbers.

\subsubsection{An Algebraic Algorithm for Integer Sumsets}
As a first step, let us reformulate the problem of computing the sumset $A + B$ in terms of polynomials. Specifically, let $f(X) = \sum_{a \in A} X^a$ and $g(X) = \sum_{b \in B} X^b$, and denote their product polynomial by $h := f g$. It is easy to check that $A + B$ is exactly the \emph{support} $\supp(h)$ (i.e., the set of all~$c$ for which the monomial~$X^c$ has a nonzero coefficient). Let~\makebox{$t = |A + B|$}.\footnote{Throughout this overview we will implicitly assume that $t$ is known; this assumption can be removed, e.g.~by exponential search.} Note that all three polynomials $f$, $g$ and $h$ are \emph{$t$-sparse} (i.e., have at most~$t$ nonzero coefficients). In order to compute $h$ we follow an evaluation--interpolation approach:
\begin{enumerate}[itemsep=\smallskipamount]
    \item Evaluate $f(x_0), \dots, f(x_{2t-1})$ and $g(x_0), \dots, g(x_{2t-1})$ at some strategic points $x_0, \dots, x_{2t-1}$.
    \item Multiply $h(x_0) \gets f(x_0) \cdot g(x_0), \dots, h(x_{2t-1}) \gets f(x_{2t-1}) \cdot g(x_{2t-1})$.
    \item Interpolate $h$ from the evaluations $h(x_0), \dots, h(x_{2t-1})$.
\end{enumerate}

\paragraph{Sparse Evaluation and Interpolation}
How fast can we implement this algorithm? The second step runs in linear time $\Order(t)$. And while perhaps not a priori clear, the first step can also be efficiently implemented in time $\widetilde\Order(t)$ with comparably little effort if we choose the evaluation points to be a geometric progression:

\begin{lemma}[Sparse Evaluation and Interpolation] \label{lem:sparse-eval}
Let $f$ be a $t$-sparse polynomial over a field~$\Field$, and let $\omega \in \Field$ have multiplicative order at least $\deg(f)$.\footnote{The condition that~$\omega$ requires large multiplicative order is technical and we encourage the reader to not worry about this condition throughout.} Then:
\begin{itemize}
    \item Given $f$ we can compute $f(\omega^0), \dots, f(\omega^{t-1})$ in time $\Order(t \log^2 t)$.
    \item Given $f(\omega^0), \dots, f(\omega^{t-1})$ and $\supp(f)$, we can uniquely interpolate $f$ in time $\Order(t \log^2 t)$.
\end{itemize}
\end{lemma}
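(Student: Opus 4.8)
The plan is to exploit the structure of a geometric progression of evaluation points, which turns both directions into problems about *structured* (Toeplitz/Vandermonde-type) linear systems that can be solved in near-linear time.

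For the **evaluation** direction, write $f(X) = \sum_{j=1}^{t} c_j X^{e_j}$ with distinct exponents $e_1, \dots, e_t$. Then $f(\omega^i) = \sum_{j=1}^{t} c_j (\omega^{e_j})^i = \sum_{j=1}^{t} c_j \alpha_j^i$ where $\alpha_j := \omega^{e_j}$. So evaluating $f$ at the geometric progression $\omega^0, \dots, \omega^{t-1}$ is exactly a multipoint evaluation of the (dense, degree-$<t$) polynomial $P(Y) = \sum_{j} c_j' Y^j$... — no, more precisely it is the matrix-vector product of the $t \times t$ Vandermonde-like matrix $V = (\alpha_j^i)_{i,j}$ with the coefficient vector $(c_j)$. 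I would use the standard fact that multiplication by a Vandermonde matrix built on $t$ nodes $\alpha_1, \dots, \alpha_t$ costs $\widetilde O(t)$: build the subproduct tree of $\prod_j (Y - \alpha_j)$ and run the classical fast multipoint-evaluation recursion (Borodin–Moenck / see von zur Gathen–Gerhard). This requires the $\alpha_j$ to be distinct field elements, which is where the hypothesis on the multiplicative order of $\omega$ enters: since each $e_j \le \deg(f)$ and $\omega$ has order $\ge \deg(f)$, the powers $\omega^{e_j}$ are pairwise distinct. The cost is $O(t \log^2 t)$ field operations (the FFT-based polynomial multiplications inside the tree contribute the extra log factor over $O(t\log t)$ when we cannot assume roots of unity, or just from the $O(\log t)$ levels each doing an $O(t\log t)$ multiplication). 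That said, since our evaluation points are themselves a geometric progression $\omega^0,\dots,\omega^{t-1}$, there is an even more elementary route: Bluestein's trick / the chirp transform rewrites $f(\omega^i)$ as a single convolution, again $\widetilde O(t)$; I would mention this as the simpler self-contained proof and use it as the primary argument.

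For the **interpolation** direction, we are given $\supp(f) = \{e_1,\dots,e_t\}$ and the values $v_i := f(\omega^i)$ for $i = 0,\dots,t-1$, and must recover the coefficients $c_1,\dots,c_t$. This is exactly solving the linear system $V \mathbf{c} = \mathbf{v}$ with the *same* Vandermonde matrix $V = (\alpha_j^i)$ on the distinct nodes $\alpha_j = \omega^{e_j}$. Since $V$ is invertible (distinct nodes), the solution is unique, giving the claimed uniqueness. For the running time I would invoke the standard fast algorithm for *transposed Vandermonde* / fast Lagrange interpolation: form the master polynomial $M(Y) = \prod_j (Y - \alpha_j)$ via the subproduct tree, compute its derivative values $M'(\alpha_j)$ by fast multipoint evaluation, and recover $\mathbf{c}$ by a fast "combine up the tree" step — each of these is $O(t\log^2 t)$. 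Equivalently, one can phrase it as: first solve for the coefficients of the dense polynomial $Q(Y) := \sum_j c_j \prod_{k\ne j}\frac{Y-\alpha_k}{\alpha_j-\alpha_k}$ interpolated through the pairs $(\alpha_j, \cdot)$ — but since we only have the $v_i$ at $\omega^0,\dots,\omega^{t-1}$ and not at the $\alpha_j$ themselves, the cleanest formulation really is the transposed-Vandermonde solve, and I would cite the textbook treatment (von zur Gathen–Gerhard, Ch. 5 and 10) rather than reprove it.

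The **main obstacle** — really the only subtle point — is making sure the field arithmetic is legitimate: we need the $\alpha_j = \omega^{e_j}$ to be distinct (handled by the order hypothesis, as above) and we need the subproduct-tree/FFT machinery to run over an arbitrary field $\Field$. The latter is fine: polynomial multiplication over any field is $O(t \log t \log\log t)$ or, for our purposes, simply $O(t\log t)$ via Kronecker substitution into a ring that supports FFT (or we just absorb it into the stated $\log^2 t$), so no roots of unity in $\Field$ are needed. A second, easy-to-overlook point is that $\deg(f)$ may be much larger than $t$ (the exponents $e_j$ are huge), so the cost must be stated in terms of $t$, not $\deg f$ — this is automatic in the Vandermonde/Bluestein formulation since only the $t$ values $\alpha_j$ and the $t$ evaluation points appear, never the dense expansion. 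I would also remark that the interpolation step genuinely requires being handed $\supp(f)$: without it the $t$ values $f(\omega^0),\dots,f(\omega^{t-1})$ do not determine $f$ (one would need $\Theta(t)$ *more* values, à la Prony, to also recover the exponents), which is exactly the division of labor exploited later in the paper.
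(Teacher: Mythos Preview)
Your reduction to transposed-Vandermonde linear algebra is exactly the paper's approach: the paper simply records the lemma as an immediate consequence of the fact that matrix-vector products and linear solves with transposed Vandermonde matrices run in time $O(t\log^2 t)$ (citing Kaltofen--Lakshman, Canny--Kaltofen--Yagati, Li, Pan), whereas you unpack the subproduct-tree mechanics behind that fact.

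One correction, though: your proposed ``primary argument'' via Bluestein's chirp transform does not work here. Bluestein relies on the identity $2ij=(i+j)^2-i^2-j^2$ to turn the bilinear exponent $ij$ into a convolution index; but in $f(\omega^i)=\sum_j c_j\,\omega^{i e_j}$ the exponents $e_j$ are arbitrary and may be as large as $\deg f\gg t$, so the nodes $\alpha_j=\omega^{e_j}$ do \emph{not} themselves form a geometric progression and the would-be convolution has length $\Theta(\deg f)$ rather than $\Theta(t)$. Stick with the transposed-Vandermonde route---and note that the evaluation direction really is the \emph{transpose} of multipoint evaluation (you are computing weighted power sums $\sum_j c_j\alpha_j^i$, not evaluating a degree-$t$ polynomial at the $\alpha_j$); you handle this distinction cleanly in the interpolation half but gloss over it in the evaluation half.
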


This lemma is an immediate consequence of the fact that linear algebra operations (such as matrix-vector products, solving linear systems, computing determinants, etc.) involving transposed Vandermonde matrices can be implemented in time $\Order(t \log^2 t)$~\cite{KaltofenL88,CannyKY89,Li00,Pan01}. The second item in the lemma also helps in implementing step 3 if we can compute $\supp(h)$---and as we will see next, computing $\supp(h)$ turns out to be the hardest part of the algorithm.

We remark that this general recipe is not new and has already been used for computing sparse sumsets~\cite{ArnoldR15,BringmannFN22}; however, both of these works compute $\supp(h)$ by a scaling trick which cannot work for real numbers and is thus prohibitive for us. We will instead employ \emph{Prony's method}, which we describe in some detail in the next subsection.

\paragraph{Prony's Method}
\emph{Prony's method}, named after its inventor Gaspard de Prony, is an old polynomial interpolation algorithm dating back to 1795~\cite{Prony1795}. It was rediscovered several times since then, for decoding BCH codes~\cite{Wolf67} and in the context of polynomial interpolation~\cite{BenOrT88}; see also~\cite{Roche18}. It has since found several surprising applications in algorithm design, see e.g.~\cite{CliffordEPR09,CliffordKP19} for two applications in string algorithms.

As we do not expect the target audience of this paper to be necessarily familiar with Prony's method, we include a detailed overview. In this section our aim is to be more intuitive than formally precise. Prony's method states that we can recover the support of a $t$-sparse polynomial $h$ from evaluations at a geometric progression of length $2t$,~\makebox{$h(\omega^0), \dots, h(\omega^{2t-1})$}. We start with a definition:

\begin{definition}[Linear Recurrence] \label{def:linear-recurrence}
A sequence~$s_0, s_1, \dots, s_{t-1}$ is \emph{linearly recurrent with degree~$r$} if there is a degree-$r$ polynomial $\Lambda(X) = \sum_{\ell=0}^r \lambda_\ell X^\ell$ such that each term in the sequence is determined by a linear combination of its $r$ preceding terms, weighted with $\lambda_0, \dots, \lambda_r$:
\begin{equation*}
    \sum_{\ell=0}^r \lambda_\ell \cdot s_{i+\ell} = 0 \quad\text{for all $0 \leq i < t - r$.}
\end{equation*}
If $\Lambda$ is the monic polynomial (i.e., with leading coefficient $\lambda_r = 1$) with smallest-possible degree satisfying this condition then we call $\Lambda$ the \emph{minimal polynomial} of the sequence $s_0, s_1, \dots, s_{t-1}$.
\end{definition}

At first it might seem odd why we view the $\lambda_i$'s as a polynomial, but this viewpoint hopefully soon starts to make sense. It turns out that the sequence of evaluations $h(\omega^0), \dots, h(\omega^{2t-1})$ is linearly recurrent with degree exactly~$t$. And, more importantly, we can read of the support from the minimal polynomial of this recurrence as stated in the following lemma. This lemma constitutes the heart of Prony's method:

\begin{lemma}[Prony's Method] \label{lem:prony}
Let $h$ be a polynomial which is exactly $t$-sparse, and let $\omega$ have multiplicative order at least $\deg(h)$. Then the sequence $h(\omega^0), \dots, h(\omega^{2t-1})$ is linearly recurrent with degree $t$. Moreover, its minimal polynomial is
\begin{equation*}
    \Lambda(X) = \prod_{c \in \supp(h)} (X - \omega^c).
\end{equation*}
\end{lemma}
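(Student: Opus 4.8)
The plan is to argue directly from \cref{def:linear-recurrence} together with standard facts about Vandermonde matrices. Write $h(X) = \sum_{j=1}^{t} h_j X^{c_j}$ where $\supp(h) = \set{c_1, \dots, c_t}$ and every coefficient $h_j$ is nonzero, and set $\alpha_j := \omega^{c_j}$. Then the sequence in question is $s_i = h(\omega^i) = \sum_{j=1}^{t} h_j \alpha_j^{\,i}$, i.e.\ a fixed linear combination of the geometric sequences $(\alpha_j^{\,i})_i$. The first thing I would establish is that $\alpha_1, \dots, \alpha_t$ are pairwise distinct (they are automatically nonzero, since $\omega$ being of positive multiplicative order is a unit): this is exactly where the hypothesis on $\omega$ enters, because $\alpha_j = \alpha_k$ would give $\omega^{|c_j - c_k|} = 1$ with $0 \le |c_j - c_k| \le \deg(h)$ strictly less than the order of $\omega$ unless $c_j = c_k$. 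The degenerate case $\deg(h) = 0$ (so $t = 1$) I would dispatch by hand.

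For the ``existence'' half of the statement, I would verify that $\Lambda(X) := \prod_{j=1}^{t}(X - \alpha_j) = \sum_{\ell=0}^{t} \lambda_\ell X^\ell$, which is monic of degree $t$, witnesses the recurrence. This is a one-line computation: for any index $i$,
\begin{equation*}
    \sum_{\ell=0}^{t} \lambda_\ell\, s_{i+\ell} \;=\; \sum_{j=1}^{t} h_j \alpha_j^{\,i} \sum_{\ell=0}^{t} \lambda_\ell \alpha_j^{\,\ell} \;=\; \sum_{j=1}^{t} h_j \alpha_j^{\,i}\, \Lambda(\alpha_j) \;=\; 0,
\end{equation*}
since $\Lambda(\alpha_j) = 0$ for every $j$. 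I would then just check that the range of $i$ matches \cref{def:linear-recurrence}: for a degree-$t$ recurrence on the length-$2t$ sequence $s_0, \dots, s_{2t-1}$ the required range is $0 \le i < 2t - t = t$, which is covered. Hence the sequence is linearly recurrent with degree $t$.

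For the ``minimality and uniqueness'' half I would take any monic polynomial $\Gamma(X) = \sum_{\ell=0}^{r} \gamma_\ell X^\ell$ of degree $r \le t$ satisfying the recurrence condition, and show $\Gamma = \Lambda$; this simultaneously rules out smaller degrees and pins down the minimal polynomial. The recurrence gives $\sum_{\ell=0}^{r} \gamma_\ell s_{i+\ell} = \sum_{j=1}^{t} h_j \Gamma(\alpha_j)\,\alpha_j^{\,i} = 0$ for all $i$ in the valid range, and since $r \le t$ this range contains $\set{0, 1, \dots, t-1}$. Writing $w_j := h_j \Gamma(\alpha_j)$, we get $\sum_{j=1}^{t} w_j \alpha_j^{\,i} = 0$ for $i = 0, \dots, t-1$; the coefficient matrix $(\alpha_j^{\,i})_{0 \le i \le t-1,\ 1 \le j \le t}$ is a square Vandermonde matrix in the distinct nodes $\alpha_j$, hence invertible, forcing $w_j = 0$ and therefore (as $h_j \neq 0$) $\Gamma(\alpha_j) = 0$ for all $j$. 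A monic polynomial of degree $\le t$ with $t$ distinct roots must have degree exactly $t$ and equal $\prod_j (X - \alpha_j) = \Lambda$. Thus no monic polynomial of degree $< t$ satisfies the recurrence, and $\Lambda$ is the unique minimal polynomial.

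The main obstacle — really the only subtlety — is the index bookkeeping in \cref{def:linear-recurrence}: one must make sure that a candidate recurrence of degree $r \le t$ on a length-$2t$ sequence still yields the full set of $t$ equations $\sum_j w_j \alpha_j^{\,i} = 0$ for $i = 0, \dots, t-1$, so that the $t \times t$ Vandermonde system can be inverted; and one must use the multiplicative-order hypothesis carefully to guarantee the nodes $\alpha_j$ are distinct (treating the constant case separately). Everything else reduces to the elementary identity $\Lambda(\alpha_j) = 0$ and the invertibility of Vandermonde matrices over the underlying field.
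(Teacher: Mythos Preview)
Your proof is correct and follows essentially the same approach as the paper: both verify that $\Lambda$ satisfies the recurrence by the identity $\sum_\ell \lambda_\ell s_{i+\ell} = \sum_j h_j \alpha_j^{\,i}\,\Lambda(\alpha_j)$, and both rule out a smaller-degree recurrence by arguing that the resulting linear system forces all the weights $h_j\,\Gamma(\alpha_j)$ to vanish. The only cosmetic difference is that the paper packages the Vandermonde-invertibility step as an appeal to the interpolation half of \cref{lem:sparse-eval}, whereas you invoke the Vandermonde determinant directly; your version is slightly more self-contained in that it makes explicit where the multiplicative-order hypothesis is used to guarantee the nodes $\alpha_j$ are distinct.
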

\begin{proof}
We include a proof at this point as we believe it to be quite informative. Let $c_0, \dots, c_{t-1}$ denote the support of $h$ and let \smash{$h(X) = \sum_{j=0}^{t-1} h_j \cdot X^{c_j}$}. Assume that the sequence $h(\omega^0), \dots, h(\omega^{2t-1})$ is linearly recurrent with degree $r$ for some polynomial $\Lambda(X) = \sum_{\ell=0}^r \lambda_\ell X^\ell$. We can therefore express the recurrence condition as follows, for any $i \geq 0$:
\begin{gather*}
    0 = \sum_{\ell=0}^r \lambda_\ell \cdot h(\omega^{i+\ell}) \\
    \qquad= \sum_{\ell=0}^r \lambda_\ell \sum_{j=0}^{t-1} h_j \cdot \omega^{(i+\ell)c_j} \\
    \qquad= \sum_{j=0}^{t-1} \omega^{i c_j} \cdot h_j \cdot \sum_{\ell=0}^r \lambda_\ell \cdot \omega^{\ell c_j} \\
    \qquad= \sum_{j=0}^{t-1} \omega^{i c_j} \cdot h_j \cdot \Lambda(\omega^{c_j}).
\end{gather*}
We see two consequences: First, picking the polynomial $\Lambda(X) = \prod_{j=0}^{t-1} (X - \omega^{c_j})$ with roots at all powers $\omega^{c_0}, \dots, \omega^{c_{t-1}}$ satisfies this recurrence condition for all $i$. In particular, the recurrence has degree at most $t$.

Second, suppose that the minimal polynomial $\Lambda$ has degree less than $t$. Then for at least one power $\omega^{c_0}, \dots, \omega^{c_{t-1}}$, $\Lambda$ does not have a root. Let~\smash{$p(X) = \sum_{j=0}^{t-1} h_j \cdot \Lambda(\omega^{c_j}) \cdot X^{c_j}$}, and note that~$p$ is not identically zero. We can rewrite the recurrence from before as $p(\omega^i) = 0$ for all $i \in [t]$. This yields a contradiction: On the one hand $p$ is not identically zero, but on the other hand, $p$ is $t$-sparse and therefore uniquely determined by its evaluations $p(\omega^0), \dots, p(\omega^{t-1})$ all of which are zero (see \cref{lem:sparse-eval}).
\end{proof}

Moreover, it is well-know that we can compute the degree of a linear recurrence as well as its minimal polynomial. This can be achieved in quadratic-time by the well-known Berlekamp-Massey algorithm~\cite{Berlekamp68,Massey69}, but also in near-linear time by an application of the Extended Euclidian Algorithm~\cite{KaltofenL88,vonzurGathenG13}:

\begin{lemma}[Solving Linear Recurrences] \label{lem:linear-recurrence}
Let $s_0, \dots, s_{t-1} \in \Field$ for some field $\Field$. The minimal polynomial of this sequence can be computed in time $\Order(t \log^2 t)$.
\end{lemma}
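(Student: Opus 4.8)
The plan is to reduce the problem to a fast polynomial--GCD computation, by way of the classical equivalence between the Berlekamp--Massey algorithm~\cite{Massey69} and the Extended Euclidean Algorithm (see e.g.~\cite{vonzurGathenG13}). The first step is to package the sequence into the polynomial $S(X) = \sum_{i=0}^{t-1} s_i X^i$ and rewrite the recurrence condition of \cref{def:linear-recurrence} as a ``key equation.'' Concretely, if $\Lambda(X) = \sum_{\ell=0}^r \lambda_\ell X^\ell$ is a degree-$r$ recurrence for $s_0, \dots, s_{t-1}$, then its reciprocal polynomial $\Lambda^\ast(X) = X^r \Lambda(1/X)$ satisfies $\Lambda^\ast(X)\, S(X) \equiv R(X) \pmod{X^t}$ for some $R$ with $\deg R < r$ and $\Lambda^\ast(0) \neq 0$; conversely, any such pair $(\Lambda^\ast, R)$ reverses back to a recurrence. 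Thus the minimal polynomial, after reversing and rescaling to be monic, is exactly the \emph{smallest}-degree solution of this simultaneous-approximation problem.

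Next I would run the Extended Euclidean Algorithm on $R_0 = X^t$ and $R_1 = S(X)$, obtaining the remainder sequence $R_0, R_1, R_2, \dots$ of strictly decreasing degree together with the cofactors $V_i$ with $V_i(X)\, S(X) \equiv R_i(X) \pmod{X^t}$ (where $\deg V_i = t - \deg R_{i-1}$ is the standard degree identity). I would stop at the first index $i$ at which the cofactor degree overtakes the remainder degree, i.e., $\deg R_{i-1} + \deg R_i < t$; at that point $V_i$, reversed and normalized to be monic, is precisely the minimal polynomial, and $\deg V_i$ is the linear complexity of the sequence. Correctness here is exactly the Berlekamp--Massey/Euclid equivalence; the only thing it demands is careful degree bookkeeping along the Euclidean sequence, including the degenerate cases where some $R_i$ is already of small degree, where $V_i(0) = 0$, or where the whole sequence vanishes.

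For the running time, I would invoke the fast (``half-GCD'') variant of the Extended Euclidean Algorithm, which computes the full remainder sequence---or any single prescribed step along it---in $O(\mathsf{M}(t) \log t)$ operations over $\Field$, where $\mathsf{M}(t)$ is the cost of multiplying two degree-$t$ polynomials over $\Field$. With FFT-based multiplication $\mathsf{M}(t) = O(t \log t)$, giving the claimed $O(t \log^2 t)$; the final reversal and normalization cost only $O(t)$.

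I expect the main obstacle to be not any of the algorithmic building blocks---which are off-the-shelf---but verifying that the Euclidean algorithm returns the \emph{minimal} recurrence, rather than merely \emph{some} recurrence consistent with $s_0, \dots, s_{t-1}$. This amounts to tracking the degrees of the cofactors $V_i$ and ruling out any shorter recurrence, thereby recovering the guarantee of the quadratic-time Berlekamp--Massey algorithm; once the stopping rule and its accompanying degree inequalities are pinned down exactly, the rest is routine.
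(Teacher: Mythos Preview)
Your proposal is correct and matches the approach the paper has in mind: the paper does not give its own proof of this lemma but simply cites the Extended Euclidean Algorithm route~\cite{KaltofenL88,vonzurGathenG13} as a near-linear replacement for Berlekamp--Massey, which is exactly the reduction you sketch. Your outline (key-equation reformulation, half-GCD, stopping rule based on degree crossover) is the standard textbook argument behind those citations, so there is nothing to add or correct.
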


\paragraph{Putting the Pieces Together}
\cref{lem:prony} suggests the following three steps to fill in the missing implementation of step 3:
\begin{enumerate}[itemsep=\smallskipamount]
    \item Evaluate $f(\omega^0), \dots, f(\omega^{2t-1})$ and $g(\omega^0), \dots, g(\omega^{2t-1})$ (\cref{lem:sparse-eval}).
    \item Multiply $h(\omega^0) \gets f(\omega^0) \cdot g(\omega^0), \dots, h(\omega^{2t-1}) \gets f(\omega^{2t-1}) \cdot g(\omega^{2t-1})$.
    \item[3a.] Compute the minimal polynomial $\Lambda$ of the sequence $h(\omega^0), \dots, h(\omega^{2t-1})$ (\cref{lem:linear-recurrence}).
    \item[3b.] Compute the roots $\omega^{c_0}, \dots, \omega^{c_{t-1}}$ of $\Lambda$.
    \item[3c.] Compute the logarithms $c_0, \dots, c_{t-1}$ of the roots to the base $\omega$.
\end{enumerate}

It turns out that we can indeed carry the computations of Steps~3b and~3c efficiently over some appropriate chosen \emph{finite field}; see e.g.~\cite{Kaltofen10}. This involves several number-theoretic tricks which, inconveniently, cannot work for real numbers. Interestingly, with the new ideas we are about to present, even these tricks will not be necessary anymore.

\subsubsection{Adapting the Algorithm to Real Sumsets}
We emphasize that up until this point we have not invented anything new, but rather viewed the existing techniques in a different light. So let us finally get back to the original problem of computing sumsets of \emph{real} numbers. Is it conceivable that this approach succeeds for real numbers? Let~\makebox{$A, B \subseteq \Real$}. We similarly define functions $f, g, h : \Real \to \Real$ by $f(X) = \sum_{a \in A} X^a$, $g(X) = \sum_{b \in B} X^b$ and $h(X) = f(X) \cdot g(X)$ (though these are no longer polynomials). Apart from this difference in naming it can be checked that all steps are well-defined and reasonable. In particular, the sequence of evaluations $h(\omega^0), \dots, h(\omega^{2t-1})$ (for some real number $\omega \in \Real$) is still linearly recurrent with degree exactly $t = |A + B|$ and its minimal polynomial is $\Lambda(X) = \prod_{c \in A + B} (X - \omega^c)$. Thus, the same algorithm as before is correctly computing the set $A + B$.

Unfortunately, there are two major issues with this algorithm. First, Step~3b requires the computation of the real roots of the degree-$t$ polynomial $\Lambda$. As discussed before, this step could only be made efficient by working over an appropriate finite field, and these tricks are fruitless for the reals. The second problem seems even more hopeless: The real RAM model simply does not support computing the exponentiations $\omega^0, \dots, \omega^{2t-1}$ in Step~1, and taking the logarithms in Step~3c. In the following two paragraphs we present fixes for these two problems.

\paragraph{Solution 1: Coprime Factorizations}
While there are many specialized algorithms for computing the roots of polynomials over finite fields, the integers, or the rationals (see e.g.~\cite{vonzurGathenG13}), to our knowledge there is no general-purpose near-linear-time root finding algorithm for abstract fields. Another line of research has focused on \emph{approximately} computing the real roots based on numerical methods (but of course even the smallest error is intolerable in our setup). All in all, we are not aware of any algorithm that exactly computes the real roots of a polynomial in near-linear time, leaving us in an unclear situation.

Our solution is to settle for a weaker notion of factorization---a so-called \emph{coprime factorization}. The background is that factoring a polynomial or an integer into its irreducible factors (i.e., primes) is typically a hard problem. In contrast, it is a much simpler problem to factor several polynomials or integers into factors that are only required to be \emph{pairwise coprime}.\footnote{In this setup a single polynomial or integer can be trivially be factored into the singleton set containing itself. In this set all factors are vacously pairwise coprime.} Bernstein~\cite{Bernstein05} gave a near-linear-time algorithm to compute coprime factorizations over arbitrary abstract fields:

\begin{restatable}[Coprime Factorization of Polynomials~\cite{Bernstein05}]{theorem}{thmcoprimebasis} \label{thm:coprime-basis}
Given a set $\mathcal P$ of nonzero polynomials (over some field $\Field$), in time $\widetilde\Order(\sum_{P \in \mathcal P} \deg(P))$ we can compute a set of polynomials $\mathcal Q$ such that:
\begin{itemize}
    \item Each polynomial $P \in \mathcal P$ can be expressed as the product of polynomials in $\mathcal Q$ (up to scalars), and each polynomial $Q \in \mathcal Q$ appears as a factor of some polynomial $P \in \mathcal P$.
    \item The polynomials in $\mathcal Q$ are pairwise coprime (i.e., $\gcd(Q, Q') = 1$ for all distinct $Q, Q' \in \mathcal Q$).
    \item \smash{$\sum_{Q \in \mathcal Q} \deg(Q) \leq \sum_{P \in \mathcal P} \deg(P)$}.
\end{itemize}
\end{restatable}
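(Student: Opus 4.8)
The plan is to follow Bernstein's algorithm~\cite{Bernstein05}, built on top of three standard fast--polynomial--arithmetic primitives over an arbitrary field $\Field$, each running in near--linear time: (i) multiplication of two polynomials of degree at most $d$, and hence division with remainder and exact division, in time $\widetilde\Order(d)$; (ii) the half--GCD algorithm, computing $\gcd(a,b)$ together with the B\'ezout cofactors in time $\widetilde\Order(\deg a + \deg b)$~\cite{vonzurGathenG13}; and (iii) product trees and remainder trees, so that given $x$ and pairwise coprime $q_1,\dots,q_m$ one can compute all residues $x \bmod q_i$, and from these all gcds $\gcd(x,q_i)$, in time $\widetilde\Order(\deg x + \sum_i \deg q_i)$. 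The top--level structure is divide--and--conquer on the multiset $\mathcal P$: split $\mathcal P$ into two halves, recursively compute coprime bases $\mathcal Q_1,\mathcal Q_2$ of the halves, and then \emph{merge} $\mathcal Q_1$ and $\mathcal Q_2$ into a single coprime base of $\mathcal P$. (A plain left--to--right insertion of the $P_i$ one at a time does \emph{not} work: on an instance like $P_1 = q_1\cdots q_m$, $P_{i+1}=q_i$ it would do $\Theta(m)$ splits of cost $\Theta(m)$ each, which is quadratic; the balanced recursion is what avoids this.)

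The combinatorial heart is a routine that, given two polynomials $a,b$, produces a coprime base of $\{a,b\}$. Compute $g=\gcd(a,b)$; if $g$ is a unit, return $\{a,b\}$. Otherwise, strip from $a$ its entire $g$--primary part, i.e.\ the largest divisor $a_g$ of $a$ all of whose irreducible factors divide $g$, and likewise $b_g$ from $b$. The crucial point is that $a_g$ can be obtained in $\Order(\log \deg a)$ gcd/division steps by repeatedly replacing $g$ with $g^2$ and re-taking gcds, rather than in linearly many steps --- this is Bernstein's device, and it is precisely what keeps the high--multiplicity case near--linear. After this, $a/a_g$ and $b/b_g$ are coprime to $g$ and to each other (any common prime of $a/a_g$ and $b/b_g$ would divide $g$), so it remains to recursively produce a coprime base of $\{a_g,b_g\}$, both of which are now supported only on primes of $g$; one then continues refining (taking gcds and quotients, again with repeated squaring to peel off shared prime powers) until the two live polynomials are coprime. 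A potential argument on the degree of the ``shared part'' of the current pair bounds the number of rounds by $\Order(\log(\deg a + \deg b))$.

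The merge of two coprime bases is again divide--and--conquer: to insert the elements of $\mathcal Q_2$ into $\mathcal Q_1$, build the product tree of $\mathcal Q_1$ and, for each $x\in\mathcal Q_2$, walk $x$ down the remainder tree to locate the $q_i\in\mathcal Q_1$ with $\gcd(x,q_i)\ne 1$, run the pair--splitting routine on each colliding pair, and feed the newly produced pieces back into the structure. Two invariants are preserved by every step: the current collection of polynomials is pairwise coprime, and every polynomial in it divides some $P\in\mathcal P$ while every $P\in\mathcal P$ is a product (with multiplicities) of current polynomials, up to a scalar. The first two conclusions of the theorem are exactly these invariants at termination. The degree bound is then automatic: assign each $Q\in\mathcal Q$ to a witnessing $P$ it divides; the $Q$'s assigned to a fixed $P$ are pairwise coprime and all divide $P$, so their product divides $P$ and their degrees sum to at most $\deg P$, whence $\sum_{Q\in\mathcal Q}\deg Q \le \sum_{P\in\mathcal P}\deg P$.

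For the running time, the key observation is that at every level of every (interleaved) recursion the polynomials being manipulated are pairwise coprime and each divides some original $P$, so their total degree is at most $\sum_P \deg P$; since each arithmetic primitive is near--linear and all recursions have $\Order(\log)$ depth, the whole computation costs $\widetilde\Order(\sum_P \deg P)$. I expect the main obstacle to be precisely this accounting in the presence of high--multiplicity factors: making the $g$--primary--part extraction and the bookkeeping of the nested pair--split calls spawned inside the merge rigorously telescope to $\widetilde\Order(\sum_P \deg P)$ is the one genuinely delicate ingredient; the rest is a careful assembly of known fast subroutines.
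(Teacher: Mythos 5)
You take a genuinely different route from the paper's Appendix~\ref{sec:coprime}. Both proofs share the top-level balanced divide-and-conquer, but your merge uses remainder trees plus a pair-splitting subroutine with repeated squaring to strip $g$-primary parts (essentially Bernstein's original scheme), whereas the paper reduces merging two coprime bases $\mathcal{Q}_1, \mathcal{Q}_2$ to $\Order(\log)$ insertions of \emph{single} polynomials via a binary ``naming'' of $\mathcal{Q}_2$ (forming the $2\lceil\log|\mathcal{Q}_2|\rceil$ bucket products $R_{\ell,b}$), each insertion being a single product-tree/remainder-tree pass (\cref{lem:coprime-basis-extend,lem:coprime-basis-merge}). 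Your emphasis on the repeated-squaring device is well placed; in fact the one-shot set $\{\gcd(P,Q),\,Q/\gcd(P,Q) : Q \in \mathcal{Q}\} \cup \{P/\gcd(P,\prod\mathcal{Q})\}$ that \cref{lem:coprime-basis-extend} asserts to be a coprime basis is \emph{not} pairwise coprime when multiplicities collide (take $P = q^3$, $\mathcal{Q} = \{q^2\}$: the output contains both $q^2$ and $q$), so iteration of precisely the flavor you describe is needed there as well, and your sketch is arguably closer to a correct account of Bernstein. What you flag but do not close is the amortized running-time argument: in the remainder-tree merge a single $q \in \mathcal{Q}_1$ can collide with many $x \in \mathcal{Q}_2$, and bounding the total cascading pair-split work by $\widetilde\Order(\sum_P \deg P)$ via a degree potential is indeed the one genuinely delicate piece of the analysis, which you name but leave open.
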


In fact, Bernstein presents this result in a broader algebraic context that not only applies to polynomials. Since Bernstein employs a nonstandard model of computation, we include a self-contained proof of \cref{thm:coprime-basis} in \cref{sec:coprime}. We consider this a very interesting tool which, to our surprise, has only limited applications so far.

To apply this theorem we depart from the algebraic framework and attempt a more combinatorial solution. In summary, the previously outlined sumset algorithm allows us to compute in near-linear time a polynomial $\Lambda = \Lambda_{A, B}$ defined by
\begin{equation*}
    \Lambda_{A, B}(X) = \prod_{c \in A + B} (X - \omega^c).
\end{equation*}
Instead of factoring this polynomial directly, our idea is to repeat this algorithm for a polylogarithmic number of times with sets $A_1, \dots, A_m \subseteq A$ and $B_1, \dots, B_m \subseteq B$ to compute the set of polynomials $\mathcal P := \set{\Lambda_{A_1, B_1}, \dots, \Lambda_{A_m, B_m}}$. On these polynomials we apply \cref{thm:coprime-basis} to compute a coprime factorization. Let us call two factors~\makebox{$X - \omega^{c_1}$} and~\makebox{$X - \omega^{c_2}$} \emph{separated} if there some polynomial $\Lambda_{A_i, B_i}$ that contains~\makebox{$X - \omega^{c_1}$} as a factor but does not contain~\makebox{$X - \omega^{c_2}$} as a factor. The key insight is that any separated pair $X - \omega^{c_1}$ and~\makebox{$X - \omega^{c_2}$} does not appear in the same factor in $\mathcal Q$. Indeed, suppose that both $X - \omega^{c_1}$ and~\makebox{$X - \omega^{c_2}$} appear in the same factor~\makebox{$Q \in \mathcal Q$}. By the coprime condition this is the only factor $Q$ in which~\makebox{$X - \omega^{c_1}$} appears. Therefore, $Q$ must be a factor of~$\Lambda_{A_i, B_i}$. But this yields a contradiction as we assumed that $X - \omega^{c_2}$ is not a factor of~$\Lambda_{A_i, B_i}$. In light of this insight, in order to obtain the complete factorization~\makebox{$\mathcal Q = \set{X - \omega^c : c \in A + B}$}, it suffices to guarantee that all pairs of factors $X - \omega^{c_1}$, $X - \omega^{c_2}$ are separated. Luckily, this can be achieved by the following lemma:

\begin{restatable}[Random Restrictions]{lemma}{lemrandomrestrictions} \label{lem:random-restrictions}
There is a Monte Carlo algorithm that, given sets $A, B \subseteq \Real$ of size at most $n$, computes in time $\Order(n m)$ subsets $A_1, \dots, A_m \subseteq A,\, B_1, \dots, B_m \subseteq B$ such that:
\begin{itemize}
    \item For every pair $c_1, c_2 \in A + B$, there is some $i \in [m]$ with $\abs{(A_i + B_i) \cap \set{c_1, c_2}} = 1$ (with high probability).
    \item $m = \Order(\log^2 n)$.
\end{itemize}
\end{restatable}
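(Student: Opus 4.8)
\emph{Proof plan.}
I would produce the subsets by repeated \emph{random subsampling} at several densities. Fix a geometric grid $\mathcal D = \{2^{-k/4} : 0 \le k \le \lceil 4\log_2 n\rceil\}$ of $\Order(\log n)$ densities (so consecutive \emph{squared} densities differ by a factor $\sqrt 2$). For each $p \in \mathcal D$ run $T = \Theta(\log n)$ independent trials; in a single trial put each $a \in A$ into $A_i$ independently with probability $p$ and each $b \in B$ into $B_i$ independently with probability $p$. This produces $m = |\mathcal D|\cdot T = \Order(\log^2 n)$ pairs of subsets, each computable in $\Order(n)$ time, for a total of $\Order(nm)$.

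For correctness, fix $c_1 \ne c_2 \in A+B$. Let $d_j$ be the number of representations $c_j = a+b$ with $a\in A,\,b\in B$, and identify the representations of $c_j$ with the edges $M_j = \{(a, c_j - a) : a\in A,\ c_j - a \in B\}$ of a bipartite graph on $A\uplus B$. Each $M_j$ is a matching of size $d_j$, and $M_1, M_2$ are edge-disjoint (a common edge would force $c_1 = c_2$); assume $d_1 \le d_2$. In a trial at density $p$, let $X_j$ count the representations of $c_j$ whose two endpoints are both sampled; then $c_j \in A_i + B_i$ iff $X_j \ge 1$, and since $M_j$ is a matching, $X_j$ is a sum of $d_j$ independent $\mathrm{Bernoulli}(p^2)$ indicators. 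The crux is the claim: \emph{for some $p\in\mathcal D$, a single trial at density $p$ puts exactly one of $c_1, c_2$ into $A_i + B_i$ with probability at least a universal constant $\gamma > 0$.} Granting this, a fixed pair stays unseparated across all $T$ trials at the good density with probability $\le (1-\gamma)^T = n^{-\Omega(1)}$, so (taking the constant in $T$ large and union-bounding over the $\le \binom{|A+B|}{2} \le n^4$ pairs) all pairs are separated with high probability; note the algorithm need not know which density is good for a given pair, since it tries all of them.

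To prove the claim I would split on $d_1/d_2$. \textbf{Case 1 ($d_1 \le d_2/5$, so $d_2 \ge 5$):} at most $2d_1 \le \tfrac25 d_2$ representations of $c_2$ share an endpoint with an edge of $M_1$, so at least $\tfrac35 d_2$ of them are \emph{private}, meaning both endpoints are sampled independently of everything affecting $X_1$. Choosing $p\in\mathcal D$ with $p^2 d_2 \in [1,2]$, the event ``some private representation of $c_2$ survives'' has probability $\ge 1 - (1-p^2)^{3d_2/5} = \Omega(1)$ and is independent of ``$X_1 = 0$'', which has probability $(1-p^2)^{d_1} \ge (1-p^2)^{d_2} = \Omega(1)$; their conjunction forces exactly one of $c_1, c_2$ into $A_i + B_i$, with probability $\Omega(1)$. \textbf{Case 2 ($d_1 > d_2/5$):} choose $p \in \mathcal D$ with $p^2 d_2 = \Theta(\mu)$ for a small absolute constant $\mu$. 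Then $\Pr[X_1 \ge 1,\ X_2 = 0] = \Pr[X_2 = 0] - \Pr[X_1 = 0 \wedge X_2 = 0]$, where $\Pr[X_2 = 0] = (1-p^2)^{d_2}$ and $\Pr[X_1 = 0 \wedge X_2 = 0]$ is exactly the probability that the random sampled vertex set is \emph{independent} in $M_1 \cup M_2$, a graph with $e := d_1 + d_2$ edges and maximum degree $\le 2$. A second-order Bonferroni bound over the edges gives $\Pr[\text{not independent}] \ge e p^2 - e p^3 - \tfrac12 e^2 p^4$, whence, using also $(1-p^2)^{d_2} \ge 1 - d_2 p^2$ and $e - d_2 = d_1$,
\begin{equation*}
    \Pr[X_1 \ge 1,\ X_2 = 0] \ \ge\ d_1 p^2 - e p^3 - \tfrac12 e^2 p^4 .
\end{equation*}
With $d_1 > d_2/5$, $e \le 2d_2$ and $p^2 d_2 = \Order(\mu)$ this is at least $\tfrac{\mu}{5} - \Order(\mu^{3/2})$, a positive universal constant for $\mu$ small enough — and, crucially, it does not degrade as $d_1, d_2$ grow.

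I expect Case 2 to be the main obstacle. The difficulty is that ``$c_1 \in A_i + B_i$'' and ``$c_2 \in A_i + B_i$'' are positively correlated (they are governed by overlapping coin flips), so one cannot just multiply probabilities; the point of the computation is to certify that at the right density they are not \emph{too} correlated, i.e.\ that one still witnesses a surviving representation of $c_1$ and none of $c_2$ with constant probability. Making the Bonferroni error terms provably dominated by the main term $d_1 p^2$ for all profiles $(d_1, d_2)$ — in particular for small $d_2$, where the estimates are loosest — is the part that needs care; the rest is routine bookkeeping.
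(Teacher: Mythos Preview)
Your argument is correct, but it takes a noticeably different route from the paper's. You sample $A$ and $B$ \emph{symmetrically} at a common rate $p$ drawn from a geometric grid, and then separate $c_1,c_2$ by a direct second-moment/Bonferroni analysis on the bipartite graph $M_1\cup M_2$ of representations. The paper instead samples \emph{asymmetrically}: it subsamples only $A$ at rate $2^{-\ell}$ for a random $\ell\in\{0,\dots,\lceil\log|A|\rceil\}$, and always subsamples $B$ at rate $\tfrac12$. The point of the asymmetry is that at the level $\ell\approx\log|W|$, where $W=A\cap(\{c_1,c_2\}-B)$ is the set of witnesses in $A$, with constant probability $A'$ contains a \emph{unique} witness $a$; once that happens the analysis is a one-liner (either $a$ witnesses only one of $c_1,c_2$, or it witnesses both via distinct $b_1,b_2\in B$, and in either case the rate-$\tfrac12$ subsample of $B$ separates with probability $\tfrac12$). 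This yields a success probability of $\Omega(1/\log n)$ per trial and hence the same $m=\Order(\log^2 n)$, but sidesteps your case split and the inclusion--exclusion bookkeeping entirely. Your approach has the mild conceptual advantage of treating $A$ and $B$ symmetrically, and the Bonferroni computation you sketch is sound (max degree $\le 2$ in $M_1\cup M_2$ keeps the adjacent-pair term at $O(ep^3)$); the paper's approach buys a substantially shorter and cleaner proof.
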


The insight behind \cref{lem:random-restrictions} is a rather simple subsampling procedure: By subsampling $A$ with rates $2^{-0}, 2^{-1}, \dots, 2^{-\log |A|}$ we obtain with good probability a subset $A'$ such that there is a unique witness $a \in A'$ of $\set{c_1, c_2}$; that is, there exist $b_1 \in B$ such that $a + b_1 = c_1$ or there exists $b_2 \in B$ such that $a + b_2 = c_2$. If only one of these events happens, then we have successfully separated $c_1$ and $c_2$. Otherwise, we subsample $B$ with rate $\frac12$ to eliminate exactly one of the two events with constant probability. For more details see \cref{sec:sumset:sec:random-restrictions}.

\paragraph{Solution 2: Prony's Method for Other Operators}
Recall that the second challenge is to eliminate the need to compute exponentiations and logarithms (as these operations are not supported by the real RAM model). It turns out that Prony's interpolation method works for operators other than ``polynomial evaluation''. That is, we can interpolate a polynomial $f$ not only given the evaluations $f(\omega^0), \dots, f(\omega^{t-1})$ (which is perhaps the most natural representation), but also from evaluations of its derivatives \smash{$\frac{d^0 f}{dx^0}(1), \dots, \frac{d^{2t-1} f}{dx^{2t-1}}(1)$}~\cite{Huang23}, or even from evaluations of more general operators~\cite{StampferP20}.

With these works in mind, let us rework some of the previous setup to better match the real RAM model. Let $A, B \subseteq \Real$, let $f, g : \Real \to \Real$ denote the indicator functions of $A$ and $B$ (satisfying that $f(a) = 1$ if $a \in A$ and $f(a) = 0$ otherwise). We write $\supp(f) = \set{x \in \Real : f(x) \neq 0}$. Let~$h$ denote the discrete convolution of $f$ and $g$, defined by $h(c) = \sum_{a + b = c} f(a) \cdot g(b)$ (which is well-defined since $f, g$ have only finite support). For integers $i \geq 0$, we define the \emph{sum operator} $\Sigma^i (f)$ as the quantity
\begin{equation*}
    \Sigma^i(f) = \sum_{x \in \Real} x^i \cdot f(x).
\end{equation*}
Note that $\Sigma^i(f)$ can indeed be computed in the real RAM model as all powers $i$ are integers and can be computed efficiently by repeated squaring. Moreover, analogously to before it turns out that for any function $h : \Real \to \Real$ with support size $t$, the sequence $\Sigma^0(h), \dots, \Sigma^{2t-1}(h)$ is linearly recurrent with degree $t$, and its minimal polynomial is
\begin{equation*}
    \Lambda(X) = \prod_{a \in \supp(f)} (X - a).
\end{equation*}
This suggests the following algorithm: We evaluate $\Sigma^0(f), \dots, \Sigma^{2t-1}(f)$ and $\Sigma^0(g), \dots, \Sigma^{2t-1}(g)$, somehow multiply these evaluations to obtain $\Sigma^0(h), \dots, \Sigma^{2t-1}(h)$, and then apply the modified Prony's method to compute and factor $\Lambda(X) = \prod_{a \in \supp(f)} (X - a)$. This alternative setup, while somewhat less intuitive, indeed avoids exponentiations and logarithms. It comes, however, with additional challenges such as how to efficiently evaluate $\Sigma^0(f), \dots, \Sigma^{2t-1}(f)$, and how to combine these evaluations to efficiently obtain $\Sigma^0(h), \dots, \Sigma^{2t-1}(h)$. All of these challenges have natural algebraic solutions which we provide in the technical \cref{sec:sumset}.
\section{Preliminaries} \label{sec:preliminaries}
We write $[n] = \set{0, \dots, n-1}$ and typically index objects starting from $0$.

\paragraph{Machine Model}
Throughout we work in the standard real RAM model. In this model we assume the standard word RAM model with word size $\Theta(\log n)$ (where $n$ is the input size of the problem) and equipped with special machine cells holding real numbers. The model supports basic arithmetic operations (${+}, {-}, {\cdot}, {/}$) and comparisons (${=}, {\leq}, {\geq}, {<}, {>}$) on reals in constant time. Moreover, we can convert in constant time any integer (stored in a word cell) into a real (stored in a real cell). We refer to~\cite{EvdHM20} for a more formal definition.

We frequently deal with sets of real numbers. As we cannot use hash maps in this setting (as there is no way to hash from reals to integer indices), we implicitly assume that these sets are stored either as sorted arrays or balanced binary trees. The log-factors incurred to the running times will never matter for us.

\paragraph{Randomization}
We say that an event occurs with high probability if it occurs with probability at least $1 - n^{-c}$, where $n$ is the problem size and $c$ is an arbitrarily chosen constant. We say that a problem is in Monte Carlo time~$T$ if it can be solved by a Monte Carlo algorithm running in time~$T$ and succeeding with high probability. Analogously, we say that it is in Las Vegas time~$T$ if there is a Las Vegas (i.e., zero-error) algorithm running in expected time $T$, and terminating in time $T$ with high probability.
\section{Real Sumset in Near-Linear Time} \label{sec:sumset}
In this section we design a near-linear time algorithm to compute the sumset $A + B$ for two sets of real numbers $A, B \subseteq \Real$. In fact, we solve the more general problem of computing the discrete convolution of two real functions. Specifically, throughout this section we consider functions~\makebox{$f, g : \Real \to \Real$} \emph{with finite support} (i.e., for which $\supp(f) = \set{x \in \Real : f(x) \neq 0}$ is a finite set). Computationally, we represent functions $f$ with finite support as a finite list of its nonzero entries~\makebox{$(x, f(x))$}.

We define the \emph{discrete convolution} $f \conv g : \Real \to \Real$ as the function defined by
\begin{equation*}
    (f \conv g)(z) = \sum_{\substack{x, y \in \Real\\x + y = z}} f(x) \cdot g(y).
\end{equation*}
Note that $f \conv g$ again has finite support. Convolutions are tightly connected to sumsets in the following way: For a finite set $A \subseteq \Real$, let $\One_A : \Real \to \Real$ denote the indicator function of $A$ (which is $1$ on $A$ and $0$ elsewhere). Then $A + B = \supp(\One_A \conv \One_B)$, as can easily be verified from the definition.

We structure this section as follows: In \cref{sec:sumset:sec:prony} we tailor Prony's method to our need in the context of real numbers up to some missing pieces. In \cref{sec:sumset:sec:sumset-size} we demonstrate how to compute the support size of the convolution, and in \cref{sec:sumset:sec:random-restrictions} we show how to replace the polynomial factorization step.

\subsection{A Modified Prony's Method} \label{sec:sumset:sec:prony}
We start with the definition of the ``sum operator'', and show that it satisfies a ``product rule''.
\begin{definition}[Sum Operator] \label{def:op}
Let $f : \Real \to \Real$ (with finite support) and $i \in \Nat$. We define
\begin{equation*}
    \Sigma^i(f) = \sum_{x \in \Real} x^i \cdot f(x).
\end{equation*}
\end{definition}

\begin{lemma}[Product Rule] \label{lem:op-product-rule}
Let $f, g \in \Real \to \Real$ (with finite support). Then, for any $k \in \Nat$ we have that
\begin{equation*}
    \Sigma^k(f \conv g) = \sum_{\substack{i, j \in \Nat\\i+j=k}} \binom{k}{i} \cdot \Sigma^i(f) \cdot \Sigma^j(g).
\end{equation*}
\end{lemma}
\begin{proof}
The proof is a straightforward calculation involving the binomial theorem:
\begin{align*}
    \Sigma^k(f \conv g) &= \sum_{z \in \Real} z^k \cdot (f \conv g)(z) \\
    &= \sum_{z \in \Real} z^k \sum_{\substack{x, y \in \Real\\x + y = z}} f(x) \cdot g(y) \\
    &= \sum_{x, y \in \Real} (x + y)^k \cdot f(x) \cdot g(y) \\
    &= \sum_{\substack{i, j \in \Nat\\i + j = k}} \binom{k}{i} \cdot \sum_{x, y \in \Real} x^i \cdot f(x) \cdot y^j \cdot g(y) \\
    &= \sum_{\substack{i, j \in \Nat\\i + j = k}} \binom{k}{i} \cdot \Sigma^i(f) \cdot \Sigma^j(g). \qedhere
\end{align*}
\end{proof}

Next, we prove that, for any function $f$, we can quickly evaluate the sum operator and interpolate $f$ from its evaluations (\cref{lem:eval-op}) and that we can maintain the evaluations of the sum operator under taking convolutions (\cref{lem:conv-op}). 

\begin{lemma}[Evaluation and Interpolation] \label{lem:eval-op}
Let $f : \Real \to \Real$ be $t$-sparse. Then there are deterministic algorithms in time $\Order(t \log^2 t)$ for the following two tasks:
\begin{itemize}
    \item Given $f$, we can evaluate $\Sigma^0(f), \dots, \Sigma^{t-1}(f)$.
    \item Given $\Sigma^0(f), \dots, \Sigma^{t-1}(f)$ and $\supp(f)$, we can (uniquely) interpolate $f$.
\end{itemize}
\end{lemma}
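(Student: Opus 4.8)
The plan is to recognize both tasks as linear-algebra operations with a (transposed) Vandermonde matrix, and then invoke the standard near-linear-time machinery for such matrices. Write $\supp(f) = \{a_0, \dots, a_{t-1}\}$ and let $v_j = f(a_j)$, so that $f$ is represented by the pairs $(a_j, v_j)$. For the evaluation direction, observe directly from \cref{def:op} that
\begin{equation*}
    \Sigma^i(f) = \sum_{j=0}^{t-1} a_j^i \cdot v_j,
\end{equation*}
so the vector $(\Sigma^0(f), \dots, \Sigma^{t-1}(f))^\top$ is exactly $V^\top \mathbf v$, where $V = (a_j^i)_{0 \le i,j \le t-1}$ is the $t \times t$ Vandermonde matrix generated by the nodes $a_0, \dots, a_{t-1}$ and $\mathbf v = (v_0, \dots, v_{t-1})^\top$. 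Hence the first task is a transposed-Vandermonde matrix–vector product, which can be carried out in time $\Order(t \log^2 t)$ by the classical algorithms of~\cite{KaltofenL88,CannyKY89,Li00,Pan01} (this is precisely the same black box underlying \cref{lem:sparse-eval}; the only difference is that the nodes are $a_0, \dots, a_{t-1}$ rather than the powers $\omega^0, \dots, \omega^{t-1}$, which does not affect the argument).

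For the interpolation direction, we are given $\supp(f)$ — equivalently the nodes $a_0, \dots, a_{t-1}$ — together with the values $\Sigma^0(f), \dots, \Sigma^{t-1}(f)$, and we must recover the coefficients $v_0, \dots, v_{t-1}$. By the identity above this amounts to solving the linear system $V^\top \mathbf v = (\Sigma^0(f), \dots, \Sigma^{t-1}(f))^\top$. Since the $a_j$ are pairwise distinct (they are distinct elements of $\supp(f)$), the Vandermonde matrix $V$ is invertible, so the solution $\mathbf v$ is unique — this gives the claimed uniqueness of the interpolant. The system can again be solved in time $\Order(t \log^2 t)$ using the transposed-Vandermonde solvers of~\cite{KaltofenL88,CannyKY89,Li00,Pan01} (concretely, one can use fast multipoint evaluation and interpolation: $V^\top \mathbf v$ is the coefficient-to-evaluation map of a polynomial at the $a_j$'s up to transposition, and the transposed problem is solved by the transposition principle in the same time bound).

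I expect the main obstacle here to be essentially bookkeeping rather than a genuine difficulty: one must state precisely which "transposed Vandermonde" primitive is being used and check that the version in the literature is phrased over an arbitrary field (so that it applies over $\Real$ in the real RAM model, with no reliance on roots of unity, FFT-friendly moduli, or bit-tricks). This is exactly the point already flagged in the discussion following \cref{lem:sparse-eval}, so the cleanest write-up is to reduce both bullets to that same cited fact, noting that the nodes being arbitrary reals (as opposed to a geometric progression) is harmless. A minor additional remark worth including is that the divisions required by the Vandermonde solver are all by differences $a_i - a_j$ of distinct nodes and are therefore well-defined, so the algorithm stays within the basic arithmetic operations of the real RAM model.
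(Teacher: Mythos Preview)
Your proposal is correct and follows essentially the same approach as the paper: both recognize the two tasks as a matrix--vector product and a linear solve with the transposed Vandermonde matrix on the support points, and both appeal to the same near-linear-time black boxes~\cite{KaltofenL88,CannyKY89,Li00,Pan01}. Your additional remarks about invertibility (distinct nodes) and well-definedness of divisions in the real RAM model are fine but go slightly beyond what the paper spells out.
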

\begin{proof}
Let $a_0, \dots, a_{t-1}$ denote the support of $f$, and consider the following obvious identity:
\begin{equation*}
	\begin{bmatrix}
		\Sigma^0(f) \\
		\Sigma^1(f) \\
		\vdots \\
		\Sigma^{t-1}(f)
	\end{bmatrix}
	=
	\begin{bmatrix}
		1 & 1 & \cdots & 1 \\
		a_0 & a_1 & \cdots & a_{t-1} \\
		\vdots & \vdots & \ddots & \vdots \\
		a_0^{t-1} & a_1^{t-1} & \cdots & a_{t-1}^{t-1}
	\end{bmatrix}
	\begin{bmatrix}
		f(a_0) \\
		f(a_1) \\
		\vdots \\
		f(a_{t-1})
	\end{bmatrix}
\end{equation*}
Both statements follow from the fact that linear algebraic operations involving transposed Vandermonde matrices run in time $\Order(t \log^2 t)$~\cite{KaltofenL88,CannyKY89,Li00,Pan01}. For the first item we compute a matrix-vector product, and for the second item we solve a linear equation system with indeterminates $f(a_0), \dots, f(a_{t-1})$.
\end{proof}

\begin{lemma}[Convolution of Sum Operators] \label{lem:conv-op}
Let $f, g : \Real \to \Real$ (with finite support). Given $\Sigma^0(f), \dots, \Sigma^{t-1}(f), \Sigma^0(g), \dots, \Sigma^{t-1}(g)$, we can compute $\Sigma^0(f \conv g), \dots, \Sigma^{t-1}(f \conv g)$ in time $\Order(t \log t)$.
\end{lemma}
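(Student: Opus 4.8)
The plan is to reduce the claimed identity to a single truncated polynomial multiplication by passing to exponential generating functions. Write $h = f \conv g$. By the product rule (\cref{lem:op-product-rule}) we have $\Sigma^k(h) = \sum_{i+j=k} \binom{k}{i} \Sigma^i(f)\Sigma^j(g)$, and dividing through by $k!$ turns the binomial convolution into an ordinary one:
\begin{equation*}
    \frac{\Sigma^k(h)}{k!} = \sum_{\substack{i,j \in \Nat\\ i+j=k}} \frac{\Sigma^i(f)}{i!}\cdot\frac{\Sigma^j(g)}{j!}.
\end{equation*}
Hence if we set $F(X) = \sum_{i=0}^{t-1} \frac{\Sigma^i(f)}{i!} X^i$ and $G(X) = \sum_{j=0}^{t-1} \frac{\Sigma^j(g)}{j!} X^j$, then for every $0 \le k \le t-1$ the coefficient of $X^k$ in the product $F(X)G(X)$ equals $\Sigma^k(h)/k!$. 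Truncating $F$ and $G$ to degree $t-1$ is harmless here, since the coefficients of $FG$ of index at most $t-1$ only receive contributions from indices $i,j \le k \le t-1$.

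This immediately gives the algorithm. First precompute the factorials $0!, 1!, \dots, (t-1)!$ using $t-1$ multiplications; these are exact integers, so the divisions $\Sigma^i(f)/i!$ and $\Sigma^j(g)/j!$ are carried out exactly in the real RAM model, producing the coefficient vectors of $F$ and $G$ in time $\Order(t)$. Then compute the product $F(X)G(X)$, discarding all monomials of degree $\ge t$, by fast polynomial multiplication in time $\Order(t \log t)$. Finally, multiply the degree-$k$ coefficient of $FG$ by $k!$ for each $k = 0,\dots,t-1$ to recover $\Sigma^0(h), \dots, \Sigma^{t-1}(h)$, using another $\Order(t)$ multiplications. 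The total running time is $\Order(t\log t)$.

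The one step that deserves care, and the one I expect to be the (mild) obstacle, is the polynomial-multiplication step in the real RAM model: unlike over $\Complex$ with prescribed roots of unity, we cannot evaluate trigonometric functions, so the textbook complex FFT is not directly applicable. However, the product of two polynomials of degree $t-1$ over an abstract field can be computed with $\widetilde\Order(t)$ arithmetic operations via the ring-theoretic variants of the FFT (which manufacture ``virtual'' roots of unity in an auxiliary extension ring) — the very same fast-arithmetic toolkit that underlies the transposed-Vandermonde operations already invoked in \cref{lem:eval-op}. Everything else is routine bookkeeping.
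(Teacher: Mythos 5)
Your proof is correct and follows essentially the same route as the paper: divide the binomial convolution through by $k!$ to turn it into an ordinary convolution, multiply the two degree-$(t-1)$ polynomials with coefficients $\Sigma^i(f)/i!$ and $\Sigma^j(g)/j!$, and rescale by $k!$ at the end. You also correctly anticipated the one subtle point — that the textbook complex FFT is unavailable on a real RAM since roots of unity are not computable — and the paper handles this exactly the way you suggest, by invoking FFT-type polynomial multiplication over general rings (citing Kaminski and Cantor--Kaltofen) to get the $\Order(t\log t)$ bound.
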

\begin{proof}
The key idea is to use the product rule (\cref{lem:op-product-rule}):
\begin{equation*}
    \Sigma^k(f \conv g) = \sum_{\substack{i, j \in \Nat\\i+j=k}} \binom{k}{i} \cdot \Sigma^i(f) \cdot \Sigma^j(g).
\end{equation*}
Using that $\binom{k}{i} = \frac{k!}{i! \, j!}$, this identity can be rewritten as
\begin{equation*}
    \frac{\Sigma^k(f \conv g)}{k!} = \sum_{\substack{i, j \in \Nat\\i+j=k}} \frac{\Sigma^i(f)}{i!} \cdot \frac{\Sigma^j(g)}{j!}.
\end{equation*}
We can compute the sequences \smash{$(\Sigma^i(f) / i!)_{i=0}^{t-1}$} and \smash{$(\Sigma^j(g) / j!)_{j=0}^{t-1}$} in time $\Order(t)$. Then we view these sequences as the coefficients of two degree-$t$ polynomials over $\Real$; note that~\smash{$(\Sigma^k(f \conv g) / k!)_{k=0}^{t-1}$} corresponds exactly to the first coefficients the product polynomial. Thus, by computing the product of two degree-$t$ polynomials over $\Real$ we can recover the desired values $(\Sigma^k(f \conv g))_{k=0}^{t-1}$. This task can be solved in time $\Order(t \log t)$ by the Fast Fourier Transform (even over general rings, see~\cite{Kaminski88,CantorK91}).
\end{proof}

\begin{lemma}[Prony's Method for the Sum Operator] \label{lem:prony-op}
Let $f : \Real \to \Real$ be $t$-sparse. Then the sequence $\Sigma^0(f), \dots, \Sigma^{2t-1}(f)$ is linearly recurrent with degree exactly $t$, and its minimal polynomial is
\begin{equation*}
    \Lambda(X) = \prod_{a \in \supp(f)} (X - a).
\end{equation*}
\end{lemma}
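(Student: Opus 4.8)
The plan is to replay the proof of \cref{lem:prony} with the sum operator $\Sigma^i$ playing the role of ``evaluation at $\omega^i$''; the product rule (\cref{lem:op-product-rule}) is not needed here, since this is a statement about a single function. Write $\supp(f) = \set{a_0, \dots, a_{t-1}}$ and $f_j = f(a_j) \neq 0$, so that $\Sigma^i(f) = \sum_{j=0}^{t-1} f_j \cdot a_j^i$ for every $i \geq 0$. The computation at the heart of everything is that, for any polynomial $\Lambda(X) = \sum_{\ell=0}^r \lambda_\ell X^\ell$ and any $i \geq 0$,
\[
  \sum_{\ell=0}^{r} \lambda_\ell \cdot \Sigma^{i+\ell}(f)
  \;=\; \sum_{\ell=0}^r \lambda_\ell \sum_{j=0}^{t-1} f_j \cdot a_j^{i+\ell}
  \;=\; \sum_{j=0}^{t-1} a_j^{i} \cdot f_j \cdot \sum_{\ell=0}^{r} \lambda_\ell \cdot a_j^{\ell}
  \;=\; \sum_{j=0}^{t-1} a_j^{i} \cdot f_j \cdot \Lambda(a_j),
\]
which is the sum-operator analogue of the identity used in the proof of \cref{lem:prony}.

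From this identity I would extract the two usual consequences. First, taking $\Lambda(X) = \prod_{j=0}^{t-1} (X - a_j)$ makes every factor $\Lambda(a_j)$ vanish, so the right-hand side is $0$ for all $i \geq 0$; in particular the length-$2t$ sequence $\Sigma^0(f), \dots, \Sigma^{2t-1}(f)$ satisfies the degree-$t$ recurrence of \cref{def:linear-recurrence}, and this monic degree-$t$ polynomial is a candidate minimal polynomial. Second, suppose towards a contradiction that the minimal polynomial $\Lambda$ has degree $r < t$. Since $\Lambda$ then has at most $r < t$ roots, there is an index $j^\star$ with $\Lambda(a_{j^\star}) \neq 0$. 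Define $p : \Real \to \Real$ by $p(a_j) = f_j \cdot \Lambda(a_j)$ on the support $\set{a_j : \Lambda(a_j) \neq 0} \ni a_{j^\star}$ and $p(x) = 0$ elsewhere; then $p$ is $t$-sparse and not identically zero. Unwinding the displayed identity, the recurrence condition for $\Lambda$ (which, on a length-$2t$ sequence, is imposed for all $0 \le i < 2t - r$, hence in particular for $0 \le i \le t-1$) says precisely that $\Sigma^i(p) = 0$ for $i = 0, \dots, t-1$. By the interpolation statement of \cref{lem:eval-op}, a $t$-sparse function is uniquely determined by $\Sigma^0, \dots, \Sigma^{t-1}$ together with its support, and the zero function matches these data, so $p \equiv 0$ --- a contradiction. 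Hence the minimal polynomial has degree exactly $t$ and equals $\prod_{a \in \supp(f)} (X - a)$.

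The argument is essentially routine, so I do not anticipate a deep obstacle; the only points needing care are bookkeeping ones. I would double-check that the recurrence condition of \cref{def:linear-recurrence}, which for a degree-$r$ recurrence on a length-$L$ sequence is imposed for $0 \le i < L - r$, really does yield the needed $t$ equations $\Sigma^0(p) = \dots = \Sigma^{t-1}(p) = 0$ when $L = 2t$ and $r < t$ (indeed $t-1 < 2t - r$), and that $p$ is genuinely nonzero --- which is exactly where it matters that $\Lambda$, having degree strictly below $t$, must miss at least one of the $t$ distinct points $a_j$. If one prefers to avoid invoking \cref{lem:eval-op}, the final contradiction can instead be obtained directly: the values $(p(a_j))_j$ solve a transposed Vandermonde system on the distinct nodes $a_j$ with right-hand side $0$, and such a system forces every $p(a_j)$ to vanish.
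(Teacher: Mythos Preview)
Your proposal is correct and follows essentially the same route as the paper's own proof: the same core identity $\sum_\ell \lambda_\ell\,\Sigma^{i+\ell}(f) = \sum_j a_j^i f(a_j)\,\Lambda(a_j)$, the same two consequences (the product $\prod_j (X-a_j)$ works, and a strictly smaller-degree $\Lambda$ yields a nonzero auxiliary function whose first $t$ sum-operator values vanish), and the same appeal to \cref{lem:eval-op} for the contradiction. Your added bookkeeping on the index range and the Vandermonde alternative are fine but not needed beyond what the paper already does.
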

\begin{proof}
Let $a_1, \dots, a_t$ denote the support of $f$. If the sequence $\Sigma^0(f), \dots, \Sigma^{2t-1}(f)$ is linearly recurrent with some degree $r$ for some polynomial $\Lambda(X) = \sum_{\ell=0}^r \lambda_\ell \cdot X^\ell$, then we can express the recurrence condition as follows, for any $0 \leq i < 2t - r$:
\begin{gather*}
    0 = \sum_{\ell=0}^r \lambda_\ell \cdot \Sigma^{i+\ell}(f) \\
    \qquad= \sum_{\ell=0}^r \lambda_\ell \cdot \sum_{j=1}^t a_j^{i+\ell} \cdot f(a_j) \\
    \qquad= \sum_{j=1}^t a_j^i \cdot f(a_j) \cdot \sum_{\ell=0}^r \lambda_\ell \cdot a_j^\ell \\
    \qquad= \sum_{j=1}^t a_j^i \cdot f(a_j) \cdot \Lambda(a_j).
\end{gather*}
Observe the following two consequences: First, picking the polynomial $\Lambda(X) = \prod_{j=1}^t (X - a_j)$ with roots at all support elements $a_1, \dots, a_t$ satisfies this recurrence condition for all $i$. In particular, the recurrence has degree at most $t$.

Second, suppose that the minimal polynomial $\Lambda$ has degree $r < t$. Then for at least one support element $a_i$, $\Lambda$ does not have a root. Let $g : \Real \to \Real$ be the function supported on $\set{a_1, \dots, a_t}$ defined by~\makebox{$g(a_j) = f(a_j) \cdot \Lambda(a_j)$}. Clearly $g$ is at most $t$-sparse and not identically zero. Moreover, we can rewrite the previous recurrence relation as $\Sigma^i(g) = 0$, for all $0 \leq i < t$. This yields a contradiction: On the one hand $g$ is not identically zero, but on the other hand, $g$ is $t$-sparse and therefore uniquely determined by its evaluations $\Sigma^0(g), \dots, \Sigma^{t-1}(g)$, all of which are zero (see the second item in \cref{lem:eval-op}).
\end{proof}

The following statement combines the previously established lemmas:

\begin{lemma} \label{lem:sumset-minimal-poly}
Let $A, B \subseteq \Real$ be finite sets and let $t \geq |A + B|$. There is a deterministic algorithm running in time $\Order(t \log^2 t)$ computing the polynomial
\begin{equation*}
    \Lambda_{A+B}(X) = \prod_{c \in A + B} (X - c).
\end{equation*}
\end{lemma}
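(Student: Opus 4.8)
The plan is to instantiate the modified Prony machinery of this subsection with the indicator functions of $A$ and $B$, and then to read off $\Lambda_{A+B}$ as the minimal polynomial of a suitable linear recurrence. Concretely, set $f = \One_A$ and $g = \One_B$ and let $h := f \conv g$, so that (as noted in the preliminaries) $\supp(h) = A + B$. Fixing any $b_0 \in B$ gives $A + b_0 \subseteq A + B$ of size $|A|$, and symmetrically $|B| \leq |A+B|$; hence $f$, $g$ and $h$ are all $t$-sparse, since $|A|, |B|, |A+B| \leq t$. (If $A = \emptyset$ or $B = \emptyset$ then $A + B = \emptyset$ and $\Lambda_{A+B} = 1$, which is output in constant time; assume from now on both sets are nonempty.)

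First I would invoke \cref{lem:eval-op} with the sparsity bound $2t$ in place of $t$ — legitimate since a $t$-sparse function is in particular $2t$-sparse — to compute $\Sigma^0(f), \dots, \Sigma^{2t-1}(f)$ and $\Sigma^0(g), \dots, \Sigma^{2t-1}(g)$ in time $\Order(t \log^2 t)$. Feeding these $2t$ evaluations of each function into \cref{lem:conv-op} (again with parameter $2t$, which needs no sparsity hypothesis) yields $\Sigma^0(h), \dots, \Sigma^{2t-1}(h)$ in time $\Order(t \log t)$. Finally I would apply \cref{lem:linear-recurrence} to the length-$2t$ sequence $\Sigma^0(h), \dots, \Sigma^{2t-1}(h)$ to compute its minimal polynomial in time $\Order(t \log^2 t)$; summing the three steps gives the claimed $\Order(t \log^2 t)$ bound.

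It remains to argue correctness, i.e.\ that the minimal polynomial returned by the last step is exactly $\Lambda_{A+B}$. Let $t_0 := |A + B| \leq t$. By \cref{lem:prony-op}, the sequence $\Sigma^0(h), \dots, \Sigma^{2t_0 - 1}(h)$ is linearly recurrent with degree exactly $t_0$ and minimal polynomial $\prod_{c \in A+B}(X - c) = \Lambda_{A+B}$. Since the sequence we actually handed to \cref{lem:linear-recurrence} has length $2t \geq 2t_0$ and is a prefix of the same infinite linearly recurrent sequence, and a recurrence of degree $t_0$ is uniquely pinned down by (and stable under extending beyond) any $2t_0$ consecutive terms, the minimal polynomial of the length-$2t$ sequence is again $\Lambda_{A+B}$. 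This stability point — that we may safely run the recurrence solver on the known bound $2t$ rather than on the possibly-unknown exact length $2|A+B|$ — is the only step requiring a moment's care; everything else is a routine composition of \cref{lem:eval-op,lem:conv-op,lem:prony-op}. A minor secondary point worth spelling out is the sparsity bookkeeping $|A|, |B| \leq |A+B| \leq t$, which is what lets us treat $f$, $g$ and $h$ uniformly as $t$-sparse and invoke the earlier lemmas with a single parameter.
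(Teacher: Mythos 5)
Your proof is correct and follows essentially the same route as the paper's: set $f = \One_A$, $g = \One_B$, compose \cref{lem:eval-op}, \cref{lem:conv-op}, \cref{lem:linear-recurrence}, and read off $\Lambda_{A+B}$ via \cref{lem:prony-op}. The only additions are a couple of sanity checks (sparsity bookkeeping, the empty-set edge case, and the observation that the minimal polynomial is stable under extending the sequence from length $2|A+B|$ to length $2t$) which the paper leaves implicit; these are all sound.
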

\begin{proof}
Let $f = \One_A$ and $g = \One_B$. Using \cref{lem:eval-op} we evaluate the sequences~\smash{$\Sigma^0(f), \dots, \Sigma^{2t-1}(f)$} and~\smash{$\Sigma^0(g), \dots, \Sigma^{2t-1}(g)$}. Then, using \cref{lem:conv-op} we compute \smash{$\Sigma^0(f \conv g), \dots, \Sigma^{2t-1}(f \conv g)$}. Finally, we solve this linear recurrence (by \cref{lem:linear-recurrence}) and compute its minimal polynomial, which, by \cref{lem:prony-op}, happens to be exactly
\begin{equation*}
    \Lambda_{A + B}(X) = \prod_{c \in \supp(f \conv g)} (X - c) = \prod_{c \in A + B} (X - c).
\end{equation*}
Lemmas~\ref{lem:eval-op},~\ref{lem:conv-op} and~\ref{lem:linear-recurrence} each run in time $\Order(t \log^2 t)$, so the claimed time bound follows. 
\end{proof}

\subsection{Computing the Sumset Size} \label{sec:sumset:sec:sumset-size}
One shortcoming of \cref{lem:sumset-minimal-poly} is that it requires as an input a good approximation of the size $|A + B|$. It is easy to approximate this size by exponential search and a ``Schwartz-Zippel-like evaluation'', but it turns out that in our setting we can even compute $|A + B|$ \emph{exactly} and \emph{deterministically}. This algorithm is based on the following lemma which was first proposed to be useful in this context in~\cite{BringmannFN22}. Since the proof is short and insightful, we include it.

\begin{lemma}[Sparsity Tester for Nonnegative Functions, \cite{Karlin68,Lindsay89}] \label{lem:sparsity-tester}
Let $f : \Real \to \Real_{\geq 0}$ (with finite support). For any $s \geq 0$, $f$ has sparsity more than $s$ if and only if
\begin{equation*}
    \det
    \begin{bmatrix}
        \Sigma^0(f) & \Sigma^1(f) & \cdots & \Sigma^{s-1}(f) \\
        \Sigma^1(f) & \Sigma^2(f) & \cdots & \Sigma^s(f) \\
        \vdots & \vdots & \ddots & \vdots \\
        \Sigma^{s-1}(f) & \Sigma^s(f) & \cdots & \Sigma^{2s-2}(f) \\
    \end{bmatrix}
    > 0.
\end{equation*}
\end{lemma}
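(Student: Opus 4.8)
The statement claims: for $f:\mathbb R\to\mathbb R_{\geq 0}$ with finite support, and any $s\geq 0$, the support size exceeds $s$ iff the $s\times s$ Hankel matrix $H_s = (\Sigma^{i+j}(f))_{0\leq i,j\leq s-1}$ has positive determinant. Let me sketch how I'd prove this.

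The plan is to factor the Hankel matrix as a Vandermonde-weighted-Vandermonde product, exactly as in the proof of Lemma~\ref{lem:eval-op}. Write $\mathrm{supp}(f)=\{a_1,\dots,a_r\}$ with $r$ the true sparsity, and let $V_{s,r}$ be the $s\times r$ (generalized) Vandermonde matrix with $(i,k)$-entry $a_k^{i}$ for $0\leq i\leq s-1$, $1\leq k\leq r$, and let $D=\mathrm{diag}(f(a_1),\dots,f(a_r))$. The key identity is
\begin{equation*}
H_s \;=\; V_{s,r}\, D\, V_{s,r}^{\mathsf T},
\end{equation*}
which follows immediately from $\Sigma^{i+j}(f)=\sum_{k=1}^r a_k^{i+j} f(a_k) = \sum_k (a_k^i)\,f(a_k)\,(a_k^j)$. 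This is the same computation used throughout Section~\ref{sec:sumset:sec:prony}.

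Now I would split into the two directions. First suppose $r\geq s$. Then $V_{s,r}$ has rank $s$ (it contains an $s\times s$ Vandermonde submatrix on any $s$ of the distinct nodes $a_k$, which is invertible since the $a_k$ are distinct), so $V_{s,r}^{\mathsf T}$ has rank $s$ and hence $H_s=V_{s,r}DV_{s,r}^{\mathsf T}$ has rank $s$, i.e. is nonsingular; and since $D$ is positive definite (all $f(a_k)>0$ because $a_k\in\mathrm{supp}(f)$), $H_s$ is positive definite, so $\det H_s>0$. Conversely suppose $r<s$, i.e. $r\leq s-1$. Then $V_{s,r}$ has only $r<s$ columns, so $\mathrm{rank}(H_s)\leq \mathrm{rank}(V_{s,r})\leq r<s$, hence $H_s$ is singular and $\det H_s=0$. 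Combining, $\det H_s>0 \iff r\geq s \iff r>s-1$; reindexing $s\mapsto s$ against "$f$ has sparsity more than $s$" — careful here: the matrix in the statement is $s\times s$ indexed $0,\dots,s-1$, and "sparsity more than $s$" means $r\geq s+1$. Let me recheck: the displayed matrix has entries $\Sigma^0,\dots,\Sigma^{2s-2}$, so it is $s\times s$... actually the last row is $\Sigma^{s-1},\dots,\Sigma^{2s-2}$, which is $s$ entries, so it's an $s\times s$ matrix $H_s$ in my notation. By the above, $\det H_s>0\iff r\geq s$. But the claim says $\det>0 \iff r>s$. So either the intended matrix is $(s+1)\times(s+1)$, or I should instead apply the dichotomy to $H_{s+1}$: indeed "$f$ has sparsity more than $s$" $\iff r\geq s+1 \iff \det H_{s+1}>0$ — so the displayed matrix should run $\Sigma^0$ through $\Sigma^{2s}$; I'd state the factorization with $H_{s+1}$ and note $r>s\iff r\geq s+1\iff \det H_{s+1}>0$, matching whatever exact indexing the display uses. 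The substantive content is entirely the rank dichotomy above.

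The main obstacle — really the only subtlety — is the positive-definiteness step: one needs both that the nodes $a_k$ are \emph{distinct} real numbers (so Vandermonde submatrices are nonsingular over $\mathbb R$) \emph{and} that the weights $f(a_k)$ are strictly positive, which is exactly where the hypothesis $f\geq 0$ (hence $f(a_k)>0$ on the support) is used — over general sign patterns $V^{\mathsf T}DV$ could be singular even at full column rank. Everything else is the Vandermonde factorization already exploited in Lemma~\ref{lem:eval-op} plus an elementary rank bound; no heavy machinery is needed.
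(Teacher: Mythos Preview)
Your approach is correct and is essentially the same as the paper's, just packaged differently: the paper computes the quadratic form directly as $\lambda^{\mathsf T} M \lambda = \sum_{x} f(x)\,\Lambda(x)^2$ (with $\Lambda(X)=\sum_{i\in[s]}\lambda_i X^i$) and then argues positive definiteness via a root count, whereas you first write $H_s = V_{s,r} D V_{s,r}^{\mathsf T}$ and read off positive definiteness from the rank of $V_{s,r}$ and positivity of $D$. These are the same computation, since $\lambda^{\mathsf T} V_{s,r} D V_{s,r}^{\mathsf T}\lambda = \sum_k f(a_k)\big(\sum_i \lambda_i a_k^i\big)^2$.

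Your off-by-one worry is justified and is not a flaw in your argument: with the displayed $s\times s$ matrix one gets $\det H_s>0 \iff r\geq s$, i.e.\ ``sparsity at least $s$'' rather than ``more than $s$''. The paper's own proof has the matching slip (it speaks of a ``degree-$s$ polynomial $\Lambda$'' when $\Lambda=\sum_{i\in[s]}\lambda_i X^i$ has degree at most $s-1$). The discrepancy is harmless for the only application (\cref{lem:sumset-size}), which uses the test as a stopping criterion in a doubling search.
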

\begin{proof}
Let~\smash{$M = [\Sigma^{i+j}(f)]_{i, j \in [s]}$} denote the matrix in the statement. We first prove that $M$ is a positive semi-definite matrix. To this end let $\lambda \in \Real^s$ be arbitrary; we prove that $\lambda^T M \lambda \geq 0$:
\begin{align*}
    \lambda^T M \lambda
    &= \sum_{i, j \in [s]} \lambda_i \cdot \lambda_j \cdot \Sigma^{i+j}(f) \\
    &= \sum_{i, j \in [s]} \lambda_i \cdot \lambda_j \cdot \sum_{x \in \Real} x^{i+j} \cdot f(x) \\
    &= \sum_{x \in \Real} f(x) \cdot \parens*{\sum_{i \in [s]} \lambda_i x^i} \cdot \parens*{\sum_{j \in [s]} \lambda_j x^j} \\
    &= \sum_{x \in \Real} f(x) \cdot \parens*{\sum_{i \in [s]} \lambda_i x^i}^2 \\
    &\geq 0.
\end{align*}
Having established that $M$ is always positive semi-definite, we prove that it is positive definite if and only if $f$ has sparsity more than $s$. For a vector $\lambda \in \Real^s$ as before, let \smash{$\Lambda(X) = \sum_{i \in [s]} \lambda_i X^i$}. The previous calculation reveals that $\lambda^T M \lambda = \sum_{x \in \Real} f(x) \cdot \Lambda(x)^2$. Thus, on the one hand, if $f$ is at most $s$-sparse we can pick $\Lambda$ to be any degree-$s$ polynomial with roots at $\supp(f)$ so that $\lambda^T M \lambda = 0$. On the other hand, if $f$ has sparsity more than $s$ then any nonzero degree-$s$ polynomial $\Lambda$ must be nonzero at some support element $x \in \supp(f)$ and thus $\lambda^T M \lambda > 0$ for all $\lambda \in \Real^s$.
\end{proof}

\begin{lemma}[Computing the Sumset Size] \label{lem:sumset-size}
There is a deterministic algorithm that, given sets \smash{$A, B \subseteq \Real$}, computes the size $t = |A + B|$ in time $\Order(t \log^2 t)$.
\end{lemma}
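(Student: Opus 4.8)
The plan is to combine the sparsity tester (\cref{lem:sparsity-tester}) with an exponential-search strategy, using the machinery already developed in \cref{sec:sumset:sec:prony}. The key observation is that $f \conv g$ with $f = \One_A$ and $g = \One_B$ is a \emph{nonnegative} function (all coefficients are nonnegative integers), so \cref{lem:sparsity-tester} applies to it directly with $f$ replaced by $f \conv g$. Thus, to test whether $t = |A+B| = \lvert \supp(f \conv g)\rvert$ exceeds a given threshold $s$, it suffices to evaluate the Hankel determinant $\det [\Sigma^{i+j}(f \conv g)]_{i,j \in [s]}$ and check whether it is strictly positive. By \cref{lem:eval-op} we can compute $\Sigma^0(f),\dots,\Sigma^{2s-2}(f)$ and $\Sigma^0(g),\dots,\Sigma^{2s-2}(g)$ in time $\Order(s \log^2 s)$, and by \cref{lem:conv-op} we can then obtain $\Sigma^0(f \conv g),\dots,\Sigma^{2s-2}(f \conv g)$ in time $\Order(s \log s)$. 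The determinant of an $s \times s$ Hankel matrix can be computed in time $\Order(s \log^2 s)$ (again via the fast transposed-Vandermonde / structured-matrix toolkit of~\cite{KaltofenL88,CannyKY89,Li00,Pan01}; alternatively any $\Order(s^\omega)$ or even $\Order(s^2)$-time determinant routine would already suffice for the final bound after absorbing into $\widetilde\Order$ once the search converges). Hence each threshold test at level $s$ costs $\Order(s \log^2 s)$.

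The second step is the search itself. I would run the test for $s = 1, 2, 4, 8, \dots$, doubling each time, until the first value $s = 2^k$ for which the Hankel determinant vanishes; at that point we know $2^{k-1} < t \le 2^k$, so $t = \Theta(2^k)$, and the total cost so far is dominated by the last test, namely $\Order(2^k \log^2 2^k) = \Order(t \log^2 t)$ by a geometric-series argument. To pin down $t$ \emph{exactly}, I would then binary-search for the precise value within the interval $(2^{k-1}, 2^k]$: for a candidate $s$ in this range, the test again costs $\Order(s \log^2 s) = \Order(t \log^2 t)$, and $\Order(\log t)$ such tests suffice, giving total time $\Order(t \log^3 t) = \Order(t \log^2 t)$ after adjusting the log exponent — or, to stay literally within $\Order(t \log^2 t)$, one can instead note that $t$ is determined as the unique $s$ with $\det[\Sigma^{i+j}(f\conv g)]_{i,j\in[s]} > 0 = \det[\Sigma^{i+j}(f\conv g)]_{i,j\in[s+1]}$, and that computing the sequence of leading principal minors of a single $(2^k+1)$-dimensional Hankel matrix can be done in one pass in time $\Order(2^k \log^2 2^k)$ via the fast LU-type factorization of structured matrices, avoiding the extra $\log$ factor. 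Either way the stated bound follows.

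The main obstacle I anticipate is the last one: getting the \emph{exact} value of $t$ within the claimed $\Order(t \log^2 t)$ rather than $\Order(t \log^3 t)$, which forces one to extract all leading principal minor signs of a fixed Hankel matrix simultaneously rather than recomputing a determinant per binary-search step. This is where one needs the structured-matrix results in their sharper form (online/streaming computation of the minor sequence, or equivalently the Berlekamp–Massey-style incremental rank computation on the Hankel system). A secondary subtlety, though an easy one, is confirming that $f \conv g$ really is nonnegative so that \cref{lem:sparsity-tester} applies — this is immediate since $\One_A, \One_B \ge 0$ pointwise and convolution preserves nonnegativity — and that the arithmetic is exact, which it is in the real RAM model since no roots, logs, or exponentials are involved, only the sum operators $\Sigma^i$ (integer powers via repeated squaring), FFT over $\Real$, and determinant/LU arithmetic. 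If one is willing to tolerate the harmless extra logarithmic factor, the binary-search version is completely routine and I would present that as the default argument, remarking on the single-pass refinement as an optimization.
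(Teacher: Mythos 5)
Your doubling phase matches the paper exactly: set $f = \One_A$, $g = \One_B$, compute $\Sigma^0(f\conv g),\dots,\Sigma^{2s-2}(f\conv g)$ via \cref{lem:eval-op,lem:conv-op}, evaluate the $s\times s$ Hankel determinant, and use \cref{lem:sparsity-tester} (valid since $f\conv g\geq 0$) to decide whether $|A+B|>s$, doubling $s$ until the test fails. Your running-time analysis of this phase by geometric series is also the same.

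Where you diverge is the final exact determination of $t$. The paper does not binary-search and does not compute leading principal minors: once it holds an upper bound $s$ with $t\leq s\leq 2t$, it simply invokes \cref{lem:sumset-minimal-poly} once with that $s$, obtains the polynomial $\Lambda_{A+B}(X)=\prod_{c\in A+B}(X-c)$ via the already-proven linear-recurrence machinery (\cref{lem:linear-recurrence}), and reads off $t=\deg\Lambda_{A+B}$. This is a single $\Order(t\log^2 t)$ call to a lemma already in hand, and requires no further justification. Your two alternatives are both defensible but weaker: the binary search costs an extra $\log$ factor (and, to be clear, $\Order(t\log^3 t)$ is \emph{not} $\Order(t\log^2 t)$ — adjusting the exponent changes the claim), while the single-pass leading-principal-minor computation is the right idea in spirit (it is, at bottom, the same Berlekamp–Massey-type incremental Hankel rank computation that \cref{lem:linear-recurrence} performs) but as stated it leans on a ``fast LU of a structured matrix returning all principal minors'' that you would need to carefully source and handle singular leading submatrices for, which is exactly what the recurrence solver abstracts away. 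Conceptually the two approaches are equivalent; practically, the paper's route is cleaner because it reuses \cref{lem:sumset-minimal-poly} rather than reaching for a sharper structured-matrix primitive than the proof actually needs.
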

\begin{proof}
As a first step, we compute a $2$-approximation $s$ of $t$. The idea is to set $s \gets \max\set{|A|, |B|}$ and to repeatedly double $s$ until the condition $\frac s2 \leq |A + B| \leq s$ is satisfied. In each step we are required to test whether $|A + B| \leq s$. To this end, we use the sparsity tester from the previous lemma. Specifically, we let $f = \One_A$ and $g = \One_A$, compute \smash{$\Sigma^0(f), \dots, \Sigma^{2t-1}(f)$} and \smash{$\Sigma^0(g), \dots, \Sigma^{2t-1}(g)$} by \cref{lem:eval-op}, and then compute \smash{$\Sigma^0(f \conv g), \dots, \Sigma^{2t-1}(f \conv g)$} by \cref{lem:conv-op}. We compute the determinant of the Hankel matrix~\smash{$[\Sigma^{i+j}(f \conv g)]_{i, j \in [s]}$} in time $\Order(s \log^2 s)$~\cite{Pan01}. By \cref{lem:sparsity-tester}, this determinant is positive if and only $f \conv g$ has sparsity more than $s$.

Running this procedure for any fixed value of $s$ requires time $\Order(s \log^2 s)$ by Lemmas~\ref{lem:eval-op},~\ref{lem:conv-op} and~\cite{Pan01}. Since we double $s$  in each step until $s = \Theta(t)$, the total running time so far is bounded by $\Order(t \log^2 t)$.

Finally, we apply \cref{lem:sumset-minimal-poly} with $s \geq t$ to compute the polynomial $\prod_{c \in A + B} (X - c)$; we can trivially read off $t$ as the degree of this polynomial. The running time of this final step is also bounded by $\Order(t \log^2 t)$ by \cref{lem:sumset-minimal-poly}. 
\end{proof}

\subsection{Factoring by Random Restrictions} \label{sec:sumset:sec:random-restrictions}
To complete the sumset algorithm it remains to factor the polynomial $\Lambda_{A + B}$ into the linear factors~\makebox{$X - c$}, for $c \in A + B$. At this point however, rather than relying on more algebraic machinery, we switch to a more combinatorial approach. Consider the following lemma:

\lemrandomrestrictions*
\begin{proof}
Let $L = \ceil{\log |A|} + 1$ and consider this random experiment: Sample $\ell \in \set{0, \dots, L - 1}$ uniformly at random. Then take $A'$ to be a uniform subsample of $A$ with rate $2^{-\ell}$, and let $B'$ be a uniform subsample of $B$ with rate $\frac12$. We claim that the pair $(A', B')$ satisfies the lemma statement with good probability:

\begin{claim} \label{lem:random-restrictions:clm:single}
For any $c_1, c_2 \in A + B$, we have $|(A' + B') \cap \set{c_1, c_2}| = 1$ with probability at least~$\frac{1}{16L}$.
\end{claim}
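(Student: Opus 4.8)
The plan is to single out one ``good'' sampling level $\ell^\star$ and argue that the whole probability $\tfrac{1}{16L}$ is already accounted for by the event $\ell = \ell^\star$, which happens with probability $1/L$. First I would set up the right combinatorial notion: call $a \in A$ a \emph{witness} (for the fixed pair $c_1, c_2$) if $c_1 - a \in B$ or $c_2 - a \in B$, let $W \subseteq A$ be the set of witnesses, and put $w = \abs{W} \ge 1$ (the set is nonempty since $c_1, c_2 \in A + B$). The key structural observation is that if $A'$ contains \emph{exactly one} witness $a$, then for each $i \in \set{1,2}$ we have $c_i \in A' + B'$ if and only if $c_i - a \in B'$ (in particular $c_i \notin A' + B'$ whenever $c_i - a \notin B$): any representation $a' + b' = c_i$ with $a' \in A' \subseteq A$ and $b' \in B' \subseteq B$ would force $a'$ to be a witness in $A'$, hence $a' = a$. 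So once a unique witness survives in $A'$, whether $c_1, c_2 \in A' + B'$ depends only on the rate-$\tfrac12$ subsampling of at most two fixed elements of $B$.

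Next I would pick the good level $\ell^\star = \ceil{\log w}$, so that $2^{\ell^\star - 1} < w \le 2^{\ell^\star}$ and $0 \le \ell^\star \le \ceil{\log\abs{A}} = L - 1$; hence $\Pr[\ell = \ell^\star] = 1/L$. Conditioned on $\ell = \ell^\star$, each element of $A$ is kept in $A'$ independently with probability $p = 2^{-\ell^\star} \in (\tfrac{1}{2w}, \tfrac{1}{w}]$, so the number of witnesses surviving in $A'$ is binomial with parameters $w$ and $p$, and the probability that it equals $1$ is $w p (1-p)^{w-1}$; using $wp > \tfrac12$ and $(1-p)^{w-1} \ge (1 - \tfrac1w)^{w-1} \ge e^{-1}$, this is at least $\tfrac{1}{2e}$.

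Finally I would condition further on $A'$ containing a unique witness $a$ and split into two cases. If $a$ witnesses only one of $c_1, c_2$, say only $c_1 - a \in B$, then $c_2 \notin A' + B'$ automatically while $c_1 \in A' + B'$ iff the single element $c_1 - a$ is sampled into $B'$, probability $\tfrac12$. If $a$ witnesses both, then $b_1 := c_1 - a$ and $b_2 := c_2 - a$ are \emph{distinct} elements of $B$ (distinct because $c_1 \ne c_2$), and $\abs{(A' + B') \cap \set{c_1,c_2}} = 1$ iff exactly one of $b_1, b_2$ is sampled into $B'$, probability $2 \cdot \tfrac12 \cdot \tfrac12 = \tfrac12$. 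Since the $B$-subsampling is independent of $\ell$ and of the $A$-subsampling, multiplying the three factors gives $\Pr[\abs{(A'+B') \cap \set{c_1,c_2}} = 1] \ge \tfrac1L \cdot \tfrac{1}{2e} \cdot \tfrac12 = \tfrac{1}{4eL} \ge \tfrac{1}{16L}$, since $4e < 16$.

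The calculations here are routine; the one place to stay careful is the conditioning bookkeeping, namely verifying that ``exactly one witness survives in $A'$'' genuinely reduces the fate of $c_1$ and $c_2$ to the independent coin flips on the one or two relevant elements of $B$ (the structural observation of the first paragraph), and that those elements are distinct in the two-witness case so that a rate-$\tfrac12$ subsampling separates them with probability $\tfrac12$ rather than $0$.
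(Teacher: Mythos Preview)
Your proof is correct and follows essentially the same approach as the paper: isolate a unique witness in $A'$ by choosing the level $\ell^\star = \ceil{\log w}$, then use the rate-$\tfrac12$ subsampling of $B$ to separate $c_1$ and $c_2$. The only minor differences are that the paper treats the case $w=1$ separately (picking $\ell=0$, which coincides with your $\ell^\star$ anyway) and uses the slightly looser bound $(1-\tfrac1w)^w \ge \tfrac14$ to get $\tfrac18$ for the unique-witness probability, whereas you use $(1-\tfrac1w)^{w-1} \ge e^{-1}$ to get $\tfrac{1}{2e}$; both yield at least $\tfrac{1}{16L}$ after the final factor of $\tfrac12$.
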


Suppose for a moment that the claim is correct. Then we can easily derive the lemma by letting each pair $(A_i, B_i)$ in the output be sampled as in the previous paragraph. By picking $m = 160 L \log n = \Order(\log^2 n)$, each pair $c_1, c_2$ fails with probability at most
\begin{equation*}
	\parens*{1 - \frac{1}{16 L}}^m \leq \exp\parens*{-\frac{m}{16L}} \leq n^{-10}.
\end{equation*}
Taking a union bound over the at most $|A + B|^2 \leq n^4$ choices of $c_1$ and $c_2$, the lemma statement holds with probability at least $n^{-6}$. (Of course, the constant $6$ can be picked arbitrarily larger.) Moreover, the running of this algorithm is $\Order(n m)$. This completes the proof of \cref{lem:random-restrictions}, but it remains to prove the intermediate claim.
\end{proof}

\begin{proof}[Proof of \cref{lem:random-restrictions:clm:single}]
Fix any pair $c_1, c_2 \in A + B$ and let~\makebox{$W = A \cap (\set{c_1, c_2} - B)$} denote the set of \emph{witnesses} of $c_1$ and $c_2$. As a first step, we prove that ~\makebox{$|W \cap A'| = 1$} holds with good probability. Indeed, if $|W| = 1$, then with probability at least $\frac{1}{L}$ we sample $\ell = 0$, and thus have $A' = A$ and~\makebox{$|W \cap A'| = 1$}. Suppose that $|W| \geq 2$ from now on. Then with probability $\frac{1}{L}$ we sample exactly~\makebox{$\ell = \ceil{\log |W|}$}. We condition on this event. Then, observing that $A' \cap W$ is a uniform subsample of $W$ with rate $2^{-\ell}$, we can bound
\begin{equation*}
	\Pr[|W \cap A'| = 1] = |W| \cdot (1 - 2^{-\ell})^{|W| - 1} \cdot 2^{-\ell} \geq |W| \cdot (1 - \tfrac{1}{|W|})^{|W|} \cdot \tfrac{1}{2|W|} \geq \tfrac{1}{8}.
\end{equation*}
Here we used that $f(x) = (1 - \frac{1}{x})^{x}$ is an increasing function with $f(2) = \frac14$. In summary, with probability at least $\frac{1}{8L}$, the set $A'$ satisfies that $|W \cap A'| = 1$. Condition on this event from now on, and let $a \in W \cap A'$ denote the unique witness.

We distinguish two cases: On the one hand, suppose that $a$ is not a witness of both~$c_1$ and~$c_2$. Without loss of generality assume that $a$ is a witness of $c_1$ (and not of $c_2$), and let $b_1 \in B$ be such that~\makebox{$a + b_1 = c_1$}. Then with probability~$\frac12$ we include $b_1$ into $B'$. If this happens, then indeed $(A' + B') \cap \set{c_1, c_2} = \set{c_1}$ and the claim follows. So assume, on the other hand, that $a$ is a witness of both $c_1$ and $c_2$. Let $b_1, b_2 \in B$ be such that $a + b_1 = c_1$ and $a + b_2 = c_2$. Then with probability exactly~$\frac12$ we include exactly one of the two elements $b_1, b_2$ into $B'$, i.e.~$|B' \cap \set{b_1, b_2}| = 1$. We conclude that $|(A' + B') \cap \set{c_1, c_2}| = 1$ as claimed.
\end{proof}

\thmsumset*
\begin{proof}
The algorithm is as follows: First, we precompute the size $t = |A + B|$ by \cref{lem:sumset-size}. Then we apply \cref{lem:random-restrictions} to compute subsets $A_1, \dots, A_m \subseteq A$ and $B_1, \dots, B_m \subseteq B$. For each $i \in [m]$, we apply \cref{lem:sumset-minimal-poly} (with the size bound $t$) to construct the polynomial
\begin{equation*}
    \Lambda_i(X) := \Lambda_{A_i + B_i}(X) = \prod_{c \in A_i + B_i} (X - c).
\end{equation*}
Finally, we apply \cref{thm:coprime-basis} to simultaneously factor the polynomials $\mathcal P = \set{\Lambda_1, \dots, \Lambda_m}$ into a collection of pairwise coprime polynomials $\mathcal Q$. We claim that we can simply read off the set $A + B$ from the factor polynomials $\mathcal Q$:

\begin{claim} \label{thm:sumset:clm:factors}
With high probability, $\mathcal Q = \set{X - c : c \in A + B}$.
\end{claim}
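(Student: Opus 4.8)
The plan is to condition on the high-probability event of \cref{lem:random-restrictions} and then establish the equality $\mathcal Q = \set{X - c : c \in A + B}$ \emph{deterministically}. I would invoke \cref{lem:random-restrictions} for every pair $c_1, c_2 \in A + B$, including the case $c_1 = c_2$ (which the lemma also covers---its union bound already ranges over all $\abs{A + B}^2$ ordered pairs): taking $c_1 = c_2 = c$ shows that every $c \in A + B$ lies in some $A_i + B_i$, while taking $c_1 \neq c_2$ shows that each such pair is \emph{separated}, i.e.\ some $\Lambda_i$ is divisible by exactly one of $X - c_1$ and $X - c_2$. Before starting the argument I would normalize the members of $\mathcal Q$ to be monic and discard the constant one; this does not affect the guarantees of \cref{thm:coprime-basis} and leaves $\mathcal Q$ a set of distinct, monic, non-constant, pairwise-coprime polynomials. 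Throughout I use that each $\Lambda_i = \prod_{c \in A_i + B_i}(X - c)$ with $A_i + B_i \subseteq A + B$ splits into \emph{distinct} linear factors over $\Real$ (hence is squarefree), so that $X - c$ divides $\Lambda_i$ precisely when $c \in A_i + B_i$.

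For the inclusion $\mathcal Q \subseteq \set{X - c : c \in A + B}$ I would fix $Q \in \mathcal Q$. By the second bullet of \cref{thm:coprime-basis}, $Q$ divides some $\Lambda_i$; since $\Real[X]$ is a UFD and the $X - c$ are distinct irreducibles, $Q = \prod_{c \in S}(X - c)$ for a nonempty $S \subseteq A_i + B_i \subseteq A + B$. Suppose toward a contradiction that $\abs{S} \geq 2$ and pick distinct $c_1, c_2 \in S$. Separation gives an index $j$ with, say, $c_1 \in A_j + B_j$ but $c_2 \notin A_j + B_j$, so $X - c_1$ divides $\Lambda_j$. Writing $\Lambda_j$ as a scalar times a product of members of $\mathcal Q$ (first bullet of \cref{thm:coprime-basis}) and invoking unique factorization, the prime $X - c_1$ divides one such member $Q'$; but $X - c_1$ also divides $Q$, so if $Q \neq Q'$ then $X - c_1$ divides $\gcd(Q, Q')$, contradicting pairwise coprimality. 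Hence $Q = Q'$ appears in the factorization of $\Lambda_j$, so $Q$ divides $\Lambda_j$, and then the root $c_2$ of $Q$ would lie in $A_j + B_j$---a contradiction. Therefore $\abs{S} = 1$, i.e.\ $Q = X - c$ for some $c \in A + B$.

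For the reverse inclusion I would fix $c \in A + B$; by the first property $c \in A_i + B_i$ for some $i$, so $X - c$ divides $\Lambda_i$, and expressing $\Lambda_i$ as a scalar times a product of members of $\mathcal Q$ forces (again by unique factorization) that $X - c$ divides some $Q \in \mathcal Q$. By the previous paragraph $Q = X - c'$ with $c' \in A + B$, and comparing monic linear polynomials gives $c' = c$, whence $X - c \in \mathcal Q$. This proves the claimed equality on the event of \cref{lem:random-restrictions}, which occurs with high probability. (For the Las Vegas guarantee of \cref{thm:sumset} one can then add that, since $\abs{A + B}$ is known exactly and deterministically from \cref{lem:sumset-size}, success is recognizable---it occurred iff $\abs{\mathcal Q} = \abs{A + B}$ and every member of $\mathcal Q$ is linear---so the randomized step may simply be repeated on failure.)

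The main obstacle I anticipate is not any single step---each is a short UFD/gcd argument---but keeping the bookkeeping around \cref{thm:coprime-basis} honest, since it promises factorizations only \emph{up to scalars} and only that each $Q$ divides \emph{some} input polynomial. One therefore has to (i) normalize to monic and drop units at the outset, and (ii) phrase both inclusions purely through divisibility of the irreducible factors $X - c$ rather than through literal equality of products. The reason this all works is that the $\Lambda_i$ (unlike $\Lambda_{A+B}$ itself) are squarefree and split completely over $\Real$, which turns the opaque statement that $\mathcal Q$ is the coprime basis of $\set{\Lambda_1, \dots, \Lambda_m}$ into a transparent combinatorial statement about which indices $i$ satisfy $c \in A_i + B_i$---exactly the information supplied by \cref{lem:random-restrictions}.
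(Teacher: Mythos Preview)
Your proof is correct and follows essentially the same coprimality/separation argument as the paper. You are in fact more careful than the paper on two points the paper glosses over: you explicitly establish the reverse inclusion (by invoking the $c_1=c_2$ case of \cref{lem:random-restrictions} to ensure every $c\in A+B$ lies in some $A_i+B_i$) and you handle the monic/unit normalization of~$\mathcal Q$.
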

\begin{proof}[Proof of \cref{thm:sumset:clm:factors}]
Let $\mathcal Q = \set{Q_1, \dots, Q_{|\mathcal Q|}}$. First, note that since each polynomial $Q_j$ appears as a factor of some polynomial $\Lambda_i$, each such polynomial has the form
\begin{equation*}
    Q_j(X) = \prod_{c \in C_i} (X - c),
\end{equation*}
for some set $C_j \subseteq \Real$. In fact, since any polynomial $\Lambda_i$ has roots only in $A + B$, we have $C_j \subseteq A + B$. The proof would be complete if all sets~$C_j$ were singletons, so suppose that there is some set~$C_j$ containing two distinct elements $c_1, c_2 \in C_i \subseteq A + B$. By \cref{lem:random-restrictions}, with high probability there is some index $i$ such that $|(A_i + B_i) \cap \set{c_1, c_2}| = 1$. Assume without loss of generality that~\makebox{$c_1 \in A_i + B_i$} and $c_2 \not\in A_i + B_i$. Now consider the factorization of $\Lambda_i$ into a product of polynomials from $\mathcal Q$. Clearly,~$Q_j$ does not occur in the factorization (as $\Lambda_i$ is missing the factor $X - c_2$). However, this means that there is another polynomial in $\mathcal Q$ containing the factor $X - c_1$, contradicting the assumption that all polynomials in $\mathcal Q$ are pairwise coprime.
\end{proof}

Let us come back to the proof of \cref{thm:sumset} and complete the running time analysis. Precomputing $t$ by \cref{lem:sumset-size} takes time $\Order(t \log^2 t)$. The call to \cref{lem:random-restrictions} takes time $\Order(m t)$, where $m = \Order(\log^2 t)$, and the subsequent calls to \cref{lem:sumset-minimal-poly} take time $\Order(m t \log^2 t) = \Order(t \log^4 t)$. Finally, running \cref{thm:coprime-basis} takes time $\widetilde\Order(m t)$, so we can all in all bound the running time by $\widetilde\Order(t)$.
\end{proof}

\begin{theorem}[Real Convolution] \label{thm:real-conv}
There is a Las Vegas algorithm that, given $f, g : \Real \to \Real_{\geq 0}$ (with finite support), computes the convolution $f \conv g$ in expected time $\widetilde\Order(|\supp(f \conv g)|)$ (and with high probability).
\end{theorem}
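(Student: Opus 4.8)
The plan is to reduce the task to two pieces that have already been solved: computing the \emph{support} of $f \conv g$, and then recovering its \emph{values} on that support. The crucial first observation is that, since $f$ and $g$ are nonnegative, the convolution has no cancellation: for every $z$ we have $(f \conv g)(z) = \sum_{x + y = z} f(x) g(y) \geq 0$, and this quantity is strictly positive exactly when there are $x \in \supp(f)$ and $y \in \supp(g)$ with $x + y = z$. Hence $\supp(f \conv g) = \supp(f) + \supp(g)$. So the first step is to extract the finite sets $A := \supp(f)$ and $B := \supp(g)$ from the input, temporarily forget the values of $f$ and $g$, and invoke \cref{thm:sumset} to compute $S := A + B = \supp(f \conv g)$. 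If $f$ or $g$ is identically zero the output is the zero function and we are done; otherwise $|A|, |B| \leq |A + B| = |S|$, so this step (and all later steps) run in time $\widetilde\Order(|S|) = \widetilde\Order(|\supp(f \conv g)|)$, and since \cref{thm:sumset} is Las Vegas so is this step.

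It then remains to compute $(f \conv g)(c)$ for each of the $t := |S|$ points $c \in S$. Here I would reuse the modified Prony machinery of \cref{sec:sumset:sec:prony}, but only its ``easy half'', since the support is now known. Concretely, apply the first item of \cref{lem:eval-op} to $f$ and to $g$ (both are $t$-sparse, as $|A|, |B| \leq t$) to deterministically compute the moment sequences $\Sigma^0(f), \dots, \Sigma^{t-1}(f)$ and $\Sigma^0(g), \dots, \Sigma^{t-1}(g)$ in time $\Order(t \log^2 t)$. Then use \cref{lem:conv-op} (the product rule plus one FFT) to combine them into $\Sigma^0(f \conv g), \dots, \Sigma^{t-1}(f \conv g)$ in time $\Order(t \log t)$. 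Finally, since $f \conv g$ is $t$-sparse with support equal to the known set $S$, the second item of \cref{lem:eval-op} interpolates $f \conv g$ uniquely from these $t$ moments together with $S$ in time $\Order(t \log^2 t)$; this just amounts to solving one transposed Vandermonde system. Adding up, the whole algorithm runs in time $\widetilde\Order(t) = \widetilde\Order(|\supp(f \conv g)|)$.

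I do not expect a genuine obstacle: apart from the short no-cancellation observation, everything is a direct composition of \cref{thm:sumset}, \cref{lem:eval-op} and \cref{lem:conv-op}, and the only randomness enters through \cref{thm:sumset}, so the resulting algorithm inherits its Las Vegas guarantee. The mildly delicate point to get right is the bookkeeping that all input sizes ($|A|$, $|B|$, the number of required moments, the sparsity of $f \conv g$) are bounded by the output size $t$, so that every invoked subroutine stays within $\widetilde\Order(t)$; and the trivial edge case where one of the functions vanishes. If one prefers an explicit correctness check over relying on the Las Vegas contract, one can additionally verify the reported support by evaluating the convolution at those points through the same moment arithmetic, but this is not needed.
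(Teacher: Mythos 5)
Your proof is correct and matches the paper's own argument essentially step for step: compute the support $\supp(f) + \supp(g)$ via \cref{thm:sumset} (using nonnegativity of $f$ to rule out cancellation), then evaluate the moment sequences via \cref{lem:eval-op} and \cref{lem:conv-op}, and finally interpolate via the second item of \cref{lem:eval-op}. The only cosmetic difference is that the paper writes out $2t$ moments where $t$ already suffice for interpolation once the support is known, so your tighter bookkeeping is, if anything, a minor improvement.
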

\begin{proof}
We first compute the support $\supp(f \conv g) = \supp(f) + \supp(g)$ by \cref{thm:sumset} (here we use that $f$ takes only nonnegative values to avoid cancelations). Then, let $t = |\supp(f \conv g)|$ and assume that we have access to $\Sigma^0(f \conv g), \dots, \Sigma^{2t-1}(f \conv g)$ in the same way as before, using \cref{lem:eval-op,lem:conv-op}. By another application of \cref{lem:eval-op} we can interpolate the function $f \conv g$ from $\supp(f \conv g)$ and $\Sigma^0(f \conv g), \dots, \Sigma^{2t-1}(f \conv g)$. The running time is bounded by $\widetilde\Order(t)$.
\end{proof}
\section{Real Constellation in Near-Linear Time} \label{sec:constellation}
In this section we design our near-linear time algorithm for the Constellation problem. The algorithm is inspired by Fischer's recent \emph{deterministic} Constellation algorithm, and relies on some more algebraic tools, in particular the Baur-Strassen theorem.

\subsection{The Baur-Strassen Theorem}
The Baur-Strassen theorem intuitively states that any algebraic algorithm computing some function~$f$ can compute additionally all partial derivatives of $f$, essentially for free. To make this precise, we first have to fix what we understand by an algebraic algorithm.

\begin{definition}[Arithmetic Circuit]
An \emph{arithmetic circuit $C$} over the field $\Field$ and the variables $x_1, \dots, x_n$ is a directed acyclic graph as follows. The nodes are called \emph{gates}, and are of the following two types: Each gate either has in-degree $0$ and is labeled with a variable $x_i$ or a constant $\alpha \in \Field$, and it has in-degree $2$ and is labeled with an arithmetic operation (${+}, {-}, {\times}, {/}$).
\end{definition}

We refer to the gates labeled by variables $X_i$ as \emph{input gates}, by constants as \emph{constant gates} and by an operation ${\circ} \in \set{{+}, {-}, {\times}, {/}}$ as \emph{$\circ$-gates}. A gate with out-degree $0$ is called an \emph{output gate.} The \emph{size} of an arithmetic circuit $C$, denoted by $|C|$, is the number of gates plus number of edges in $C$. Note that each gate in an arithmetic circuit computes a polynomial $P \in \Field[x_1, \dots, x_n]$ in a natural way: Input gates compute~$x_i$, constant gates compute the constant polynomial $\alpha$, and each, say, $\times$-gate computes the product of the polynomials computed by its two incoming gates. We typically say that an arithmetic circuit computes polynomials~\makebox{$P_1, \dots, P_m$} if there are $m$ output gates computing these respective polynomials.

\begin{lemma}[Baur-Strassen Theorem,~\cite{BaurS83,Morgenstern85}]
For any arithmetic circuit $C$ computing $f(x_1, \dots, x_n)$, there is a circuit $C'$ that simultaneously computes the partial derivatives $\frac{\partial f}{\partial x_i}(x_1, \dots, x_n)$ for all $1 \leq i \neq n$. The circuit $C'$ has size $|C'| \leq \Order(|C|)$ and can be constructed in time $\Order(|C|)$.
\end{lemma}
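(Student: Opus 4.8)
The plan is to prove this by \emph{reverse-mode automatic differentiation} (backpropagation). Let $P_v \in \Field[x_1, \dots, x_n]$ denote the polynomial computed at gate $v$ of $C$, and fix the output gate $o$ with $P_o = f$. The circuit $C'$ will consist of two parts: first a verbatim copy of $C$ (so that every intermediate value $P_v$ is available), and then a \emph{backward layer} that computes, for every gate $v$, an \emph{adjoint} $\bar v$ which we will prove equals the partial derivative of $f$ with respect to the value at $v$, when all gate values are treated as formal variables constrained by the circuit equations. The desired outputs are then precisely the adjoints of the input gates, since the adjoint of the gate labeled $x_i$ equals $\partial f / \partial x_i$.

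First I would set up the adjoint recurrence via the chain rule, processing gates in reverse topological order. We seed $\bar o := 1$. For any non-output gate $v$,
\[
  \bar v \;=\; \sum_{w \,:\, (v \to w) \in C} \bar w \cdot \frac{\partial P_w}{\partial P_v},
\]
and the crucial point is that the local factor $\partial P_w / \partial P_v$ is trivial to realize with $\Order(1)$ gates: it is $1$ if $w$ is a $+$-gate, $\pm 1$ if $w$ is a $-$-gate, the value $P_{v'}$ of the other input $v'$ of $w$ if $w$ is a $\times$-gate, and $1/P_{v'}$ respectively $-P_u / P_{v'}^2$ if $w = u / v'$ is a $/$-gate (these values are read off the copy of $C$). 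I would then physically build, for each gate $v$, a balanced adder tree over its out-edges to form the sum, preceded by one multiplication gate per out-edge to form each summand; hence the backward layer uses $\Order(1)$ gates per gate of $C$ and $\Order(1)$ gates per edge of $C$. Correctness follows by downward induction along the reverse topological order: the base case is $\bar o = 1$, and the inductive step is exactly the chain-rule identity displayed above; specializing to the gates $x_i$ yields the claimed circuit.

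For the size bound, the copy of $C$ has size $|C|$ and the backward layer adds $\Order(|C|)$ further gates and edges, so $|C'| = \Order(|C|)$; the construction inspects each gate and each edge a constant number of times, so it runs in time $\Order(|C|)$. The step I expect to need the most care is the accounting for gates of large fan-out: the sum defining $\bar v$ can be long, but it has exactly one term per out-edge of $v$, and the total number of out-edges over all gates is just the edge count of $C$, which is already charged to $|C|$ --- so there is no super-linear blow-up, as long as one reuses the copy of $C$ rather than recomputing intermediate values. A secondary point worth a sentence in the write-up is the division case: one must check that no new $/$-gates are introduced beyond $\Order(1)$ per $/$-gate of $C$, and that $1/P_{v'}$ and $-P_u/P_{v'}^2$ are defined wherever $C$ itself is (namely where $P_{v'} \neq 0$), which is immediate from the semantics of division gates.
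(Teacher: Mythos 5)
The paper states this lemma with citations to~\cite{BaurS83,Morgenstern85} and does not supply its own proof, so there is no in-paper argument to compare against. Your write-up is a correct self-contained proof via the standard reverse-mode automatic differentiation (adjoint) argument, which is essentially the argument of the cited works: the adjoint recurrence is the right chain-rule identity, the local partials for ${+},{-},{\times},{/}$-gates are computed correctly, and your size accounting is sound because the sum defining $\bar v$ has one term per out-edge of $v$ and the total out-degree is the edge count of $C$, which the paper explicitly includes in $|C|$. Two minor remarks worth a sentence if you polish this into a full proof: (i) you should state explicitly that, in the presence of $/$-gates, the $P_v$ are rational functions rather than polynomials, and that the chain rule is being applied over the field of rational functions; (ii) the lemma as stated in the paper has a typo ($1 \leq i \neq n$ should read $1 \leq i \leq n$), and your argument correctly handles all $n$ input gates.
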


Arithmetic circuits form a natural model of algebraic computation. It can easily be verified that most algorithms developed in the last section can similarly be implemented with arithmetic circuits. In particular, it is relevant that \cref{lem:eval-op,lem:conv-op} construct arithmetic circuits in time $\Order(t \log^2 t)$ that implement exactly the evaluation, interpolation or convolution operations.\footnote{More precisely, \cref{lem:eval-op} (say) can be seen as follows: There is an algorithm running in time $\Order(t \log^2 t)$ that produces an arithmetic circuit of size $\Order(t \log^2 t)$ with inputs $f(a_0), \dots, f(a_{t-1})$ (where $a_0, \dots, a_{t-1}$ is the support of~$f$) and outputs $\Sigma^0(f), \dots, \Sigma^{t-1}(f)$.}

\subsection{The Algorithm}
The following lemma is similar to~\cite[Theorem~4.1]{Fischer24}:

\begin{lemma} \label{lem:constellation-cand}
Let $A, B, S \subseteq \Real$. There is a Las Vegas algorithm computing, for each $s \in S$, whether $A + s \subseteq B$ in time $\widetilde\Order(|A + S| + |B|)$.
\end{lemma}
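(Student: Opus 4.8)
The plan is to reduce the problem to evaluating a single carefully crafted rational function together with all of its partial derivatives, and then to invoke the Baur–Strassen theorem so that the cost of obtaining all derivatives is only a constant factor over the cost of a single evaluation. First I would recall the standard algebraic characterization: for a fixed shift $s$, the condition $A + s \subseteq B$ fails exactly when there is some $a \in A$ with $a + s \notin B$, i.e.\ some ``bad'' pair. The trick (following Fischer's approach) is to encode, for each $s \in S$, the \emph{number} of elements $a \in A$ with $a + s \notin B$, or rather a quantity that detects whether this count is zero. A clean way to do this is to introduce a formal variable for the elements of $B$ and to build a polynomial/rational expression $F$ in these variables whose partial derivative with respect to the variable attached to each candidate value $c$ (ranging over $B$, or over the ``completed'' set $A + S$) reveals how many of the required memberships $a + s \in B$ are witnessed. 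Concretely, one sets up $F$ so that the $s$-th output, extracted from the gradient, equals $\sum_{a \in A} [\,a + s \in B\,]$ minus $|A|$, which is $0$ precisely when $A + s \subseteq B$.

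The key steps, in order, are: (1) build the ``witness'' set $A + S$ using \cref{thm:sumset} in time $\widetilde\Order(|A+S|)$, since this is the universe of values over which we need to reason; (2) construct an arithmetic circuit of size $\widetilde\Order(|A + S| + |B|)$ that computes a single scalar $F$ whose value aggregates, over all $a \in A$ and all $s \in S$, an indicator-like contribution depending on whether $a + s$ matches some element of $B$ — this is where the sparse-convolution primitives of \cref{sec:sumset} (in arithmetic-circuit form, as remarked after Baur–Strassen) do the heavy lifting, because the sum over matching pairs is itself a convolution $\One_A \conv \One_S$ restricted to $B$; (3) apply the Baur–Strassen theorem to obtain, in time $\Order(|C|) = \widetilde\Order(|A+S| + |B|)$, a circuit $C'$ computing all partial derivatives $\partial F / \partial x_c$ simultaneously; (4) read off, for each $s \in S$, the appropriate derivative (or combination of derivatives indexed by the elements $a + s$) to decide whether $A + s \subseteq B$; (5) verify correctness of the ``bad shift'' test — since cancellations could in principle make a nonempty count look like zero, introduce fresh random field values for the $B$-variables (or random weights), so that with high probability the detected quantity vanishes iff the combinatorial count vanishes, yielding the Las Vegas guarantee after a cheap deterministic re-check of the reported feasible shifts by brute force on the (hopefully few) candidates.

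The main obstacle I expect is step (2): designing the scalar $F$ and its circuit so that a \emph{single} gradient computation, rather than one evaluation per shift $s$, encodes the membership counts for \emph{all} $s \in S$ at once, while keeping the circuit size near-linear in $|A+S| + |B|$. The natural idea is to let the $B$-variables be $x_c$ for $c \in A + S$ (so that $x_c$ is ``activated'' exactly when $c \in B$), form the product or sum $\prod_{s \in S}\bigl(\text{something depending on }\{x_{a+s}\}_{a \in A}\bigr)$, and arrange that $\partial F/\partial x_c$ telescopes to the count we want; making this both correct and computable by a small circuit requires exactly the evaluation/interpolation/convolution machinery of \cref{lem:eval-op,lem:conv-op}, and some care with the rational (division) gates to avoid division by zero. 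A secondary subtlety is controlling randomness so that the whole procedure is Las Vegas: one reports a set of candidate feasible shifts from the algebraic test and then spends $\widetilde\Order(|A+S| + |B|)$ time deterministically confirming each, which is affordable because a correct algorithm outputs at most $|B|/|A|$ shifts and the false positives are few with high probability.
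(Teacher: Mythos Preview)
Your high-level plan matches the paper's: compute $A+S$ via \cref{thm:sumset}, build a small arithmetic circuit realising the convolution $\One_A \conv \One_S$ (using the circuit forms of \cref{lem:eval-op,lem:conv-op}), aggregate into a single scalar, and apply Baur--Strassen. That is exactly the route the paper takes.

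However, your execution of step~(2)/(4) has a real gap, and it is the one place where the argument needs a precise idea rather than a sketch. You place the formal variables on the elements of $A+S$ (or of $B$) and hope that the gradient entries $\partial F/\partial x_c$ can be recombined into one answer \emph{per shift} $s$. But Baur--Strassen gives you one derivative per \emph{input variable}, so with your variable layout you obtain one number per sumset element $c$, not per $s$; extracting per-shift information from these would itself be a convolution-like problem of the same shape you started with. The paper's fix is simple but essential: use \emph{two} families of inputs, $x_a$ for $a\in A$ and $y_s$ for $s\in S$, let the circuit compute $z_b=\sum_{a+s=b} x_a y_s$ for each $b\in A+S$, and set the scalar to $z=\sum_{b\in(A+S)\cap B} z_b$. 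Then
\[
\frac{\partial z}{\partial y_s}\;=\;\sum_{\substack{a\in A,\ b\in B\\a+s=b}} x_a,
\]
so differentiating with respect to the $S$-indexed variables gives exactly one output per shift, and after plugging $x_a\gets 1$ this equals $|\{a\in A: a+s\in B\}|$, which is $|A|$ iff $A+s\subseteq B$.

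This also removes your step~(5) entirely. Because the derivative at $x_a=1$ is a nonnegative integer count, there are no cancellations, no need for random weights, no products or rational gates, and no brute-force re-check. The only randomness in the lemma is inherited from the Las Vegas computation of $A+S$ in \cref{thm:sumset}; everything downstream is deterministic. (Your remark that ``a correct algorithm outputs at most $|B|/|A|$ shifts'' pertains to the full Constellation problem, not to this lemma, which must handle an arbitrary candidate set $S$.)
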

\begin{proof}
We start with a description of the algorithm. First, we precompute the sumset $A + S$ by \cref{thm:sumset}, and let $t = |A + S|$. Then, we construct an arithmetic circuit with inputs $x_a$ and $y_s$ (for $a \in A$ and $s \in S$) and outputs $z_b$ (for $b \in A + S$) defined by
\begin{equation*}
    z_b = \sum_{\substack{a \in A\\s \in S\\a + s = b}} x_a \cdot y_s.
\end{equation*}
This circuit can be constructed by concatenating the evaluation, convolution and interpolation circuits from \cref{lem:eval-op,lem:conv-op}. Specifically, the circuit first computes $\Sigma^0(\One_A), \dots, \Sigma^{t-1}(\One_A)$ and $\Sigma^0(\One_S), \dots, \Sigma^{t-1}(\One_S)$ by \cref{lem:eval-op}, then computes $\Sigma^0(\One_A \conv \One_S), \dots, \Sigma^{t-1}(\One_A \conv \One_S)$ by \cref{lem:conv-op}, and then interpolates $z_b = (\One_A \conv \One_S)(b)$ by \cref{lem:eval-op}. We will now augment this arithmetic circuit in some more steps.

First, add another output gate
\begin{equation*}
    z := \sum_{b \in (A + S) \cap B} z_b
\end{equation*}
to the circuit. Then apply the Baur-Strassen theorem to let the circuit compute in addition all partial derivatives $\frac{\partial z}{\partial y_s}$. Note that:
\begin{equation*}
    \frac{\partial z}{\partial y_s} = \sum_{b \in (A + S) \cap B} \frac{\partial z_b}{\partial y_s} = \sum_{b \in (A + S) \cap B} \sum_{\substack{a \in A\\a + s=b}} x_a = \sum_{\substack{a \in A\\b \in B\\a + s = b}} x_a.
\end{equation*}
Therefore, by plugging in $x_a \gets 1$, the output $\frac{\partial z}{\partial y_s}$ equals $|A|$ if and only if $A + s \subseteq B$.

It remains to bound the running time. The applications of \cref{lem:eval-op,lem:conv-op} takes time $\Order(t \log^2 t)$. Moreover, the construction of the arithmetic circuit, including all modifications run in linear time in the circuit size $\Order(t \log^2 t)$. Similarly, evaluating the circuit on the all-ones input runs in linear time in the circuit size.
\end{proof}

\thmconstellation*
\begin{proof}
Let $n = |A| + |B|$ and let $L = \ceil{\log n}$. The algorithm proceeds in levels $\ell \gets L, \dots, 1, 0$. Throughout, we maintain sets of candidate shifts $S_{L+1}, S_L, \dots, S_0$ such that ultimately $S_0 = S$. Initially, let $a \in A$ be arbitrary and set $S_{L+1} \gets B - a$; clearly~$S_{L+1}$ contains all feasible shifts $s$ with $A + s \subseteq B$ (as in particular $a + s \in B$). Each level $\ell$ runs the following steps:
\begin{enumerate}
    \item Uniformly subsample a set $A_\ell \subseteq A$ with rate $2^{-\ell}$.
    \item Let $S_\ell \gets \set{s \in S_{\ell+1} : A_\ell + s \subseteq B}$ as computed by \cref{lem:constellation-cand}.
\end{enumerate}
Finally, report $S = S_0$. This completes the description of the algorithm.

We turn to the correctness analysis and prove that $S_0 = S = \set{s \in \Real : A + s \subseteq B}$. On the one hand, it is easy to verify by induction that $S \subseteq S_\ell$ for all levels $\ell$, and thus in particular $S \subseteq S_0$. On the other hand, since $A_0 = A$, we have $S_0 \supseteq \set{s \in S : A + s \subseteq B} = S$. 

Let us finally consider the running time. The algorithm runs in $\Order(\log n)$ levels, and preparing the sets $A_\ell$ takes time $\widetilde\Order(n)$. The dominant contribution to the running time is $\widetilde\Order(\sum_\ell |A_\ell + S_{\ell+1}| + |B|)$ due to \cref{lem:constellation-cand}. Our goal in the following is therefore to bound $|A_\ell + S_{\ell+1}|$ for any level $\ell$. To this end, fix any shift $s \in S_{L+1}$ and note that
\begin{equation*}
    \Pr_{A_L, \dots, A_\ell}[s \in S_\ell] \leq (1 - 2^{-\ell})^{|(A + s) \setminus B|} \leq \exp(-2^{-\ell} \cdot |(A + s) \setminus B|)
\end{equation*}
(where the randomness is over all levels down to $\ell$). Therefore:
\begin{align*}
    \Ex_{A_L, \dots, A_\ell} |A_\ell + S_{\ell+1}|
    &\leq |B| + \Ex_{A_L, \dots, A_\ell} |(A_\ell + S_{\ell+1}) \setminus B| \\
    &\leq |B| + \sum_{s \in S_{L+1}} \Pr_{A_L, \dots, A_{\ell+1}}[s \in S_{\ell+1}] \cdot \Ex_{A_\ell} |(A_\ell + s) \setminus B| \\
    &\leq |B| + \sum_{s \in S_{L+1}} \exp(-2^{-\ell-1} \cdot |(A + s) \setminus B|) \cdot 2^{-\ell} \cdot |(A + s) \setminus B| \\
    &\leq |B| + |S_{L+1}| \cdot \max_{x \geq 0} (\exp(-x) \cdot 2x) \\
    &\leq (1 + 2/e) n.
\end{align*}
Here, in the second-to-last step we have substituted $x = 2^{-\ell} \cdot |(A + s) \setminus B|$ and then used the fact that $\exp(-x) \cdot x \leq 1/e$ for all $x \geq 0$. Summing over the $\Order(\log n)$ levels, the total expected running time is indeed $\widetilde\Order(n)$. (Using a standard interrupt-and-repeat argument we can make the algorithm run in time $\widetilde\Order(n)$ with high probability at the cost of one additional $\log n$-factor.)
\end{proof}

\appendix
\section{Factoring Polynomials into Coprimes} \label{sec:coprime}
In this section we revisit the following theorem due to Bernstein~\cite{Bernstein05}. Bernstein analyzes \cref{thm:coprime-basis} over general domains supporting the gcd operation, and assuming a computational model other than the word or real RAM model. We here include a complete proof, in which we limit our attention to polynomials and the word RAM model equipped with an oracle performing field operations in unit time (e.g.~the real RAM model).

\thmcoprimebasis*

Throughout this section we implicitly use that basic operations on polynomials such as computing products, divisions with remainder and greatest common divisors have classic near-linear time algorithms over any field; see e.g.~\cite{vonzurGathenG13}.

Note that we can assume that all polynomials $P$ and $Q$ in \cref{thm:coprime-basis} are monic (i.e., have leading coefficient~$1$). For the scope of this section, let us introduce some more notation: We say that a set of polynomials~$\mathcal Q$ is a \emph{coprime basis} of some other set of polynomials $\mathcal P$ if the polynomials~$\mathcal Q$ are pairwise coprime, and every polynomial $P \in \mathcal P$ can be factored into polynomials from~$\mathcal Q$. We say that the coprime basis is \emph{reduced} if every polynomial $Q \in \mathcal Q$ appears as a factor of some polynomial~\makebox{$P \in \mathcal P$}. For a set of polynomials $\mathcal P$ we additionally write $\deg(\mathcal P) = \sum_{P \in \mathcal P} \deg(P)$. The following two observations will be useful:

\begin{observation}
If $\mathcal Q$ is a coprime basis of $\mathcal P$, then for any subset $\mathcal P' \subseteq \mathcal P$, $\gcd(\mathcal P')$ can be factored into polynomials from $\mathcal Q$. 
\end{observation}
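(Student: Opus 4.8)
The plan is to invoke unique factorization in $\Field[X]$. Since $\mathcal{Q}$ is a coprime basis of $\mathcal{P}$, every $P \in \mathcal{P}$ can be written as $P = \prod_{Q \in \mathcal{Q}} Q^{e_P(Q)}$ for nonnegative integers $e_P(Q)$. First I would observe that, because the polynomials in $\mathcal{Q}$ are pairwise coprime, their sets of irreducible factors are pairwise disjoint: a common irreducible factor of distinct $Q, Q' \in \mathcal{Q}$ would contradict $\gcd(Q, Q') = 1$. Consequently, for each $Q \in \mathcal{Q}$ and each irreducible $\pi \mid Q$, the multiplicity of $\pi$ in any $P \in \mathcal{P}$ equals $e_P(Q)$ times the multiplicity of $\pi$ in $Q$; and any irreducible dividing no member of $\mathcal{Q}$ divides no $P \in \mathcal{P}$ at all.

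Next I would use the standard fact that the multiplicity of an irreducible $\pi$ in $\gcd(\mathcal{P}')$ is $\min_{P \in \mathcal{P}'}$ of its multiplicity in $P$. Combining this with the previous paragraph, for $\pi \mid Q$ this minimum equals $\bigl(\min_{P \in \mathcal{P}'} e_P(Q)\bigr)$ times the multiplicity of $\pi$ in $Q$, and it is $0$ for any other irreducible. Therefore
\[
    \gcd(\mathcal{P}') = \prod_{Q \in \mathcal{Q}} Q^{\min_{P \in \mathcal{P}'} e_P(Q)},
\]
which is visibly a product of polynomials from $\mathcal{Q}$, as claimed.

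An alternative, fully elementary route avoiding irreducibles is induction on $|\mathcal{P}'|$ using $\gcd(\mathcal{P}' \cup \set{P}) = \gcd(\gcd(\mathcal{P}'), P)$, which reduces the claim to the two-polynomial statement that the $\gcd$ of two products of the pairwise-coprime factors $Q$ is again such a product; this in turn is $\gcd\bigl(\prod_Q Q^{a_Q}, \prod_Q Q^{b_Q}\bigr) = \prod_Q Q^{\min(a_Q, b_Q)}$, valid precisely because the $Q$ are coprime. There is no genuine obstacle here; the only point that needs care is that the polynomials in $\mathcal{Q}$ are not assumed irreducible, so one must first use pairwise coprimality to make the ``exponent of $Q$ in $P$'' well-defined (i.e., to separate the prime supports of the $Q$'s) before the minimum-of-exponents argument applies.
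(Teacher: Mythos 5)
Your alternative, ``fully elementary'' route is exactly the paper's proof: reduce the claim to closure of $\mathcal{Q}$-factorable polynomials under binary gcd, then invoke $\gcd\bigl(\prod_Q Q^{a_Q}, \prod_Q Q^{b_Q}\bigr) = \prod_Q Q^{\min(a_Q,b_Q)}$, which holds because the $Q$'s are pairwise coprime. Your primary route, passing through irreducible factorization in $\Field[X]$ and tracking multiplicities $\pi$-by-$\pi$, is a genuinely more explicit version that fills in the step the paper dismisses with ``clearly'': it makes precise why pairwise coprimality both (a) renders the exponents $e_P(Q)$ well-defined and (b) lets one compute the gcd exponent-by-exponent over the $Q$'s rather than over irreducibles. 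What the paper's terser argument buys is brevity; what your irreducible-level argument buys is a self-contained justification of the two-polynomial gcd identity, which is standard but not entirely immediate in a PID without appealing to unique factorization. Both proofs are correct; your closing remark about separating prime supports is exactly the point the paper leaves implicit.
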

\begin{proof}
It suffices to check that the set of polynomials that can be factored into polynomials from~$\mathcal Q$ is closed under taking greatest common divisors. To this end, consider any two polynomials expressible as \smash{$\prod_{Q \in \mathcal Q} Q^{a_Q}$} and \smash{$\prod_{Q \in \mathcal Q} Q^{b_Q}$}, respectively. Their gcd is \smash{$\prod_{Q \in \mathcal Q} Q^{\min(a_Q, b_Q)}$}, which clearly also factors into polynomials from $\mathcal Q$.
\end{proof}

\begin{observation} \label{obs:coprime-basis-reduced}
If $\mathcal Q$ is a reduced coprime basis of $\mathcal P$, then $\deg(\mathcal Q) \leq \deg(\mathcal P)$.
\end{observation}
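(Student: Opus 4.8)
The plan is to exploit the defining property of a \emph{reduced} coprime basis: every $Q \in \mathcal Q$ divides at least one $P \in \mathcal P$, and since the members of $\mathcal Q$ are pairwise coprime, any single $P$ is divisible by a \emph{product} of distinct members of $\mathcal Q$ whose total degree cannot exceed $\deg(P)$. The key is therefore to assign to each $Q$ a ``host'' polynomial $P(Q) \in \mathcal P$ that it divides, and then to argue that for a fixed host $P$, the sum of the degrees of all $Q$ with $P(Q) = P$ is at most $\deg(P)$.

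Concretely, I would proceed as follows. First, use the reduced property to fix, for each $Q \in \mathcal Q$, some $P(Q) \in \mathcal P$ with $Q \mid P(Q)$. Next, fix an arbitrary $P \in \mathcal P$ and consider the set $\mathcal Q_P = \{\, Q \in \mathcal Q : P(Q) = P \,\}$. Since $\mathcal Q$ is a coprime basis of $\mathcal P$, in particular $P$ factors into (powers of) polynomials from $\mathcal Q$; writing $P = \prod_{Q \in \mathcal Q} Q^{a_Q}$, every $Q \in \mathcal Q_P$ has $a_Q \geq 1$ (it divides $P$, and by coprimality it must actually appear in this factorization). Hence $\prod_{Q \in \mathcal Q_P} Q$ divides $P$, so taking degrees gives $\sum_{Q \in \mathcal Q_P} \deg(Q) \leq \deg(P)$. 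Finally, since the sets $\mathcal Q_P$ over $P \in \mathcal P$ together cover all of $\mathcal Q$ (every $Q$ lies in $\mathcal Q_{P(Q)}$), summing over $P$ yields
\begin{equation*}
    \deg(\mathcal Q) = \sum_{Q \in \mathcal Q} \deg(Q) \leq \sum_{P \in \mathcal P} \sum_{Q \in \mathcal Q_P} \deg(Q) \leq \sum_{P \in \mathcal P} \deg(P) = \deg(\mathcal P),
\end{equation*}
which is exactly the claim.

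The only subtlety — and the one step worth spelling out carefully — is why each $Q \in \mathcal Q_P$ genuinely contributes at least one factor to $P$ rather than merely dividing some unrelated combination. This is where pairwise coprimality is essential: if $Q \mid P$ and $P = \prod_{Q' \in \mathcal Q} (Q')^{a_{Q'}}$, then any irreducible factor $\pi$ of $Q$ divides $P$ and hence divides some $(Q')^{a_{Q'}}$, forcing $\pi \mid Q'$; coprimality then forces $Q' = Q$, so $Q \mid Q^{a_Q}$ and $a_Q \geq 1$. A cleaner route, avoiding irreducibles entirely, is to observe that $\prod_{Q \in \mathcal Q_P} Q = \gcd\bigl(P, \prod_{Q \in \mathcal Q_P} Q\bigr)$ divides $P$ because the $Q$'s are pairwise coprime (so their product, intersected with $P$, keeps each full factor that already divides $P$). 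Either way the argument is short; I expect no real obstacle here, only the need to phrase the coprimality usage precisely.
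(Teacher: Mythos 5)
Your proof is correct and rests on the same two pillars as the paper's: reducedness guarantees each $Q$ divides some $P$, and pairwise coprimality prevents double-counting of degree. The paper packages this globally — it observes that $\prod_{Q \in \mathcal Q} Q$ divides $\prod_{P \in \mathcal P} P$ and compares degrees once — whereas you distribute each $Q$ to a host $P(Q)$ and bound each bucket $\mathcal Q_P$ by $\deg(P)$ separately before summing; the bookkeeping differs but the underlying argument is identical.
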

\begin{proof}
Consider the factorization of $\prod(\mathcal P) := \prod_{P \in \mathcal P} P$ into polynomials from $\mathcal Q$. Since $\mathcal Q$ is pairwise coprime, this factorization is unique. Since $\mathcal Q$ is further reduced, every polynomial $Q \in \mathcal Q$ appears at least once. Thus, $\deg(\mathcal Q) = \deg(\prod(\mathcal Q)) \leq \deg(\prod(\mathcal P)) = \deg(\mathcal P)$.
\end{proof}

In light of the previous \cref{obs:coprime-basis-reduced}, to prove \cref{thm:coprime-basis} our goal is to give an algorithm computing a reduced coprime basis $\mathcal Q$ of~\makebox{$\mathcal P$}. Following the approach in~\cite{Bernstein05}, we establish the algorithm in three steps: First, we show how to efficiently extend a coprime basis by one additional polynomial (\cref{lem:coprime-basis-extend}). Second, we show how to efficiently merge two coprime bases (\cref{lem:coprime-basis-merge}). Third, we prove \cref{thm:coprime-basis} via a simple divide-and-conquer strategy. 

\begin{lemma}[Extending a Coprime Basis] \label{lem:coprime-basis-extend}
Let $P$ be a polynomial, let $\mathcal Q$ be a reduced coprime basis of some polynomials~$\mathcal P$. Given $P$ and $\mathcal Q$, in time~\smash{$\widetilde\Order(\deg(P) + \deg(\mathcal Q))$} we can compute a reduced coprime basis $\mathcal Q'$ of~\makebox{$\mathcal P \cup \set P$}.
\end{lemma}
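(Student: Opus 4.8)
The plan is to recursively "sieve" the new polynomial $P$ against the existing basis $\mathcal Q$, splitting basis elements as needed. Concretely, I would process the polynomials $Q \in \mathcal Q$ one at a time, maintaining a "remaining part" $R$ of $P$ (initially $R = P$) that still needs to be factored. For the current basis element $Q$, compute $d = \gcd(R, Q)$. If $d = 1$, then $Q$ is already coprime to the part of $P$ processed so far and we move on, leaving $Q$ in the basis unchanged. If $d = Q$ (i.e.\ $Q \mid R$), then $Q$ is a legitimate factor of $P$; we divide it out as many times as possible, replacing $R$ by $R / Q^{e}$ where $e$ is the multiplicity, and keep $Q$ in the basis. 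The interesting case is $1 \neq d \neq Q$: here $Q$ is not coprime to $R$ but is not a factor either, so $Q$ itself must be split. We replace $Q$ in the basis by (the coprime-basis closure of) $d$ and $Q/d$, but these two may still not be coprime, so we must recurse on the pair $\{d, Q/d\}$ until a genuine coprime refinement of $\{Q\}$ is obtained; then $R$ is divided by the appropriate powers of the new pieces. After all of $\mathcal Q$ has been processed, whatever survives in $R$ is coprime to every current basis element, so we append the (possibly trivial, singleton) set $\{R\}$ to the basis if $R \neq 1$. Throughout, I discard any basis element that no longer divides any polynomial in $\mathcal P \cup \{P\}$, to keep the basis reduced; since we only ever replace a $Q$ by factors of $Q$, every new element still divides the old $Q$, hence divides whatever $P' \in \mathcal P$ it came from, or divides $P$.

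For correctness, the invariant to track is: after processing a prefix of $\mathcal Q$, the current collection is pairwise coprime, every already-processed (and possibly split) basis element has had all its copies extracted from $R$, and $\prod(\text{current basis}) \cdot R$ still generates the same divisibility lattice needed to factor $\mathcal P \cup \{P\}$. The key subtlety is the recursion in the $1 \neq d \neq Q$ case: splitting $Q$ into $d$ and $Q/d$ can create a new common factor (e.g.\ if $Q = q^3$ and $d = q$, then $Q/d = q^2$ shares $q$ with $d$), so one pass does not suffice; however, the degrees strictly decrease at each split, so the recursion terminates, and a standard "refine until coprime" argument (essentially the binary-gcd / coprime-splitting step from Bernstein) shows the output is pairwise coprime and that every original $Q$ factors into the new pieces, so the basis property for $\mathcal P$ is preserved while $P$ is now also covered.

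For the running time, the main obstacle — and the part that needs the most care — is bounding the total cost of all the $\gcd$ and division operations, since a naive implementation that splits one prime-power layer at a time could cost $\widetilde\Order(\deg(P)\cdot\deg(\mathcal Q))$. The fix is to organize the work as a balanced process: compute $g = \gcd(P, \prod(\mathcal Q))$ in near-linear time (using a product tree for $\prod(\mathcal Q)$, whose total degree is $\deg(\mathcal Q) \le \deg(\mathcal P)$ by \cref{obs:coprime-basis-reduced}), and then recurse, at each node of the product tree, only on the part of $g$ relevant to that subtree, so that the degrees charged at each level of the tree sum to $\Order(\deg(P) + \deg(\mathcal Q))$ and there are $\Order(\log)$ levels. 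Each basic polynomial operation (product, division with remainder, gcd) is near-linear in the degree over any field, so the total is $\widetilde\Order(\deg(P) + \deg(\mathcal Q))$ as claimed. I expect the bookkeeping for the recursive splitting — making sure each created fragment is charged only once and that the "refine until coprime" loop does not blow up the degree budget — to be the technically fiddliest part, and it is exactly where Bernstein's divide-and-conquer structure (rather than a linear sweep over $\mathcal Q$) is essential.
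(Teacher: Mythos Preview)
Your approach is correct and in one respect more careful than the paper's, though it differs in structure. The paper does not sieve sequentially; it writes down the target set in closed form,
\[
\mathcal Q' = \Bigl\{\gcd(P,Q),\ \frac{Q}{\gcd(P,Q)} : Q \in \mathcal Q\Bigr\} \cup \Bigl\{\frac{P}{\gcd(P,\prod(\mathcal Q))}\Bigr\},
\]
and then devotes all the effort to computing these ingredients fast: build a product tree for $\prod(\mathcal Q)$ bottom-up, push $P$ down that same tree as a remainder tree to obtain every $P \bmod Q_i$ at the leaves, and finally compute each $\gcd(P,Q_i)=\gcd(P\bmod Q_i,Q_i)$ and the two divisions locally. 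This gives a clean $\widetilde\Order(\deg(P)+\deg(\mathcal Q))$ bound with no iterated refinement whatsoever.

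Your recursive-splitting layer addresses a point the paper's argument glosses over: $\gcd(P,Q)$ and $Q/\gcd(P,Q)$ need not be coprime in general (your own example $Q=q^3$, $d=q$ already exhibits this), so the one-shot formula above is only guaranteed to yield a pairwise coprime set when the polynomials involved are square-free---which happens to hold in the paper's sole application, where every $\Lambda_{A_i+B_i}$ is a product of distinct linear factors. What the paper's route buys is a much cleaner running-time analysis. In your write-up, by contrast, the correctness story (sequential sieve, update $R$ after each $Q$) and the efficiency story (product tree) do not yet match, and you would still need to argue that the per-leaf ``refine until coprime'' loop and the multiplicity extraction together cost only $\widetilde\Order(\deg Q_i)$ so that they sum to $\widetilde\Order(\deg(\mathcal Q))$.
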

\begin{proof}
Our goal is to compute
\begin{equation*}
    \mathcal Q' = \set*{\gcd(P, Q), \frac{Q}{\gcd(P, Q)} : Q \in \mathcal Q} \cup \set*{\frac{P}{\gcd(P, \prod(\mathcal Q))}}
\end{equation*}
It is easy to check that $\mathcal Q'$ is a coprime basis of $\mathcal P \cup \set{P}$: All the polynomials in $\mathcal Q$ as well as $P$ can be factored into elements from $\mathcal Q'$, and any pair of polynomials in $\mathcal Q'$ is coprime. Moreover, $\mathcal Q'$ is reduced (assuming that $\mathcal Q$ is reduced).

Towards computing $\mathcal Q'$, we first compute $\prod(\mathcal Q)$. For the sake of efficiency, we use the following divide-and-conquer algorithm. Let $\mathcal Q = \set{Q_0, \dots, Q_{n-1}}$ and assume for simplicity that $n$ is a power of~$2$. We can compute for each dyadic interval\footnote{Recall that a dyadic interval $I \subseteq [n]$ is an interval of the form $[k \cdot 2^i, (k+1) \cdot 2^i)$ for some $i \in [\log n]$ and $k \in [n / 2^i]$.} $I \subseteq [n]$ the product $\prod_{i \in I} Q_i$ in a binary tree from bottom to top: At the leaves $\set{1}, \dots, \set{n}$ the desired products are equal to the polynomials~\makebox{$Q_1, \dots, Q_n$}, at each internal node we compute the product of the polynomials of the respective two children, and at the root we obtain $\prod(\mathcal Q) = \prod_{i \in [n]} Q_i$. The total running time of any layer in the tree is~\smash{$\widetilde\Order(\sum_{i \in [n]} \deg(Q_i)) = \widetilde\Order(\deg(\mathcal Q))$}, and the tree has depth $\log n \leq \log \deg(\mathcal Q)$.

Next, we compute the polynomials $P \bmod Q_I$ for each dyadic interval $I \subseteq [n]$. To this end, we traverse the same binary tree as in the previous step---but this time from top to bottom. At the root we simply compute $P \bmod Q_{[n]}$. Focus next on any node corresponding to the dyadic interval~\makebox{$I = I_1 \cup I_2$} for which we have already computed $P \bmod Q_I$. To deal with its children labeled with $I_1$ and $I_2$, we compute~\makebox{$P \bmod Q_{I_1} = (P \bmod Q_I) \bmod Q_{I_1}$} and similarly for $P \bmod Q_{I_2}$. At each level of the tree we perform operations on polynomials of total degree $\Order(\sum_{i \in [n]} \deg(Q_i)) = \Order(\deg(\mathcal Q))$, thus the total time of this step is also~\smash{$\widetilde\Order(\deg(\mathcal Q))$}.

Finally, we compute the polynomials (i)~$\gcd(P, Q_i) = \gcd(P \bmod Q_i, Q_i)$ and (ii)~$Q_i / \gcd(P, Q_i)$ for all $i \in [n]$, and~(iii) $P / \gcd(P, \prod(\mathcal Q))$. This requires (i)~$n$ gcd computations, plus (ii)~$n$ divisions with remainder, plus~(iii) one division with remainder, each of which takes time $\widetilde\Order(\deg(\mathcal Q))$.
\end{proof}

\begin{lemma}[Merging Two Comprime Bases] \label{lem:coprime-basis-merge}
Let $\mathcal Q_1$ be a reduced coprime basis of $\mathcal P_1$, and let~$\mathcal Q_2$ be a reduced coprime basis of $\mathcal P_2$. Given $\mathcal Q_1$ and $\mathcal Q_2$, in time~\smash{$\widetilde\Order(\deg(\mathcal Q_1) + \deg(\mathcal Q_2))$} we can compute a reduced coprime basis $\mathcal Q'$ of $\mathcal P_1 \cup \mathcal P_2$.
\end{lemma}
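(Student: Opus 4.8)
The plan is to reduce merging to repeated extension. Given the reduced coprime bases $\mathcal Q_1$ of $\mathcal P_1$ and $\mathcal Q_2$ of $\mathcal P_2$, I want a reduced coprime basis $\mathcal Q'$ of $\mathcal P_1 \cup \mathcal P_2$. The naive idea is to start from $\mathcal Q_1$ and call \cref{lem:coprime-basis-extend} once for each polynomial $Q \in \mathcal Q_2$, inserting $Q$ into the running basis; after processing all of $\mathcal Q_2$ the result is a reduced coprime basis of $\mathcal P_1 \cup \{Q : Q \in \mathcal Q_2\}$, which (since every $P \in \mathcal P_2$ factors into $\mathcal Q_2$, and conversely every $Q \in \mathcal Q_2$ divides some $P \in \mathcal P_2$) is exactly a reduced coprime basis of $\mathcal P_1 \cup \mathcal P_2$. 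Correctness is then essentially immediate from the correctness of \cref{lem:coprime-basis-extend} and the observation that the "ground set" it refines against can be taken to be $\mathcal P_1 \cup \{\text{already-inserted } Q_2\text{'s}\}$.

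The obstacle is the running time. A single call to \cref{lem:coprime-basis-extend} costs $\widetilde\Order(\deg(P) + \deg(\mathcal Q))$ where $\mathcal Q$ is the current basis; summing naively over the $|\mathcal Q_2|$ insertions gives something like $\widetilde\Order(\deg(\mathcal Q_2) + |\mathcal Q_2| \cdot \deg(\mathcal Q_1))$, which is not $\widetilde\Order(\deg(\mathcal Q_1) + \deg(\mathcal Q_2))$. The fix is to proceed by divide-and-conquer / balanced recursion on $\mathcal Q_2$ rather than one element at a time: split $\mathcal Q_2$ into two halves of roughly equal total degree, recursively merge each half into (a copy of) the relevant portion, and combine. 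More precisely, I would recurse on the structure: to merge $\mathcal Q_1$ with $\mathcal Q_2 = \mathcal Q_2^{(1)} \cup \mathcal Q_2^{(2)}$, first recursively merge to get a reduced coprime basis $\mathcal Q''$ of $\mathcal P_1 \cup \mathcal P_2^{(1)}$, then recursively merge $\mathcal Q''$ with $\mathcal Q_2^{(2)}$. Invoking \cref{obs:coprime-basis-reduced} at every stage, the total degree of every intermediate basis is bounded by $\deg(\mathcal Q_1) + \deg(\mathcal Q_2)$, so each of the $\Order(\log |\mathcal Q_2|)$ recursion levels costs $\widetilde\Order(\deg(\mathcal Q_1) + \deg(\mathcal Q_2))$, giving the claimed bound.

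Actually, the cleanest packaging is probably to phrase \cref{lem:coprime-basis-extend} itself as the base case and let \cref{lem:coprime-basis-merge} be a one-sided fold implemented via a balanced binary tree over $\mathcal Q_2$: build a product tree over $\mathcal Q_2$ so that each node holds the product of its leaves' polynomials, then walk down the tree maintaining, at the node for dyadic interval $I$, a reduced coprime basis of $\mathcal P_1$ together with $\{Q_i : i \in I\}$ — wait, this requires splitting the current basis along $I$, which is itself a coprime-factorization-type operation. So I'll stick with the simpler recursive formulation: the main point is just that \emph{balancing} the sequence of extensions, together with \cref{obs:coprime-basis-reduced} bounding the size of every intermediate object by the total input degree, turns the naive quadratic-looking bound into $\widetilde\Order(\deg(\mathcal Q_1) + \deg(\mathcal Q_2))$.

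The hard part, to the extent there is one, is the bookkeeping: making sure that at every recursive step the object we call \cref{lem:coprime-basis-extend} on really is a reduced coprime basis of a genuine set of polynomials (so that the lemma's preconditions hold), and that the degree of that object never exceeds $\deg(\mathcal Q_1) + \deg(\mathcal Q_2)$. Both follow from \cref{obs:coprime-basis-reduced} applied to the appropriate sub-union, since any reduced coprime basis of $\mathcal P_1 \cup \mathcal P_2'$ for $\mathcal P_2' \subseteq \mathcal P_2$ has degree at most $\deg(\mathcal P_1 \cup \mathcal P_2') = \deg(\mathcal P_1) + \deg(\mathcal P_2') \leq \deg(\mathcal Q_1) + \deg(\mathcal Q_2)$, using $\deg(\mathcal P_i) = \deg(\prod(\mathcal Q_i)) = \deg(\mathcal Q_i)$ when the bases are reduced. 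With that invariant in hand the recursion unwinds immediately and the running time bound is a geometric-series computation over the $\Order(\log(\deg(\mathcal Q_2)))$ levels.
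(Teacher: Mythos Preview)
Your recursion has a genuine gap in the running-time analysis. The scheme you describe---split $\mathcal Q_2$ in half, merge the first half into $\mathcal Q_1$, then merge the second half into the result---unfolds to exactly $|\mathcal Q_2|$ calls to \cref{lem:coprime-basis-extend}, one per element of $\mathcal Q_2$, regardless of how the splitting is organized. All actual work happens at the leaves of your recursion tree; the internal nodes only split and recurse. Each leaf call costs $\widetilde\Order(\deg(\text{current basis}))$, and the current basis always carries the full $\mathcal Q_1$-part, so every one of the $|\mathcal Q_2|$ calls pays at least $\widetilde\Order(\deg(\mathcal Q_1))$. The total is therefore $\widetilde\Order(|\mathcal Q_2|\cdot\deg(\mathcal Q_1))$, not $\widetilde\Order(\deg(\mathcal Q_1)+\deg(\mathcal Q_2))$. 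Concretely: take $\mathcal Q_1=\{P\}$ with $P$ irreducible of degree $d$, and let $\mathcal Q_2$ consist of $m$ pairwise-coprime linear polynomials all coprime to $P$; your procedure performs $m$ extensions each of cost $\widetilde\Order(d)$, giving $\widetilde\Order(md)$ rather than $\widetilde\Order(d+m)$. There is no geometric decay across levels because the first argument to the recursive call never shrinks.

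The missing idea is to reduce the \emph{number} of extensions to $\Order(\log|\mathcal Q_2|)$ rather than $|\mathcal Q_2|$. The paper does this by a bit-labeling trick: name the elements of $\mathcal Q_2$ by length-$L$ bit-strings with $L=\lceil\log|\mathcal Q_2|\rceil$, form for each bit position $\ell$ and bit $b$ the product $R_{\ell,b}=\prod_{Q:\phi(Q)[\ell]=b}Q$, and extend $\mathcal Q_1$ by just these $2L$ product polynomials. Since every $Q\in\mathcal Q_2$ equals $\gcd\{R_{\ell,\phi(Q)[\ell]}:\ell\in[L]\}$, the resulting basis already factors every $Q$ (and hence every $P\in\mathcal P_2$). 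Now there are only $\Order(\log|\mathcal Q_2|)$ calls to \cref{lem:coprime-basis-extend}, each on an object of degree $\widetilde\Order(\deg(\mathcal Q_1)+\deg(\mathcal Q_2))$, which gives the claimed bound. Your product-tree intuition in the third paragraph was heading in this direction; the point you were missing is that one does not need to ``split the current basis along $I$''---it suffices to \emph{insert} the products and rely on gcd-closure of the basis.
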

\begin{proof}
Let $L = \ceil{\log |\mathcal Q_2|}$. Consider an arbitrary ``naming'' of the polynomials in $\mathcal Q_2$ by length-$L$ bit-strings. Formally, let $\phi : \mathcal Q_2 \to \set{0, 1}^L$ be an arbitrary injection. Then consider the following polynomials for $\ell \in [L]$ and $b \in \set{0, 1}$:
\begin{equation*}
    R_{\ell, b} = \prod_{\substack{Q \in \mathcal Q\\\phi(Q)[\ell] = b}} Q.
\end{equation*}
We can compute each such polynomial in time $\widetilde\Order(\deg(\mathcal Q_2))$ in the same divide-and-conquer fashion as before. We then compute a reduced coprime basis $\mathcal Q'$ of $\mathcal P_1 \cup \set{R_{\ell, b} : \ell \in [L],\, b \in \set{0, 1}}$ by applying the previous \cref{lem:coprime-basis-extend} repeatedly $2L$ times. The running time of each call is bounded by~\smash{$\widetilde\Order(\deg(\mathcal Q_1) + L \deg(\mathcal Q_2))$} and thus~\smash{$\widetilde\Order(L \deg(\mathcal Q_1) + L^2 \deg(\mathcal Q_2)) = \widetilde\Order(\deg(\mathcal Q_1) + \deg(\mathcal Q_2))$} in total.

It remains to prove that $\mathcal Q'$ is indeed a reduced coprime basis for $\mathcal P_1 \cup \mathcal P_2$. Clearly $\mathcal Q'$ is pairwise coprime, and we can factor every polynomial in $\mathcal P_1$ into polynomials from $\mathcal Q'$. It remains to check that every polynomial $\mathcal P_2$ can similarly be factored. Since $\mathcal Q_2$ is a coprime basis, it suffices to prove that every $Q \in \mathcal Q_2$ can be factored into polynomials from $\mathcal Q'$. To this end, note that
\begin{equation*}
    \gcd(\,\set{R_{\ell, \phi(Q)[\ell]} : \ell \in [L]}\,) = \gcd\parens*{\,\set*{\prod_{\substack{Q' \in \mathcal Q\\\phi(Q')[\ell] = \phi(Q)[\ell]}} : \ell \in [L]}\,} = Q,
\end{equation*}
using that any other polynomial $Q' \in \mathcal Q_2$ appears in at least one product with $\phi(Q')[i] \neq \phi(Q)[i]$.
\end{proof}

\begin{proof}[Proof of \cref{thm:coprime-basis}]
We follow a simple divide-and-conquer algorithm: If $\mathcal P$ contains a single polynomial $P$, then we simply return $\mathcal Q = \set{P}$. Otherwise, we partition $\mathcal P$ arbitrarily into two halves~\makebox{$\mathcal P = \mathcal P_1 \cup \mathcal P_2$} (of roughly the same size), and recursively compute reduced coprime bases~$\mathcal Q_1$ and $\mathcal Q_2$ of $\mathcal P_1$ and $\mathcal P_2$, respectively. We merge $Q_1$ and $Q_2$ by one call to \cref{lem:coprime-basis-merge}. \cref{lem:coprime-basis-merge} runs in time $\widetilde\Order(\deg(\mathcal P))$, and the recursion worsens the running time only by $\Order(\log\deg(\mathcal P))$.
\end{proof}
\section{Real 3SUM with Preprocessing} \label{sec:3sum}
In this short section we provide the proof of \cref{thm:3sum-preprocessing}. 

\thmthreesumpreprocessing*

We reuse Chan and Lewenstein's~\cite{ChanL15} proof without any modifications other than plugging in our new algorithm for computing real sumsets. It is based on the corollary of the BSG theorem: 

\begin{theorem}[BSG Corollary, \cite{ChanL15}] \label{thm:bsg-corollary}
Let $A, B, C$ be subsets of an Abelian group of size at most~$n$. For any $0 < \alpha < 1$, there exist subsets $A_1, \dots, A_k \subseteq A$ and $B_1, \dots, B_k \subseteq B$ such that
\begin{itemize}
    \item the remainder set $R = \set{(a, b) \in A \times B : a + b \in C} \setminus \bigcup_{i=1}^k (A_i \times B_i)$ has size at most $\alpha n^2$.
    \item $|A_1 + B_1|, \dots, |A_k + B_k| \leq \Order(n / \alpha^5)$.
    \item $k = \Order(1/\alpha)$.
\end{itemize}
Moreover, we can compute $A_1, \dots, A_k, B_1, \dots, B_k$ and $R$ in Las Vegas time $\widetilde\Order(n^2)$.
\end{theorem}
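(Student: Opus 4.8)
The plan is to follow Chan and Lewenstein's proof~\cite{ChanL15} essentially verbatim, substituting only the one ingredient that exploits the integer structure. Their argument splits into two layers. The first is the purely \emph{combinatorial} Balog--Szemer{\'e}di--Gowers covering lemma, which is formulated and proved over an arbitrary abelian group and therefore transfers without change to $A, B, C \subseteq \Real$. The second is the \emph{algorithmic} layer that makes the extraction of $A_1,\dots,A_k$ and $B_1,\dots,B_k$ efficient; there the only integer-specific step is the computation of sumsets and sparse convolutions, which in the integer world is carried out by hashing the universe down into $[\poly(n)]$ and running the FFT. I would replace every such call by \cref{thm:sumset} (for the sumsets $A_i + B_i$) and by \cref{thm:real-conv} (for the convolution $\One_A \conv \One_B$ itself, i.e.\ for the multiplicities $r_{A+B}(c) = |\set{(a,b) \in A \times B : a + b = c}|$).

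\emph{Combinatorial skeleton.} View the pairs of interest as a bipartite graph $G$ on $A \cup B$ with $(a,b) \in G$ iff $a + b \in C$; since $|C| \le n$ we have $|G| = \sum_{c \in C} r_{A+B}(c) \le n |C| \le n^2$. Maintain a remainder $R \subseteq G$, initialized to $R = G$. While $|R| > \alpha n^2$, apply the Balog--Szemer{\'e}di--Gowers theorem (the underlying graph-theoretic statement) to $R$ to obtain $A_i \subseteq A$ and $B_i \subseteq B$ with $|(A_i \times B_i) \cap R| = \Omega(\alpha n^2)$ and $|A_i + B_i| = \Order(n/\alpha^5)$ by the quantitative BSG bound --- this is the standard popularity/common-neighborhood argument (a constant fraction of the pairs inside a suitably chosen popular subset $A_i$ have $\Omega(\alpha^2 n)$ common neighbors in $B$, which after one further ``path'' step bounds $|A_i + B_i|$ through the restricted sumset $\set{x + y : (x,y) \in R} \subseteq C$ of size $\le n$), and it is valid over any abelian group. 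Then delete $A_i \times B_i$ from $R$. Each round decreases $|R|$ by $\Omega(\alpha n^2)$, so the loop stops after $k = \Order(1/\alpha)$ rounds with $|R| \le \alpha n^2$, and we output $A_1, \dots, A_k, B_1, \dots, B_k$ together with this final $R$. All three size conditions in the statement are then immediate.

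\emph{Algorithmic layer.} I would then check that each primitive used in Chan and Lewenstein's implementation of an extraction round stays within budget once the integer tools are replaced: (i) computing $\One_A \conv \One_B$ --- and hence all multiplicities $r_{A+B}(c)$, the graph $G$, the additive energy $\sum_c r_{A+B}(c)^2$, and the list of popular sums --- by \cref{thm:real-conv} in time $\widetilde\Order(|A+B|) = \widetilde\Order(n^2)$ (done once); (ii) the neighborhood-counting and random-sampling steps that actually locate the core $A_i \times B_i$, which are combinatorial and run in $\widetilde\Order(n^2)$; and (iii) computing, and thereby certifying the size of, each extracted sumset $A_i + B_i$ by \cref{thm:sumset} in time $\widetilde\Order(|A_i + B_i|)$. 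Feeding these into Chan and Lewenstein's running-time analysis gives total time $\widetilde\Order(n^2)$. The only randomness is in the sumset and convolution subroutines, which are Las Vegas, so the whole procedure is Las Vegas in time $\widetilde\Order(n^2)$.

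\emph{Expected main obstacle.} The one non-mechanical point is to make sure no step secretly relies on the \emph{universe reduction} supplied by linear hashing rather than on the FFT itself: in the integer algorithm, hashing into $[\poly(n)]$ serves only to make the ensuing FFT cheap, and with a genuinely output-sensitive real sumset algorithm (\cref{thm:sumset}) this step becomes superfluous. Concretely, one has to confirm that every sumset and convolution arising in the proof has support of size $\Order(n^2)$ (for $A + B$ and for $G$) or $\Order(n/\alpha^{\Order(1)})$ (for the cores $A_i + B_i$), so that the output-sensitive guarantees of \cref{thm:sumset} and \cref{thm:real-conv} stay within the $\widetilde\Order(n^2)$ bound across all $\Order(1/\alpha)$ rounds, and that the bookkeeping for maintaining and finally reporting $R$ is amortized exactly as in~\cite{ChanL15} --- this accounting is identical to the integer case since the real sumset algorithm is output-sensitive just like the integer FFT-based one. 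No part of the argument touches scaling, flooring, or linear hashing, so the real RAM implementation goes through.
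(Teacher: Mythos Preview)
The paper does not prove this theorem at all: it is quoted verbatim from~\cite{ChanL15} as a black box, and the sentence ``We reuse Chan and Lewenstein's proof without any modifications other than plugging in our new algorithm for computing real sumsets'' refers to the proof of \cref{thm:3sum-preprocessing}, not to \cref{thm:bsg-corollary}. So there is nothing to compare against on the paper's side.

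Your sketch is sound in outline, but you are doing more work than is needed, and the ``expected main obstacle'' you flag is not actually present. Chan and Lewenstein's algorithmic BSG corollary operates purely on the bipartite graph $G = \set{(a,b) \in A \times B : a + b \in C}$, which can be built in time $\widetilde\Order(n^2)$ over any abelian group by enumerating all pairs and testing membership of $a+b$ in (a sorted copy of) $C$; the subsequent extraction of the cores $A_i, B_i$ is a sampling/path-counting argument on this graph and uses no FFT, no linear hashing, and no sumset computation. In particular, your steps~(i) and~(iii) --- computing $\One_A \conv \One_B$ via \cref{thm:real-conv} and certifying $|A_i+B_i|$ via \cref{thm:sumset} --- are unnecessary here: the multiplicities $r_{A+B}(c)$ can be read off from the explicit graph, and the bound $|A_i+B_i| = \Order(n/\alpha^5)$ is a deterministic consequence of the combinatorial BSG argument rather than something that has to be verified. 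The place where the real sumset algorithm is genuinely required is not in \cref{thm:bsg-corollary} but in the \emph{query phase} of \cref{thm:3sum-preprocessing}, where one must compute $(A_i \cap A') + (B_i \cap B')$ in output-sensitive time.
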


\begin{proof}[Proof of \cref{thm:3sum-preprocessing}]
Let $0 < \alpha < 1$ be a parameter to be determined later. In the preprocessing phase we apply the BSG Corollary to $(A, B, C)$ to compute in time $\widetilde\Order(n^2)$ the sets~$A_1, \dots, A_k$, $B_1, \dots, B_k$ and $R$. We store these sets for later use. This takes space $\Order(k n + \alpha n^2)$.

In the query phase, given $A' \subseteq A, B' \subseteq B, C' \subseteq C$, we first test whether there is a pair $(a, b) \in R$ with $a \in A', b \in B'$ and $a + b \in C'$, and report ``yes'' in this case. Otherwise, we use \cref{thm:sumset} to compute the sumsets $(A_i \cap A') + (B_i \cap B')$ for $i \gets 1, \dots, k$, and test whether they respectively intersect $C_i \cap A'$. If no 3SUM solution was found in this way, we stop and report~``no''. The correctness is obvious from the first property of the BSG Corollary. Furthermore, the running time of any query is bounded by
\begin{equation*}
    \widetilde\Order\parens*{|R| + \sum_{i=1}^k |A_i + B_i|} = \widetilde\Order(\alpha n^2 + k \cdot n / \alpha^5) = \widetilde\Order(\alpha n^2 + n / \alpha^6).
\end{equation*}
The proof is complete by choosing $\alpha = n^{-1/7}$.
\end{proof}
\section{Real Restricted Sumsets} \label{sec:sumset-restricted}
The purpose of this section is to formally prove that we can compute interval- and prefix-restricted sumsets of real numbers in the same running time obtained by Bringmann and Nakos~\cite{BringmannN20} for integer sets.

\paragraph{Interval-Restricted Sumsets}
We start with the result for interval-restricted sumsets:

\thmintervalsumset*
\begin{proof}
Let $g$ be a parameter to be fixed later. We partition $A = A_1 \sqcup \dots \sqcup A_g$ and $B = B_1 \sqcup \dots \sqcup B_g$ into subsets of size at most $\ceil{|A| / g}$ and $\ceil{|B| / g}$, respectively, and in such a way that
\begin{align*}
    \max(A_1) &< \min(A_2),\quad \dots,\quad \max(A_{g-1}) < \min(A_g), \\
    \max(B_1) &< \min(B_2),\quad \dots,\quad \max(B_{g-1}) < \min(B_g).
\end{align*}
We call a pair $(i, j) \in [g]^2$ \emph{partially relevant} if~\makebox{$[\min(A_i) + \min(B_j), \max(A_i) + \max(B_j)] \cap [\ell, u] \neq \emptyset$}, and \emph{totally relevant} if further $[\min(A_i) + \min(B_j), \max(A_i) + \max(B_j)] \subseteq [\ell, u]$. For each partially relevant pair $i, j \in [g]$ we compute the sumset $A_i + B_j$ in output-sensitive time using \cref{thm:sumset}, and include $(A_i + B_j) \cap [\ell, u]$ in the output.

The correctness is clear. To analyze the running time, we call $\mathcal{C}_\Delta = \set{(i, j) \in [g]^2 : i - j = \Delta}$ a chain; note that the only nonempty chains are $C_{-g+1}, \dots, C_{g-1}$. Each chain may contain several partially relevant pairs, but at most two of these are not totally relevant. Thus, we can bound
\begin{align*}
    \sum_{\substack{(i, j) \in [g]^2\\\text{part.\ rel.}}} |A_i + B_j|
    &= \sum_{\Delta=-g+1}^{g+1} \sum_{\substack{(i, j) \in \mathcal{C}_\Delta\\\text{part.\ rel.}}} |A_i + B_j| \\
    &\leq \sum_{\Delta=-g+1}^{g+1} \parens*{2 \ceil*{\frac{|A|}{g}} \ceil*{\frac{|B|}{g}} + \sum_{\substack{(i, j) \in \mathcal{C}_\Delta\\\text{tot.\ rel.}}} |A_i + B_j|} \\
    &\leq \sum_{\Delta=-g+1}^{g+1} \parens*{2 \ceil*{\frac{|A|}{g}} \ceil*{\frac{|B|}{g}} + |C|} \\
    &= \Order\parens*{\frac{|A| \, |B|}{g} + g |C|}.
\end{align*}
This is optimized by setting $g = \sqrt{|A| \, |B| / |C|}$. Of course, we have no knowledge of $|C|$ in advance, so instead we run $\log(|A| \, |B|)$ independent copies of our algorithm with exponentially increasing guesses of~$|C|$.
\end{proof}

\paragraph{Prefix-Restricted Sumsets}
Next, we consider prefix-restricted sumsets (i.e., where the left bound of the interval is $-\infty$).

\thmprefixsumset*

The proof relies on the following structural lemma which is an immediate consequence of applying Ruzsa's triangle inequality~\cite{Ruzsa96} twice:

\begin{lemma}[Corollary of Ruzsa's Triangle Inequality, \cite{Ruzsa96,BringmannN20}] \label{lem:ruzsa-corollary}
Let $A, B, C, D$ be subsets of an Abelian group. Then:
\begin{equation*}
    |A + B| \leq \frac{|A + C| \cdot |C + D| \cdot |D + B|}{|C| \cdot |D|}.
\end{equation*}
\end{lemma}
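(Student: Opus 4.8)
The plan is to derive this directly from the classical Ruzsa triangle inequality~\cite{Ruzsa96} for \emph{difference} sets, applied twice, exploiting that a sumset is a difference set after negating one of the two sets. Recall Ruzsa's inequality: for finite subsets $X, Y, Z$ of an Abelian group,
\begin{equation*}
    |X| \cdot |Y - Z| \leq |X - Y| \cdot |X - Z|,
\end{equation*}
which follows from the injection $(w, x) \mapsto (y_w - x,\ x - z_w)$ from $(Y - Z) \times X$ into $(Y - X) \times (X - Z)$, where for each $w \in Y - Z$ we have fixed a representation $w = y_w - z_w$ with $y_w \in Y$, $z_w \in Z$ (the sum of the two output coordinates recovers $w$, hence $y_w, z_w$, hence $x$). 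The only extra ingredient needed is that $A + B = A - (-B)$ and $|{-S}| = |S|$, so by choosing the auxiliary set with a sign flip we can steer which of the three sets appearing in the inequality is a sumset rather than a difference set.

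First I would apply Ruzsa's inequality with $Y = A$, $Z = -B$, and auxiliary set $X = -C$. Then $|X| = |C|$, $|Y - Z| = |A + B|$, $|X - Y| = |A + C|$, and $|X - Z| = |B - C|$, which gives
\begin{equation*}
    |A + B| \leq \frac{|A + C| \cdot |B - C|}{|C|}.
\end{equation*}
A ``leftover'' difference set $|B - C|$ has appeared, and I would control it by a second application of Ruzsa's inequality, this time with $Y = B$, $Z = C$, and auxiliary set $X = -D$: now $|X| = |D|$, $|Y - Z| = |B - C|$, $|X - Y| = |B + D|$, and $|X - Z| = |C + D|$, so
\begin{equation*}
    |B - C| \leq \frac{|B + D| \cdot |C + D|}{|D|}.
\end{equation*}
Substituting this into the previous bound and using $|B + D| = |D + B|$ yields exactly
\begin{equation*}
    |A + B| \leq \frac{|A + C| \cdot |C + D| \cdot |D + B|}{|C| \cdot |D|},
\end{equation*}
as claimed.

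I do not anticipate any genuine obstacle here; the whole content is sign bookkeeping. The one thing to get right is choosing, in each of the two applications, the auxiliary set (namely $-C$ and then $-D$) so that the ``$-$'' of the triangle inequality falls on the pairs we want to be sumsets, while the residual difference set $|B - C|$ produced by the first step is precisely the one the second step absorbs. It is also worth a sanity check that Ruzsa's inequality is being invoked in its $|X| \cdot |Y - Z| \le |X - Y| \cdot |X - Z|$ orientation; some references state the equivalent form $|X - Z| \le |X - Y| \cdot |Y - Z| / |Y|$, and either suffices.
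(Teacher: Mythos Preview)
Your proof is correct and matches the paper's approach exactly: the paper does not spell out the argument but simply states that the lemma ``is an immediate consequence of applying Ruzsa's triangle inequality twice,'' which is precisely what you do. Your sign bookkeeping (choosing $-C$ and then $-D$ as the auxiliary set so that the desired pairs become sumsets and the intermediate $|B-C|$ is absorbed) is the right way to make the two applications line up.
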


\begin{lemma}[Sumset with Time Budget] \label{lem:sumset-budget}
There is a randomized algorithm that, given $A, B \subseteq \Real$ and an integer $s \geq 1$, either (correctly) computes the set $A + B$, or claims (with high probability) that $|A + B| \geq s$. The algorithm runs in worst-case time $\widetilde\Order(s)$.
\end{lemma}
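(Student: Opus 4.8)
The plan is to derive \cref{lem:sumset-budget} almost immediately from the output-sensitive algorithm of \cref{thm:sumset} by running it under a timer. The key structural observation is that, whenever $A$ and $B$ are both nonempty, $|A+B| \ge \max(|A|,|B|)$ — fixing any $b \in B$, the translate $A+b$ is a subset of $A+B$ — and hence $|A|+|B| \le 2|A+B|$. I would first dispose of the easy cases: if $A$ or $B$ is empty, output $A+B=\emptyset$; and if $\max(|A|,|B|) > s$, \emph{deterministically} report ``$|A+B|\ge s$''. In all remaining cases $|A|,|B|\le s$, so the input size is $\Order(s)$ and any time bound of the shape $\widetilde\Order(|A|+|B|+|A+B|)$ equals $\widetilde\Order(s)$ exactly when $|A+B|\le s$.

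The algorithm then simply runs the Las Vegas sumset algorithm of \cref{thm:sumset} on $(A,B)$, aborting the instant its step count exceeds $T := c_0\, s\log^{c_1}s$, where $c_0,c_1$ are constants chosen large enough that \cref{thm:sumset}'s high-probability running-time bound $\widetilde\Order(|A+B|)$ stays below $T$ whenever $|A+B|\le s$. If the run finishes within the budget, then since \cref{thm:sumset} is zero-error it has correctly computed $A+B$, and we output it; if the run is aborted, we report ``$|A+B|\ge s$''. The worst-case running time is $\Order(T)=\widetilde\Order(s)$ by construction, and output in the first branch is always correct. (If one prefers the report to be \emph{unconditionally} correct, one can prepend a deterministic check of whether $|A+B|>s$ using the sparsity tester for nonnegative functions (\cref{lem:sparsity-tester}) applied to $\One_A\conv\One_B$: the power sums $\Sigma^i(\One_A\conv\One_B)$ for $i\le 2s-2$ are obtained via \cref{lem:eval-op,lem:conv-op}, and the associated $s\times s$ Hankel determinant is computed, all in time $\widetilde\Order(s)$; only after $|A+B|\le s$ is confirmed does one invoke \cref{thm:sumset}, whose output can then be verified against the deterministically-known value $|A+B|$ and re-run on failure.)

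Finally, a report ``$|A+B|\ge s$'' can be wrong only if $|A+B|<s$, and in that case — since at that point $|A|,|B|\le s$, so the instance handed to \cref{thm:sumset} has size $\Order(s)$ — the Las Vegas termination guarantee says the run finishes within $\widetilde\Order(|A+B|)\le T$ steps, hence outputs $A+B$ correctly, with high probability; so with high probability no false report occurs, which is exactly the claimed behaviour. I expect the whole argument to be entirely routine; the one point that will need care is the one already flagged, namely that \cref{thm:sumset}'s cost is governed by $|A+B|$ rather than by $s$ directly, so one must use $|A|+|B|=\Order(|A+B|)$ to be sure the timer never fires spuriously in the regime $|A+B|\le s$.
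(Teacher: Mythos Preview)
Your proposal is correct and follows essentially the same approach as the paper: run the Las Vegas sumset algorithm of \cref{thm:sumset} under a time budget of $\widetilde\Order(s)$, output the sumset if it terminates, and otherwise report $|A+B|\ge s$. Your treatment is in fact slightly more careful than the paper's (explicitly handling empty inputs and the case $\max(|A|,|B|)>s$, and spelling out why $|A|+|B|=\Order(|A+B|)$ prevents spurious timeouts), and the parenthetical about the sparsity tester is a nice but unnecessary strengthening.
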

\begin{proof}
We simply run \cref{thm:sumset} to compute the sumset $A + B$, but interrupt the computation if it exceeds $\widetilde\Theta(s)$ computation steps (for some appropriate hidden log-factor overhead depending on \cref{thm:sumset}); in this case we report that $|A + B| \geq s$. Indeed, if $|A + B| \leq s$, then \cref{thm:sumset} computes the sumset in time $\widetilde\Order(s)$ with high probability. Moreover, if \cref{thm:sumset} terminates, then the sumset is certainly correct.
\end{proof}

\begin{lemma} \label{lem:quad-tree-heavy}
Let $A, B \subseteq [0, u]$, let $C = (A + B) \cap [0, u]$, and consider partitions $A_1 \sqcup \dots \sqcup A_g$ and $B = B_1 \sqcup \dots \sqcup B_g$ satisfying the following four conditions:
\begin{enumerate}[label=(\roman*)]
    \item $\max(A_1) < \min(A_2), \dots, \max(A_{g-1}) < \min(A_g)$, and\\$\min(B_1) > \max(B_2), \dots, \min(B_{g-1}) > \max(B_g)$.
    \item $A_i + B_j \subseteq [0, u]$ for all $i < j$.
    \item $|A_1|, \dots, |A_g| \in [n, 2n]$ and $|B_1|, \dots, |B_g| \in [m, 2m]$ for some integers $n, m$.
    \item $|A_\ell + B_\ell| > |C|$.
\end{enumerate}
Then $g < 258 |C|^{1/3}$.
\end{lemma}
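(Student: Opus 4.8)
The plan is to combine Ruzsa's triangle inequality (\cref{lem:ruzsa-corollary}) with the structural consequences of conditions~(i)--(ii). The case $g = 1$ is trivial (then $g = 1 < 258 |C|^{1/3}$ as soon as $C \neq \emptyset$, which holds since $A, B \neq \emptyset$), so assume $g \ge 2$.

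Write $A^{<\ell} = A_1 \cup \dots \cup A_{\ell-1}$ and $B^{>\ell} = B_{\ell+1} \cup \dots \cup B_g$. The first observation is that condition~(ii) places many block-sumsets inside $C$: whenever every block-index on the left is smaller than every block-index on the right, the corresponding sumset is contained in $(A+B) \cap [0,u] = C$. In particular $A^{<g} + B_g \subseteq C$ and $A_1 + B^{>1} \subseteq C$, and since $|X + Y| \ge |X|$ always holds, condition~(iii) gives $|C| \ge (g-1)n$ and $|C| \ge (g-1)m$. The second observation is that the heaviness condition~(iv) forces the blocks to be large: since $|A_\ell + B_\ell| \le |A_\ell| \cdot |B_\ell| \le 4nm$ by condition~(iii), we get $4nm \ge |A_\ell + B_\ell| > |C|$, i.e.\ $nm > |C|/4$; together with the previous bounds this yields $\min(n,m) > (g-1)/4$ and $\max(n,m) > \sqrt{|C|}/2$.

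The core of the argument is to apply \cref{lem:ruzsa-corollary} to the quadruple $(A_\ell, B_\ell, B^{>\ell}, A^{<\ell})$. Here we use that the three sumsets $A_\ell + B^{>\ell}$, $A^{<\ell} + B^{>\ell}$ and $A^{<\ell} + B_\ell$ are all subsets of $C$ (every pair of block-indices involved satisfies the required inequality), which gives
\begin{equation*}
    |A_\ell + B_\ell| \;\le\; \frac{|A_\ell + B^{>\ell}| \cdot |A^{<\ell} + B^{>\ell}| \cdot |A^{<\ell} + B_\ell|}{|A^{<\ell}| \cdot |B^{>\ell}|} \;\le\; \frac{|C|^{3}}{|A^{<\ell}| \cdot |B^{>\ell}|}.
\end{equation*}
Combining this with $|A_\ell + B_\ell| > |C|$ (condition~(iv)) and the lower bounds $|A^{<\ell}| \ge (\ell-1)n$, $|B^{>\ell}| \ge (g-\ell)m$ shows that $(\ell-1)(g-\ell)\,nm < |C|^2$. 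Feeding in the block-size bounds from the previous paragraph and distinguishing according to the position of $\ell$ then bounds $g$ from above; the bookkeeping, chasing the constant down to $258$, completes the proof.

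The step I expect to be the main obstacle is pushing the exponent of $|C|$ from $1/2$ down to $1/3$: the chain of inequalities above, assembled naively, only yields $g = \Order(|C|^{1/2})$, because all our lower bounds on $|C|$ come from a single ``staircase'' of sumsets and contribute only a factor $(g-1)\max(n,m) \approx g^2$. To reach $|C| = \Omega(g^3)$ one must exploit condition~(i) more seriously --- morally, that the well-separated $A$-blocks force the sumsets $A_i + B_j$ with $i < j$ to occupy $\Omega(g)$ essentially disjoint ``scales'' inside $C$, rather than collapsing onto one another --- and then carefully balance this against the Ruzsa bound. A minor additional nuisance is the degenerate case where $\ell$ lies within a constant of $1$ or $g$, so that $A^{<\ell}$ or $B^{>\ell}$ is too short for the Ruzsa step to bite and a direct counting argument is needed instead.
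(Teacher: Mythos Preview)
You have the right skeleton and you correctly diagnose your own gap: as written, your argument only gives $g = \Order(|C|^{1/2})$, and the missing cube root is not a matter of bookkeeping. The issue is structural: applying Ruzsa's corollary to the \emph{unions} $A^{<\ell}$ and $B^{>\ell}$ forces you to bound each of the three numerator sumsets crudely by $|C|$, and no amount of ``disjoint scales'' intuition will rescue that step, because once you have passed to the union the anti-diagonal disjointness is gone.

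The paper's proof fixes this by applying Ruzsa not to unions but to a carefully chosen triple of \emph{single} blocks $i < \ell < j$, and the whole point is to arrange that each of $|A_\ell + B_j|$, $|A_i + B_j|$, $|A_i + B_\ell|$ is at most $\Order(|C|/g)$ rather than $|C|$. This is achieved by a double averaging argument. First, condition~(i) makes the sumsets $\{A_i + B_j : i + j = \Delta,\ i < j\}$ pairwise disjoint for each fixed anti-diagonal $\Delta$; since by~(ii) they all lie in $C$, one gets $\sum_{i<j} |A_i + B_j| \leq 2g\,|C|$. Second, summing $S(\ell) := \sum_{i<\ell<j}\big(|A_\ell+B_j|+|A_i+B_j|+|A_i+B_\ell|\big)$ over $\ell$ and comparing term-by-term shows $\sum_\ell S(\ell) \leq 3g \sum_{i<j}|A_i+B_j| \leq 6g^2|C|$, so some $\ell$ in the middle third has $S(\ell) = \Order(g|C|)$; averaging once more over the $\Omega(g^2)$ pairs $(i,j)$ with $i<\ell<j$ produces a specific triple with $|A_\ell+B_j|+|A_i+B_j|+|A_i+B_\ell| = \Order(|C|/g)$. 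Now Ruzsa gives $|A_\ell+B_\ell| \leq \Order(|C|^3/(g^3 nm))$, and your own observation $4nm \geq |A_\ell+B_\ell| > |C|$ finishes the job. So the fix is: drop the unions, use anti-diagonal disjointness to bound $\sum_{i<j}|A_i+B_j|$, and double-average to locate good individual blocks before invoking Ruzsa.
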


Let us assume for the moment that \cref{lem:quad-tree-heavy} is given, and first focus on completing the proof of \cref{thm:prefix-sumset}. We provide the proof of \cref{lem:quad-tree-heavy} right after.

\begin{proof}[Proof of \cref{thm:prefix-sumset}]
First, note that if $A$ contains negative elements, then we can simply shift~$A$ and $u$ by $\min(A)$ (and later apply the reverse transformation to $C$). We can similarly assume that~$B$ contains only nonnegative entries. After removing all elements bigger than $u$ from $A$ and~$B$, this preprocessing ensures that $A, B \subseteq [0, u]$.

Suppose for now that we are given an estimate $|C| \leq s \leq 2|C|$; we will later remove this assumption. Consider the following recursive algorithm: We first attempt to compute~\makebox{$A + B$} by \cref{lem:sumset-budget} with parameter $s$. If the lemma succeeds we return~\makebox{$(A + B) \cap [0, u]$}. Otherwise, if the lemma claims that $|A + B| > s$, partition $A = A_1 \sqcup A_2$ such that~\smash{$|A_1|, |A_2| \leq \ceil{\frac{|A|}{2}}$} and~\makebox{$\max(A_1) < \min(A_2)$}. Next, partition $B = B_1 \sqcup B_2$ such that~\makebox{$B_1 = B \cap [0, u - \max(A_1)]$}. We then compute $A_1 + B_1$ in near-linear time by \cref{thm:sumset}, and compute $(A_1 + B_2) \cap [0, u]$ and~\makebox{$(A_2 + B_1) \cap [0, u]$} recursively. The output is the union of these three sets. (We can discard the missing set $A_2 + B_2$ as it exceeds the threshold $u$.)

The correctness is immediate, so we focus on the running time. Ignoring the cost of recursive calls, the algorithm takes time $\widetilde\Order(s)$ (by \cref{lem:sumset-budget}) plus $\widetilde\Order(|A_1 + B_1|)$ (by \cref{thm:sumset}). As~\makebox{$s \leq 2|C|$} and since $A_1 + B_1 \subseteq [0, u]$ (and thus $A_1 + B_1 \subseteq C$), both contributions are bounded by~\smash{$\widetilde\Order(|C|)$}. To obtain the claimed bound on the running time, we therefore prove in the following that the number of recursive calls is bounded by $\Order(|C|^{1/3} \log |B|)$.

Suppose otherwise that at some level the algorithm reaches more than $516 |C|^{1/3} \cdot \log |B|$ recursive calls. Then in the previous level there are at least $258 |C|^{1/3} \cdot \log |B|$ recursive calls that are not immediately solved by the call to \cref{lem:sumset-budget}. Let $(A_1, B_1), \dots, (A_{g'}, B_{g'})$ denote the inputs to these recursive calls. It is easy to show by induction that these inputs form a ``staircase'' in the sense that
\begin{align*}
    &\max(A_1) < \min(A_2),\quad \dots,\quad \max(A_{g'-1}) < \min(A_{g'}), \\
    &\min(B_1) > \max(B_2),\quad \dots,\quad \min(B_{g'-1}) > \max(B_{g'}),
\end{align*}
and that $A_i + B_j \subseteq [0, u]$ for all $i < j$. Moreover, with high probability none of the calls to \cref{lem:sumset-budget} have failed and we indeed have $|A_1 + B_1|, \dots, |A_{g'} + B_{g'}| > s \geq |C|$. All in all, this shows that the sets $A_1, \dots, A_{g'}, B_1, \dots, B_{g'}$ satisfy the conditions~(i),~(ii) and~(iv) of \cref{lem:quad-tree-heavy}. To also satisfy condition~(iii), note on the one hand that all sets $A_1, \dots, A_{g'}$ have the same size (up to $\pm 1$). On the other hand, by a pigeonhole argument there is some $m \in [|B|]$ such that at least~$g := 258 |C|^{1/3}$ sets from $B_1, \dots, B_{g'}$ have size at least $m$ and at most $2m$. We restrict our attention to these $g$ sets which satisfy all conditions from \cref{lem:quad-tree-heavy}, and thereby cause a contradiction.

We finally remove the assumption that an estimate $|C| \leq s \leq 2|C|$ is known in advance. To this end we simply run $L = \log(|A| + |B|)$ copies of the previous algorithm in parallel with estimates~\makebox{$s = 2^0, 2^1, \dots, 2^{L-1}$}. We stop as soon as one of these executions terminates (noting that the previous algorithm is Las Vegas and will only return the correct set $C$). The running time is $\widetilde\Order(|A| + |B| + |C|^{4/3})$.
\end{proof}

\begin{proof}[Proof of \cref{lem:quad-tree-heavy}]
For each $\ell \in [g]$, define
\begin{equation*}
    S(\ell) := \sum_{\substack{i, j \in [g]\\i < \ell < j}} |A_\ell + B_j| + |A_i + B_j| + |A_i + B_\ell|.
\end{equation*}
By comparing how often each term $|A_i + B_j|$ appears in the sum, we can bound
\begin{equation*}
    \sum_{\ell \in [g]} S(\ell) \leq 3g \cdot \sum_{\substack{i, j \in [g]\\i < j}} |A_i + B_j|.
\end{equation*}
To bound this expression further, let us write $\mathcal{C}_\Delta = \set{(i, j) \in [g]^2 : i + j = \Delta, i < j}$. From condition~(i) we infer that, for any fixed $\Delta$, the sets~\smash{$\set{A_i + B_j}_{(i, j) \in \mathcal{C}_\Delta}$} are pairwise disjoint. By condition~(ii), it follows that $\sum_{(i, j) \in \mathcal{C}_\Delta} |A_i + B_j| \leq |\bigcup_{(i, j) \in \mathcal{C}_\Delta} (A_i + B_j)| \leq |C|$, and therefore
\begin{equation*}
    \sum_{\ell \in [g]} \leq 3g \cdot \sum_{\substack{i, j \in [g]\\i < j}} |A_i + B_j| \leq 3g \cdot \sum_{\Delta=1}^{2g-1} \sum_{(i, j) \in \mathcal{C}_\Delta} |A_i + B_j| \leq 6g \cdot |C|.
\end{equation*}
In particular, there is some $\frac13 g < \ell < \frac23 g$ satisfying that $S(\ell) \leq 18g \cdot |C|$. Fixing this $\ell$, there are at least $\frac13 g$ values $i \in [g]$ with $i < \ell$, and similarly there are at least $\frac13 g$ values $j \in [g]$ with $\ell < j$. Thus, there exist $i < \ell < j$ such that
\begin{equation*}
    |A_\ell + B_j| + |A_i + B_j| + |A_i + B_\ell| \leq \frac{162g \cdot |C|}{g^2} = \frac{162 |C|}{g}.
\end{equation*}
By the corollary of Ruzsa's triangle inequality (\cref{lem:ruzsa-corollary}) we conclude that
\begin{equation*}
    |A_\ell + B_\ell| \leq \frac{|A_\ell + B_j| \cdot |A_i + B_j| \cdot |A_i + B_\ell|}{|A_i| \cdot |B_j|} \leq \frac{162^3 |C|^3}{g^3 n m},
\end{equation*}
using assumption~(iii) in the last step. Note that $4 n m \geq |A_\ell| \cdot |B_\ell| \geq |A_\ell + B_\ell|$. Therefore, and using assumption~(iv), we finally have that
\begin{equation*}
    g^3 \leq \frac{162^3 |C|^3}{|A_\ell + B_\ell| \cdot nm} \leq \frac{4 \cdot 162^3 |C|^3}{|A_\ell + B_\ell|^2} \leq 4 \cdot 162^3 |C|,
\end{equation*}
and $g \leq 4^{1/3} \cdot 162 |C|^{1/3} \leq 258 |C|^{1/3}$ follows.
\end{proof}
\section{Real Subset Sum} \label{sec:subsetsum}
In this section we prove that some algorithms for the Subset Sum problem can be adapted to the reals.

\paragraph{Subset Sum without a Target}
We start with the simpler problem of computing \emph{all} subset sums $\SSS(A) = \set{\Sigma(A') : A' \subseteq A}$ (without any target restriction). Over the integers it is folklore that $\SSS(A)$ can be computed in near-linear output-sensitive time by repeated sumset computations. We show that $\SSS(A)$ can similarly computed for the reals:

\begin{restatable}[Real Subset Sum without Target]{theorem}{thmsubsetsum} \label{thm:subset-sum}
There is a Las Vegas algorithm that, given a multiset $X \subseteq \Realnneg$, computes the set of subset sums $\mathcal S(X)$ in time~\smash{$\widetilde\Order(|\mathcal S(X)|)$}.
\end{restatable}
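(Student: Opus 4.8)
\textbf{Proof proposal for \cref{thm:subset-sum}.}
The plan is to compute $\mathcal{S}(X)$ incrementally by repeatedly adding elements of $X$, using the real sumset algorithm (\cref{thm:sumset}) as the workhorse, and controlling the running time via a balanced divide-and-conquer over the multiset $X$. Write $X = \set{x_1, \dots, x_n}$. The key identity is that $\mathcal{S}(X) = \mathcal{S}(X_1) + \mathcal{S}(X_2)$ whenever $X = X_1 \sqcup X_2$, since a subset of $X$ splits uniquely into a subset of $X_1$ and a subset of $X_2$ and the sums add (this uses $X \subseteq \Realnneg$ only in that $\mathcal{S}$ is closed under the relevant operations; no cancellation issues arise because we only take sumsets of sets of reals, not signed convolutions). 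So the algorithm is: partition $X$ into two halves $X_1, X_2$ of size $\lceil n/2 \rceil$ and $\lfloor n/2 \rfloor$, recursively compute $\mathcal{S}(X_1)$ and $\mathcal{S}(X_2)$, and then return $\mathcal{S}(X_1) + \mathcal{S}(X_2)$ via \cref{thm:sumset}. At the leaves, $\mathcal{S}(\set{x}) = \set{0, x}$.

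The correctness is immediate from the split identity above by induction. The crux is the running time analysis, for which I would use the monotonicity fact $\mathcal{S}(Y) \subseteq \mathcal{S}(X)$ whenever $Y \subseteq X$ (again using nonnegativity only to keep all subset sums in play; actually this holds for any multiset). Hence every set arising at every node of the recursion tree — both the recursively computed $\mathcal{S}(X_i)$ and the intermediate sumset — has size at most $|\mathcal{S}(X)| =: N$. The recursion tree has depth $O(\log n)$, and at each level the total work is $\sum_{\text{nodes } v \text{ at that level}} \widetilde{O}(|\mathcal{S}(X_v)|)$. Since the $X_v$ at a fixed level partition $X$, but their subset-sum sets need not be disjoint, I bound each term individually by $\widetilde{O}(N)$ and the number of nodes at a level by $n$, giving $\widetilde{O}(nN)$ per level and $\widetilde{O}(nN \log n)$ overall — which is \emph{not} quite $\widetilde{O}(N)$. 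To fix this, I would instead process the elements one at a time in a left-leaning chain (or equivalently only recurse when a half is large), so that the running time becomes $\sum_{i=1}^{n} \widetilde{O}(|\mathcal{S}(\set{x_1, \dots, x_i})|) \le \sum_{i=1}^n \widetilde{O}(N) = \widetilde{O}(nN)$ — still too slow. The real fix: note $n \le \log_2 |\mathcal{S}(X)|$ is false in general (e.g. all $x_i$ equal), but we may first collapse the multiset: if an element $x$ has multiplicity $k$, then $\mathcal{S}(\set{x^{(k)}}) = \set{0, x, 2x, \dots, kx}$ has size $k+1 \le N$, so we handle all copies of one distinct value in a single batch in time $\widetilde{O}(N)$; after batching there are at most $\min(n, N)$ distinct values, but more usefully, after the first batch the running set already has size $\ge 2$, and more generally after processing $j$ distinct values the set has size $\ge j+1$, so $j < N$ throughout, giving total time $\sum_{j} \widetilde{O}(N)$ over $j < N$ batches, i.e. $\widetilde{O}(N^2)$ — \emph{still} not right.

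The correct approach, which I expect the authors use, is the \textbf{doubling trick}: since $\mathcal{S}(\set{x_1,\dots,x_i}) + \mathcal{S}(\set{x_{i+1},\dots,x_{2i}}) = \mathcal{S}(\set{x_1,\dots,x_{2i}})$, repeatedly doubling the prefix processed means only $O(\log n)$ sumset operations are needed to reach $\mathcal{S}(X)$ \emph{provided} we can build the second operand cheaply — and we build it recursively. Organizing this as a balanced binary tree but charging work using the potential $\Phi = |\mathcal{S}(X)|$: the total size summed over all nodes of the balanced recursion tree is $\sum_v |\mathcal{S}(X_v)|$, and the key combinatorial claim is that this is $\widetilde{O}(|\mathcal{S}(X)|)$ because at depth $d$ the sets $\mathcal{S}(X_v)$, while not disjoint, satisfy $\prod_v |\mathcal{S}(X_v)| \ge |\mathcal{S}(X)|$ (each element of $\mathcal{S}(X)$ decomposes into one element per block), hence by AM-GM the \emph{sum} is at least $2^d |\mathcal{S}(X)|^{1/2^d} \cdot$(something) — that direction is unhelpful; the useful bound is the trivial one $|\mathcal{S}(X_v)| \le |\mathcal{S}(X)|$ combined with the observation that $\sum_v |\mathcal{S}(X_v)|$ at a single level can in fact be as large as $n \cdot N$. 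Therefore the \emph{genuine} resolution must be that the leaves are not singletons but that we cap the recursion: we only recurse on a block when its current subset-sum set is still small, and otherwise a single \cref{thm:sumset} call on two already-computed halves of size $\le N$ costs $\widetilde{O}(N)$, and there are only $O(\log n) = O(\log N)$ (since $n = O(\text{poly}(N))$ is \emph{not} automatic — but $\log n$ levels suffice and $\log n \le \log(\text{input size})$ which the paper's $\widetilde{O}$ absorbs) such "merge" operations along any root-to-leaf path, with leaves handled by the multiplicity batching above in $\widetilde{O}(N)$ each and at most $N$ of them. I expect the main obstacle is precisely this amortization: showing the total cost of all sumset calls is $\widetilde{O}(N)$ rather than $\widetilde{O}(n N)$, which requires arranging the merges so that a node's cost is charged against the \emph{growth} of the subset-sum set it produces, exploiting that each doubling step at least doubles the potential until it saturates at $N$, capping the number of "productive" merges at $O(\log N)$ per path and $O(N)$ in total across the leaves; the remaining merges are each $\widetilde{O}(N)$ and there are $\widetilde{O}(1)$ of them, so the grand total is $\widetilde{O}(N)$.
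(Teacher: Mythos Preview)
Your algorithm is exactly the paper's: balanced divide-and-conquer, merge via \cref{thm:sumset}. The correctness argument is fine. The gap is the running time analysis, where you go in circles and never land on the key inequality.

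The missing ingredient is the \emph{iterated sumset lower bound}: for finite sets $C_1,\dots,C_k\subseteq\Real$ one has
\[
|C_1+\dots+C_k|\;\ge\;|C_1|+\dots+|C_k|-k+1,
\]
which follows from $k-1$ applications of the elementary bound $|A+B|\ge |A|+|B|-1$. Apply this at any fixed level $\ell$ of the recursion tree, where the blocks $X_v$ partition $X$ and hence $\sum_v \mathcal S(X_v)=\mathcal S(X)$ (as a sumset over the $2^\ell$ blocks). The inequality gives
\[
\sum_v |\mathcal S(X_v)|\;\le\;|\mathcal S(X)|+2^\ell-1\;\le\;|\mathcal S(X)|+n.
\]
Since $X\subseteq\Realnneg$ (discard zeros first), the prefix sums $0<x_1<x_1+x_2<\cdots$ are distinct, so $n\le|\mathcal S(X)|$. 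Thus each of the $\log n\le\log|\mathcal S(X)|$ levels costs $\widetilde\Order(|\mathcal S(X)|)$, and you are done.

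Your detours (left-chain, multiplicity batching, ``doubling trick'', potential/growth charging) all stall because you only use the upper bound $|\mathcal S(X_v)|\le|\mathcal S(X)|$ per node, which cannot beat $\widetilde\Order(nN)$. The point is that the sumset lower bound turns this into a \emph{global} bound on the \emph{sum} of sizes across a level, and that is what makes the balanced tree work out of the box.
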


We start with two simple lower bounds on the size of (iterated) sumsets; see e.g.~\cite[Lemma~5.3]{TaoV06}:

\begin{lemma} \label{lem:sumset-lower-bound}
Let $A, B \subseteq \Real$. Then $|A + B| \geq |A| + |B| - 1$.
\end{lemma}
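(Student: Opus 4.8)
The plan is to use the standard ordering argument for sumsets over an ordered abelian group. First I would dispose of the trivial case $A = \emptyset$ or $B = \emptyset$, where the claimed bound reads $|A + B| \geq |A| + |B| - 1 \leq 0$ and holds vacuously (the right-hand side is at most $-1$ when one set is empty, and nonnegative sumset sizes dominate it). So assume $A, B$ are both nonempty and finite.

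Write $A = \{a_1 < a_2 < \dots < a_p\}$ and $B = \{b_1 < b_2 < \dots < b_q\}$, where $p = |A|$ and $q = |B|$. The key step is to exhibit $p + q - 1$ distinct elements of $A + B$. Consider the chain
\begin{equation*}
    a_1 + b_1 < a_1 + b_2 < \dots < a_1 + b_q < a_2 + b_q < a_3 + b_q < \dots < a_p + b_q.
\end{equation*}
Each inequality is strict: the first $q - 1$ steps increase the $B$-coordinate while fixing $a_1$, and the last $p - 1$ steps increase the $A$-coordinate while fixing $b_q$; strict monotonicity of addition in an ordered group (here $\Real$) makes every consecutive pair strictly increasing. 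Hence these are $q + (p - 1) = p + q - 1$ pairwise distinct sums, all lying in $A + B$, which gives $|A + B| \geq p + q - 1 = |A| + |B| - 1$.

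There is no real obstacle here — the only thing to be slightly careful about is the boundary case where one set is empty, and the fact that we are genuinely using the order structure of $\Real$ (the statement is false in a general abelian group, e.g.\ $\mathbb{Z}/2\mathbb{Z}$). Since the ambient structure in the lemma is $\Real$ with its usual order, the argument above goes through directly.
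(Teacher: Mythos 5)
Your proof is correct and uses essentially the same monotone-chain argument as the paper, differing only cosmetically: you first vary the $B$-coordinate from $a_1$ and then the $A$-coordinate at $b_q$, whereas the paper first varies the $A$-coordinate from $b_1$ and then the $B$-coordinate at $a_{|A|}$. Your extra remark about the empty-set edge case is harmless but not needed for the application.
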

\begin{proof}
Let $a_1 < \dots < a_{|A|}$ and $b_1 < \dots < b_{|B|}$ denote the elements of $A$ and $B$ respectively. Then the sequence of $|A| + |B| - 1$ elements $a_1 + b_1 < \dots < a_{|A|} + b_1 < \dots < a_{|A|} + b_{|B|}$, all of which are contained in $A + B$, proves the claim.
\end{proof}

\begin{lemma} \label{lem:iterated-sumset-lower-bound}
Let $A_1, \dots, A_k \subseteq \Real$. Then $|A_1 + \dots + A_k| \geq |A_1| + \dots + |A_k| - k + 1$.
\end{lemma}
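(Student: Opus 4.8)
The plan is to prove this by a straightforward induction on $k$, using the two-set bound from \cref{lem:sumset-lower-bound} as the engine. The base case $k = 1$ is trivial, since the inequality reads $|A_1| \geq |A_1|$. (If one prefers, $k = 2$ is precisely \cref{lem:sumset-lower-bound}.)

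For the inductive step, assume the claim holds for $k - 1$ sets. Writing $A_1 + \dots + A_k = (A_1 + \dots + A_{k-1}) + A_k$ and applying \cref{lem:sumset-lower-bound} to the two sets $A_1 + \dots + A_{k-1}$ and $A_k$, I would obtain
\begin{equation*}
    |A_1 + \dots + A_k| \geq |A_1 + \dots + A_{k-1}| + |A_k| - 1.
\end{equation*}
Plugging in the induction hypothesis $|A_1 + \dots + A_{k-1}| \geq |A_1| + \dots + |A_{k-1}| - (k-1) + 1$ gives
\begin{equation*}
    |A_1 + \dots + A_k| \geq |A_1| + \dots + |A_{k-1}| - (k-1) + 1 + |A_k| - 1 = |A_1| + \dots + |A_k| - k + 1,
\end{equation*}
which is exactly the desired bound. (One should note the sets are nonempty for the statement to be meaningful; if some $A_i = \emptyset$ the sumset is empty and the claim is interpreted vacuously or simply excluded.)

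There is no real obstacle here: the only subtlety is the bookkeeping of the constant, making sure the ``$-k+1$'' telescopes correctly through the induction, and this is the routine calculation shown above. The entire content of the lemma is carried by \cref{lem:sumset-lower-bound}, which already exploits the order structure of $\Real$ (monotone chains of sums), so nothing new needs to be invented for the iterated version.
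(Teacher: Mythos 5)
Your proof is correct and takes essentially the same approach as the paper, which simply states ``Apply the previous lemma $k-1$ times''; you have merely spelled out the telescoping induction explicitly.
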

\begin{proof}
Apply the previous lemma $k - 1$ times.
\end{proof}

\begin{proof}[Proof of \cref{thm:subset-sum}]
The idea is to use a simple divide-and-conquer algorithm: If $A$ contains a single element $a$, then we return $\SSS(A) = \set{0, a}$ in constant time. Otherwise, we arbitrarily partition $A = A_1 \cup A_2$ into multisets $A_1, A_2$ of equal size (up to $\pm 1$). We compute $\SSS(A_1)$ and $\SSS(A_2)$ recursively, and then compute $\SSS(A) = \SSS(A_1) + \SSS(A_2)$ by one sparse sumset computation (\cref{thm:sumset}).

The correctness is immediate, but it remains to analyze the running time. Note that the recursion tree has depth $\log n$. Focus on any level $\ell \in [\log n]$. This level induces a partition into at most $2^\ell$ submultisets $A_1, \dots, A_{2^\ell} \subseteq A$. The running time of level $\ell$ is dominated by computing the sets $\SSS(A_1), \dots, \SSS(A_{2^\ell})$ near-linear output-sensitive time. By the previous lemma, this time can be bounded as follows:
\begin{equation*}
    \widetilde\Order\parens*{\sum_{i=1}^{2^\ell} |\SSS(A_i)|} \leq \widetilde\Order\parens{|\SSS(A_1) + \dots + \SSS(A_{2^\ell})| + 2^\ell} \leq \widetilde\Order(|\SSS(A)| + n) = \widetilde\Order(|\SSS(A)|).
\end{equation*}
Summing over all $\log n \leq \log |\SSS(A)|$ levels yields the claimed time bound.
\end{proof}

\paragraph{Subset Sum with a Target}
The more interesting problem is to compute all subsets sums below a prespecified target $t \in \Realnneg$, $\SSS(A, t) = \SSS(A) \cap [0, t]$. We replicate Bringmann and Nakos' result~\cite{BringmannN20} that $\SSS(A, t)$ can be computed in roughly time $|\SSS(A, t)|^{4/3}$ for the reals:

\thmsubsetsumcapped*

Having already established that (prefix-restricted) sumsets can be computed efficiently (\cref{thm:sumset,thm:prefix-sumset}), the remaining proof of \cref{thm:subsetsum-capped} does not differ much from Bringmann and Nakos' original version. We will therefore keep the presentation concise. The only issue that arises is that, in contrast to the integer setting, for the reals we cannot assume to (approximately) know~$|\SSS(A, t)|$ in advance. We deal with this issue by adding another layer of recursion (similar to the scaling trick, but somewhat more involved, see the proof of \cref{thm:subsetsum-capped}).

The general idea is to use Bringmann's two-level color-coding technique~\cite{Bringmann17}, and to use prefix-restricted sumset computations to avoid computing subset sums above the threshold $t$. Throughout, we assume that $A$ is preprocessed in such a way that~\makebox{$n \leq |\SSS(A, t)|$} (i.e., we remove elements bigger than $t$ or reduce their multiplicity appropriately).

\begin{lemma}[Second Level] \label{lem:subsetsum-second-level}
There is a Monte Carlo algorithm that, given a multiset $A \subseteq [u, 2u]$ and $t \in \Realnneg$, computes a set $S \subseteq \SSS(A, t)$ in time $\widetilde\Order(\ceil{\frac{t}{u}}^2 \cdot |\SSS(A, t)|^{4/3})$. Each subset sum $x \in \SSS(A, t)$ is contained in $S$ with probability at least $\frac12$. 
\end{lemma}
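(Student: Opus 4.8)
The plan is to use Bringmann's two-level color-coding technique, tailored so that we only ever compute subset sums up to the target~$t$. Since all elements of $A$ lie in $[u, 2u]$, any subset using more than $\ceil{t/u} + 1$ elements already exceeds~$t$; hence every subset sum in $\SSS(A, t)$ comes from a subset of size at most $k := \Order(\ceil{t/u})$. First I would randomly partition $A$ into $\Theta(k^2)$ color classes $A_1, \dots, A_r$; by a standard birthday-paradox argument, any fixed subset $A' \subseteq A$ of size at most $k$ is ``split'' (i.e.\ each $A_i$ contains at most one element of $A'$) with probability at least $\frac12$. Conditioned on splitting, $\Sigma(A')$ decomposes as a sum of one choice (either $0$ or one element) from each color class.

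Next I would compute the desired set by combining the per-color contributions through prefix-restricted sumset computations (\cref{thm:prefix-sumset}), capping at~$t$ throughout. Concretely, each color class $A_i$ contributes the set $C_i := \set{0} \cup A_i$, and we want (a subset of) the capped iterated sumset $(C_1 + \dots + C_r) \cap [0, t]$. I would build this with a balanced binary combination tree: at each internal node, combine the two children's partial sets via \cref{thm:prefix-sumset} with prefix bound~$u = t$. Since every intermediate set is a subset of $\SSS(A, t) \cap [0,t]$, each node's set has size at most $|\SSS(A,t)|$, so a single call to \cref{thm:prefix-sumset} costs $\widetilde\Order(|\SSS(A,t)| + |\SSS(A,t)|^{4/3}) = \widetilde\Order(|\SSS(A,t)|^{4/3})$. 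With $\Theta(k^2)$ leaves the tree has $\Theta(k^2)$ nodes and $\Order(\log(k^2)) = \widetilde\Order(1)$ depth, giving total time $\widetilde\Order(k^2 \cdot |\SSS(A,t)|^{4/3}) = \widetilde\Order(\ceil{t/u}^2 \cdot |\SSS(A,t)|^{4/3})$ as claimed. The output set $S$ is exactly what this computation returns; by construction $S \subseteq \SSS(A, t)$ since every element is a genuine subset sum below~$t$.

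For the success probability: fix any target subset sum $x = \Sigma(A') \in \SSS(A, t)$ with $|A'| \le k$. Conditioned on the event that the coloring splits $A'$ (probability $\ge \frac12$), the element $x$ is assembled correctly by the combination tree — at each color class $A_i$ we pick the unique element of $A' \cap A_i$ if nonempty and $0$ otherwise, and all partial sums stay $\le t$ because they are prefixes of $x \le t$ — so $x \in S$. Hence $\Pr[x \in S] \ge \frac12$, which is exactly the guarantee in the statement.

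The main obstacle I anticipate is bookkeeping the cost of the prefix-restricted sumset calls when the intermediate sets are small: \cref{thm:prefix-sumset} has an additive $|A|+|B|$ term, and naively the $\Theta(k^2)$ leaves each contribute $\Order(1)$ but there could be concern about the sizes of the inputs at low levels of the tree. This is handled by noting that the sizes of the inputs at any fixed level sum to at most $\Order(k^2 + |\SSS(A,t)|)$ (the $k^2$ from the singleton leaves, the rest from partial sums bounded by the final output size), and since $n \le |\SSS(A,t)|$ and we have only $\widetilde\Order(1)$ levels, the additive terms never dominate. A minor secondary point is that \cref{thm:prefix-sumset} is a Las Vegas algorithm while the color-coding step is Monte Carlo; composing them yields a Monte Carlo algorithm overall, consistent with the statement. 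Everything else is routine.
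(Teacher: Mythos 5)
Your proposal is correct and takes essentially the same approach as the paper: partition $A$ into $\Theta(\lceil t/u\rceil^2)$ random color classes, observe that any subset sum in $\SSS(A,t)$ uses at most $\lceil t/u\rceil$ elements (since all items are $\ge u$), argue via the birthday paradox that such a subset is split with probability $\ge \tfrac12$, and combine $\{0\}\cup A_i$ across classes using prefix-restricted sumsets (\cref{thm:prefix-sumset}) capped at $t$, so that every intermediate set stays inside $\SSS(A,t)$ and has size at most $|\SSS(A,t)|$. The only cosmetic difference is that you merge the color classes in a balanced binary tree, while the paper folds them in sequentially; both give $\Order(k^2)$ prefix-sumset calls each costing $\widetilde\Order(|\SSS(A,t)|^{4/3})$, since (as you correctly note) all partial sums of a valid $x\le t$ are themselves $\le t$ and therefore survive capping, so the intermediate capping is lossless for the witnesses you care about. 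Your side remark about the additive $|A|+|B|$ term is also handled correctly via $n \le |\SSS(A,t)|$.
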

\begin{proof}
Let $k = \floor{\frac{t}{u}}$ and randomly partition $A$ into $2k^2$ subsets $A = A_1 \sqcup \dots \sqcup A_{2k^2}$. We compute the set $S := ((A_1 \cup \set{0}) + \dots + (A_{2k^2} \cup \set{0})) \cap [0, t]$ by $2k^2$ prefix-restricted sumset computations (\cref{thm:prefix-sumset}). The running time is indeed bounded by $\widetilde\Order(k^2 \cdot |\SSS(A, t)|^{4/3})$. Concerning the correctness, first note that~\makebox{$S \subseteq \mathcal S(A, t)$} as claimed. Next, observe that any fixed subset sum $x \in \SSS(A, t)$ can be expressed as the sum of at most $k$ distinct elements from $A$, say $x = a_1 + \dots + a_k$. It is easy to check that with probability at least $\frac12$ no two of these $k$ elements have been inserted into the same part $A_i$. Thus, with probability at least $\frac12$, we have $x \in S$.
\end{proof}

\begin{lemma}[First Level] \label{lem:subsetsum-first-level}
There is a Monte Carlo algorithm that, given a multiset $A \subseteq [u, 2u]$ and $t \in \Realnneg$, computes a set $S \subseteq \SSS(A, t)$ in time $\widetilde\Order(|\SSS(A, t)|^{4/3})$. Each subset sum $x \in \SSS(A, t)$ is contained in $S$ with probability at least $\frac12$.
\end{lemma}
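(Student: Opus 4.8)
The plan is to run Bringmann's \emph{two-level} color-coding~\cite{Bringmann17} on top of \cref{lem:subsetsum-second-level}, so as to shave the $\lceil t/u\rceil^2$ overhead of that lemma down to a polylogarithmic factor. Write $k=\lfloor t/u\rfloor$; since $A\subseteq[u,2u]$, every subset with sum at most $t$ uses at most $k$ items, and by the preprocessing we may assume $n\le|\SSS(A,t)|$. If $k=\Order(\log^4 n)$ we are already done, since \cref{lem:subsetsum-second-level} then runs in time $\widetilde\Order(|\SSS(A,t)|^{4/3})$ and $\Order(\log n)$ independent repetitions boost the per-sum success probability from $\tfrac12$ to high probability without affecting the running time. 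So assume $k$ is large.

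First I would build a balanced binary tree whose $2^H$ leaves form a uniformly random partition $A=A_1\sqcup\dots\sqcup A_{2^H}$, choosing the depth $H$ so that a per-leaf ``item budget'' $q$ — a value that, for any fixed solution, upper-bounds with high probability the number of its items falling into one leaf, i.e.\ $q=\Theta\!\big(k/2^H+\sqrt{(k/2^H)\log n}+\log n\big)$ — is only $\Theta(\log^4 n)$, while still $2^{H+1}uq\ge t$. At each leaf I would invoke \cref{lem:subsetsum-second-level} on $(A_i,2uq)$, boosted by $\Order(\log n)$ repetitions, which is cheap because the ratio $2uq/u$ is polylogarithmic. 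At each internal node of height $h$ I would combine its two children's sets by a \emph{prefix-restricted} sumset (\cref{thm:prefix-sumset}) truncated at $\min(t,2^{h+1}uq)$, the largest sum that $2^h$ leaves (each capped at $2uq$) can realise. The algorithm returns the set $S$ at the root; since every node truncates at a value $\le t$, we have $S\subseteq\SSS(A,t)$. For correctness, fix $X'$ with $\Sigma(X')\le t$, so $|X'|\le k$: by a Chernoff bound $|X'\cap A_i|\le q$ for all leaves $i$ with high probability, whence $\Sigma(X'\cap A_i)\le 2uq$ is captured by the boosted \cref{lem:subsetsum-second-level}; by a second Chernoff bound the partial sum at every node of height $h$ is at most $2^{h+1}uq$ with high probability, and it is always at most $\Sigma(X')\le t$, so it survives every truncation along the path to the root. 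Hence $\Sigma(X')\in S$ with high probability, in particular with probability at least $\tfrac12$.

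The crux is the running time. The tree has $H=\Order(\log n)$ levels, and at height $h$ the $2^{H-h}$ node-sets (call the $i$-th one $C^{(i)}_h$) live over a partition $A=A^{(1)}_h\sqcup\dots\sqcup A^{(2^{H-h})}_h$, with $C^{(i)}_h\subseteq\SSS\!\big(A^{(i)}_h,\min(t,2^{h+1}uq)\big)$. Tuning $H$ and the constant in $q$ so that (number of nodes) $\times$ (per-node budget) stays $\Order(t)$ at every height, the $2^{H-h}$-fold sumset $\sum_i\SSS\!\big(A^{(i)}_h,\min(t,2^{h+1}uq)\big)$ embeds into $\SSS(A,\Order(t))$, so \cref{lem:iterated-sumset-lower-bound} gives $\sum_i|C^{(i)}_h|\le|\SSS(A,\Order(t))|+2^{H-h}=\widetilde\Order(|\SSS(A,t)|)$ (using $2^{H-h}\le n\le|\SSS(A,t)|$). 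Since a prefix-restricted sumset with output $C$ costs $\widetilde\Order(|C|^{4/3})$ plus the input sizes, and since $\sum_i a_i^{4/3}\le(\sum_i a_i)^{4/3}$ for nonnegative reals $a_i$, the total work at height $h$ is $\widetilde\Order(n+|\SSS(A,t)|^{4/3})$; summing over the $\Order(\log n)$ heights and adding the cheaper leaf calls yields $\widetilde\Order(|\SSS(A,t)|^{4/3})$, as $n\le|\SSS(A,t)|$.

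The step I expect to be the main obstacle is exactly this last bookkeeping: the Chernoff fluctuations force each truncation threshold to exceed the ``ideal'' value $t/2^{H-h}$ by a multiplicative factor, so one must choose the (polylogarithmic, not merely logarithmic) budget $q$ and the depth $H$ carefully enough that the accumulated inflation keeps every support size controlling the cost of a prefix-restricted sumset tied to $|\SSS(A,t)|$ rather than to $|\SSS(A,\mathrm{poly}(n)\cdot t)|$ — this is the analogue of the scaling trick alluded to in the integer setting. Everything else — the two Chernoff estimates, the telescoping of truncation budgets along a root-to-leaf path, and the invocations of \cref{thm:prefix-sumset} and \cref{lem:subsetsum-second-level} within the tree — is routine.
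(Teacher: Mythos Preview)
Your overall architecture matches the paper's: randomly partition $A$, call \cref{lem:subsetsum-second-level} at the leaves with a polylogarithmic ratio $t'/u$, and merge along a binary tree via prefix-restricted sumsets. The correctness argument is essentially the same (the paper uses the simpler balls-in-bins bound with $k$ bins and $k$ balls instead of Chernoff, and truncates only at $t$, so no ``second Chernoff bound'' is needed). The paper also handles a different edge case first: when $u\le t/(2n)$ it computes $\SSS(A)=\SSS(A,t)$ directly via \cref{thm:subset-sum}.

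There is, however, a genuine gap in your running-time argument, precisely at the point you flag. You bound $\sum_i|C^{(i)}_h|$ by $|\SSS(A,\Order(t))|+2^{H-h}$ and then assert this equals $\widetilde\Order(|\SSS(A,t)|)$. That last step is false: even for $A\subseteq[u,2u]$ and a fixed constant $c>1$, the ratio $|\SSS(A,ct)|/|\SSS(A,t)|$ can be superpolylogarithmic (take $n$ generic items near $u$ and $k=\lfloor t/u\rfloor\ll n$; then $|\SSS(A,t)|\approx\binom{n}{\le k}$ while $|\SSS(A,ct)|\approx\binom{n}{\le ck}$). And you cannot tune $H,q$ to avoid a constant-factor overshoot: the leaf load must be at least the mean $k/2^H$, so $2^{H+1}uq\ge 2uk\approx 2t$ is forced.

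The paper sidesteps this entirely by \emph{not} trying to control all heights uniformly. It truncates only at $t$ and observes that for levels $\ell\ge\ell^*:=\log(24\log k)$ every node-set automatically lies in $[0,t/2]$ (since it is a sum of at most $k/2^{\ell}$ leaf sets, each in $[0,t']$ with $t'=12u\log k$); hence the truncation is vacuous there, and \cref{lem:iterated-sumset-lower-bound} bounds $\sum_i|S_{\ell,i}|$ by $\sum_j|S_{\ell^*,j}|+2^{\ell}\le 2^{\ell^*}\,|\SSS(A,t)|+k=\widetilde\Order(|\SSS(A,t)|)$. For the remaining levels $\ell<\ell^*$ there are only $\Order(\log k)$ nodes in total, each trivially of size at most $|\SSS(A,t)|$. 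This two-regime split is the missing idea; once you adopt it, your per-level truncation thresholds and the second Chernoff bound become unnecessary.
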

\begin{proof}
First, if $u \leq \frac{t}{2n}$ then we can compute $\SSS(A, t) = \SSS(A)$ in near-linear time by \cref{thm:subset-sum}. Otherwise, let $k = \floor{\frac{t}{u}}$ and randomly partition $A$ into $k$ subsets $A = A_1 \sqcup \dots \sqcup A_k$. Let $t' = 12 \log k \cdot u$. For each $i \gets 1, \dots, k$, we compute a set $S_i \subseteq \SSS(A_i, t')$ by \cref{lem:subsetsum-second-level}; by repeating this algorithm for $\Order(\log k)$ times, we achieve that each subset sum $x \in \SSS(A_i, t')$ is contained in $S_i$ with probability at least $1-\frac{1}{4k}$. We finally compute $S := ((S_1 \cup \set{0}) + \dots + (S_k \cup \set{0})) \cap [0, t]$ by $k-1$ prefix-restricted sumset computations (\cref{thm:prefix-sumset}) in a binary-tree fashion. Specifically, we compute in levels~\makebox{$\ell \gets \log k, \dots, 0$} the sets $S_{\ell, 0}, \dots, S_{\ell, 2^\ell-1}$ defined by $S_{\log k, i} := S_i$ at the leaf level, and otherwise $S_{\ell, i} := (S_{\ell+1, 2i} + S_{\ell+1, 2i+1}) \cap [0, t]$, and ultimately return $S = S_{0, 0}$.

For the correctness, note that clearly $S \subseteq \SSS(A, t)$. Now fix any subset sum~\makebox{$x \in \SSS(A, t)$}. Again, $x$~can be expressed as the sum of at most $k$ distinct elements from $A$, say $x = a_1 + \dots + a_k$. By a standard argument the probability that any part $A_i$ contains more than $s := 6 \log k$ elements from~\makebox{$a_1, \dots, a_k$} is at most $k \cdot k^{-s} \cdot \binom{k}{s} \leq k \cdot (e / s)^s \leq k \cdot 2^{-s} \leq k^{-5}$. Therefore, with probability at least $1 - k^{-5} \geq \frac12$, the contribution of $a_1 + \dots + a_k$ that is mapped to the part $A_i$ is indeed contained in $S_i$, and thus $x \in S$.

We finally consider the running time. In the calls to \cref{lem:subsetsum-second-level}, note that~\smash{$\frac{t'}{u} \leq \Order(\log k)$} and therefore the overhead of $\ceil{\frac{t'}{u}}^2$ becomes negligible in the running time. For this reason, and by \cref{thm:prefix-sumset}, we can bound the total running time by~\smash{$\widetilde\Order(\sum_{\ell, i} |S_{\ell, i}|^{4/3})$}. To further bound this by~\makebox{$|\SSS(A, t)|^{4/3}$}, we distinguish two cases: On the one hand, note that for all levels $\ell \geq \ell^* := \log(24 \log k)$ we have that $S_{\ell, i} \subseteq [0, t/2]$ and therefore the capping with $[0, t]$ does not play any role. By \cref{lem:iterated-sumset-lower-bound} we can thus bound
\begin{equation*}
    \sum_{i=0}^{2^\ell-1} |S_{\ell, i}| \leq \sum_{i=0}^{2^{\ell^*}-1} |S_{\ell^*, i}| + 2^\ell \leq 2^{\ell^*} \cdot |\SSS(A, t)| + k \leq \widetilde\Order(|\SSS(A, t)|),
\end{equation*}
using that $k \leq \Order(n) \leq \Order(|\mathcal S(A, t)|)$. On the other hand, all levels $\ell < \ell^*$ consist of less than $2^{\ell^*} \leq \Order(\log k)$ computations, so we can similarly bound~\smash{$\sum_i |S_{\ell, i}| \leq \widetilde\Order(|\SSS(A, t)|)$}. All in all, we can therefore bound the total running time by $\widetilde\Order(\sum_{\ell, i} |S_{\ell, i}|^{4/3}) \leq \widetilde\Order(|\SSS(A, t)|^{4/3})$.
\end{proof}

\begin{proof}[Proof of \cref{thm:subsetsum-capped}]
We face an additional difficulty: We do not know the size $s = |\SSS(A, t)|$ in advance (which is necessary as we have to boost the calls to \cref{lem:subsetsum-first-level}). We therefore add one more recursive layer to the algorithm to approximate $s$. Specifically, we arbitrarily partition $A = A_1 \sqcup A_2$ into two halves, and recursively compute $\SSS(A_1, \frac{t}{2})$ and $\SSS(A_2, \frac{t}{2})$. We thereby obtain a good approximation of $s$ as, writing $s_1 = |\SSS(A_1, \frac{t}{2})|$ and $s_2 = |\SSS(A_2, \frac{t}{2})|$,
\begin{equation*}
    s_1 + s_2 - 1 \leq s \leq n^4 \cdot s_1^4 \cdot s_2^4.
\end{equation*}
The first inequality is due to \cref{lem:sumset-lower-bound}. For the second inequality we show that any~\makebox{$x \in \SSS(A, t)$} can be expressed as the sum of (i) at most four elements from $A \cap (\frac{t}{4}, t]$, plus (ii) at most four subset sums from $\SSS(A_1, \frac{t}{2})$, plus (iii) at most four subset sums from $\SSS(A_2, \frac{t}{2})$. To see this, note that we can express $x = x_0 + x_1 + x_2$, where (i) $x_0$ is the sum of at most four elements from $A \cap (\frac{t}{4}, t]$, and (ii) $x_1 \in \SSS(A_1 \cap [0, \frac{t}{4}], t)$ and (iii) $x_2 \in \SSS(A_2 \cap [0, \frac{t}{4}])$. Next, express $x_1 = a_1 + \dots + a_\ell$ for some items~\makebox{$a_1, \dots, a_\ell \in A_1 \cap [0, \frac{t}{4}]$}. We peel of the largest prefix sum~\makebox{$x_{1,1} = a_1 + \dots + a_j \leq \frac{t}{2}$}, and remark that $x_{1, 1} \geq \frac{t}{4}$. We similarly peel of at most three other prefix sums $x_{1, 2}, x_{1, 3}, x_{1, 4} \in \SSS(A_1, \frac{t}{2})$ to express $x_1 = x_{1, 1} + x_{1, 2} + x_{1, 3} + x_{1, 4}$. A similar argument shows that $x_2$ can be expressed as the sum of at most four elements from $\SSS(A_2, \frac{t}{2})$.

We proceed as follows. Let $A^{(\ell)} = A \cap (\frac{t}{2^{\ell+1}}, \frac{t}{2^\ell}]$ for $0 \leq \ell \leq L := \ceil{\log n}$ and $A^{(> L)} = A \cap [0, \frac{t}{2^L}]$, and note that~\smash{$A^{(0)} \sqcup \dots \sqcup A^{(L)} \sqcup A^{(> L)}$} forms a partition of $A$. We compute~\smash{$\SSS(A^{(> L)}, t) = \SSS(A^{(> L)})$} in near-linear time by \cref{thm:subset-sum}. We compute all other sets $\SSS(A^{(\ell)}, t)$ by calls to \cref{lem:subsetsum-first-level}. By repeating each call $\log(n^{102} \cdot s_1^2 \cdot s_2^2)$ times, say, we can compute the correct set $\SSS(A^{(\ell)}, t)$ with high probability $1 - \frac{1}{n^{101}}$. We combine all sets using $L$ prefix-restricted sumset computations (\cref{thm:prefix-sumset}) to~\smash{$\SSS(A, t) = \SSS(A^{(0)}, t) + \dots + \SSS(A^{(L)}, t) + \SSS(A^{(> L)}, t)$}. Taking a union bound over the $n$ recursive calls, the entire algorithm succeeds with high probability $1 - \frac{1}{n^{100}}$. Moreover, similar to before the running time can be bounded by~\smash{$\widetilde\Order(|\SSS(A, t)|^{4/3})$}, using the fact that in each recursive level the total size of all sets of subset sums is upper bounded by $|\SSS(A, t)|$ (by \cref{lem:iterated-sumset-lower-bound}).
\end{proof}

\bibliographystyle{plainurl}
\bibliography{paper}

\end{document}